\newcommand{\cl}[1]{\text{cl}\left(#1\right)}
\newcommand{\I}{\mathbf{I}}
\newcommand{\supp}{\textit{supp}}
\newcommand{\ess}{\text{ess}}
\newcommand{\edgw}{\dashrightarrow}
\newcommand{\prob}{\text{Pr}}
\newcommand{\Hh}{\mathbf{H}}
\newcommand{\E}[1]{\mathbf{E}\left[#1\right]}
\theoremstyle{plain}% default
\begin{document}

\def\a{\alpha}
\def\b{\beta}
\def\c{\chi}
\def\d{\delta}
\def\D{\Delta}
\def\e{\epsilon}
\def\f{\phi}
\def\F{\Phi}
\def\g{\gamma}
\def\G{\Gamma}
\def\k{\kappa}
\def\K{\Kappa}
\def\z{\zeta}
\def\th{\theta}
\def\Th{\Theta}
\def\l{\lambda}
\def\la{\lambda}
\def\La{\Lambda}
\def\m{\mu}
\def\n{\nu}
\def\p{\pi}
\def\P{\Pi}
\def\vp{\varphi}
\def\r{\rho}
\def\s{\sigma}
\def\S{\Sigma}
\def\t{\tau}
\def\om{\omega}
\def\Om{\Omega}
\def\U{\Upsilon}
\def\smallo{{\rm o}}
\def\bigo{{\rm O}}
\def\to{\rightarrow}
\def\E{{\bf Exp}}
\def\ex{{\bf Exp}}
\def\cd{{\cal D}}
\def\rme{{\rm e}}
\def\hf{{1\over2}}
\def\cala{{\cal A}}
\def\cale{{\cal E}}
\def\Fscr{{\cal F}}
\def\cc{{\cal C}}
\def\calc{{\cal C}}
\def\calh{{\cal H}}
\def\calf{{\cal F}}
\def\bk{\backslash}

\def\out{{\rm Out}}
\def\temp{{\rm Temp}}
\def\overused{{\rm Overused}}
\def\big{{\rm Big}}
\def\notbig{{\rm Notbig}}
\def\moderate{{\rm Moderate}}
\def\swappable{{\rm Swappable}}
\def\candidate{{\rm Candidate}}
\def\bad{{\rm Bad}}
\def\crit{{\rm Crit}}
\def\col{{\rm Col}}
\def\dist{{\rm dist}}
\def\poly{{\rm poly}}

\newcommand{\Exp}{\mbox{\bf Exp}}
\newcommand{\var}{\mbox{\bf Var}}
\newcommand{\pr}{\mbox{\bf Pr}}

\newtheorem{lemma}{Lemma}[section]
\newtheorem{theorem}[lemma]{Theorem}
\newtheorem{corollary}[lemma]{Corollary}
\newtheorem{claim}[lemma]{Claim}
\newtheorem{remark}[lemma]{Remark}
\newtheorem{observation}[lemma]{Observation}
\newtheorem{proposition}[lemma]{Proposition}
\newtheorem{definition}[lemma]{Definition}
\newtheorem{property}[lemma]{Property}
\newtheorem{example}[lemma]{Example}

\newcommand{\limninf}{\lim_{n \rightarrow \infty}}
\newcommand{\proofstart}{{\bf Proof\hspace{2em}}}
\newcommand{\tset}{\mbox{$\cal T$}}
\newcommand{\proofend}{\hspace*{\fill}\mbox{$\Box$}\vspace{2ex}}
\newcommand{\bfm}[1]{\mbox{\boldmath $#1$}}
\newcommand{\reals}{\mbox{\bfm{R}}}
\newcommand{\expect}{\mbox{\bf Exp}}
\newcommand{\he}{\hat{\e}}
\newcommand{\card}[1]{\mbox{$|#1|$}}
\newcommand{\rup}[1]{\mbox{$\lceil{ #1}\rceil$}}
\newcommand{\rdn}[1]{\mbox{$\lfloor{ #1}\rfloor$}}
\newcommand{\ov}[1]{\mbox{$\overline{ #1}$}}
\newcommand{\inv}[1]{\frac{1}{ #1}}

\def\calc{{\cal C}}
\def\cald{{\cal D}}
\def\calp{{\cal P}}

\title{Frozen variables in random boolean constraint satisfaction problems}

\author{Michael Molloy and Ricardo Restrepo\\
Department of Computer Science, University of Toronto\\
10 King's College Road, Toronto, ON}

\maketitle

\begin{abstract} We determine the exact {\em freezing threshold}, $r^f$, for a family of models of random boolean constraint satisfaction problems, including NAE-SAT and hypergraph 2-colouring, when the constraint size is sufficiently large. If the constraint-density of a random CSP, $F$, in our family is greater than $r^f$ then for almost every solution of $F$, a linear number of variables are {\em frozen}, meaning that their colours cannot be changed by a sequence of alterations in which we change $o(n)$ variables at a time, always switching to another solution.  If the constraint-density is less than $r^f$, then almost every solution has $o(n)$ frozen variables.

Freezing is a key part of the clustering phenomenon that is hypothesized by non-rigorous techniques from statistical physics. The understanding of clustering has led to the development of advanced heuristics such as Survey Propogation.  It has been suggested that the freezing threshold is a precise algorithmic barrier: { that} for densities below $r^f$ the random CSPs can be solved using very simple algorithms, while for densities above $r^f$ one requires more sophisticated techniques in order to deal with frozen clusters.
\end{abstract}

\newpage

\section{Introduction}

The clustering phemonenon is arguably the most important development in the study of random constraint satisfaction problems (CSP's) over the past decade or so.  Statistical physicists have discovered that for typical models of random constraint satisfaction problems, the structure of the solution space appears to undergo remarkable changes as the constraint density increases.  

{A common geometric interpretation of the clustering analysis paints the following picture.  Most of it is not proven rigorously; in fact many details are not specified precisely.  Nevertheless, there is evidence that something close to this takes place for many natural random CSP's:}
At first,
all solutions are very similar in that we can change any one solution into any other solution via
a sequence of small local changes; i.e. by changing only $o(n)$ variables-at-a-time, always having a satisfying solution.  This remains true for almost all solutions until the {\em clustering threshold}\cite{mpz,mz}, at which point they shatter into an exponential number of clusters.  Roughly speaking: one can move from any solution to any other {\em in the same cluster} making small local changes, but moving from one cluster to another requires changing a linear number of variables in at least one step.  As we increase the density further, we reach the {\em freezing threshold}\cite{gs}.  Above that point, almost all clusters\footnote{By this we mean: all but a vanishing proportion of the clusters, when weighted by their size.} contain {\em frozen variables}; that is, variables whose values do not change for any solutions in the cluster.  At higher densities, we find other thresholds, such as the {\em condensation threshold}\cite{kmrsz}  above which the largest cluster contains a positive proportion of the solutions.  Eventually we reach the {\em satisfiability threshold}, the point at which there are no solutions.  

The methods that are used to describe these phenomena and determine the values of the thresholds are mathematically sophisticated, but are typically not rigorous.  Nevertheless, they have transformed the rigorous study of random CSP's.  

For one thing, this picture explained things that mathematicians had already discovered.  For some problems (eg. $k$-NAE-SAT\cite{amnae}, $k$-SAT\cite{ap} and $k$-COL\cite{an}) the second moment method had been used to prove the existence of solutions at densities that are close to, but not quite, the hypothesized satisfiability threshold. We now understand that this is because the way that the second moment method was applied cannot work past the condensation threshold.  As another example, it had long been observed that at a point where the density is still far below the satisfiability threshold, no algorithms are proven to  find solutions for many of the standard random CSP models.   We now understand\cite{mz} that this observed ``algorithmic barrier'' is asymptotically equal to the  clustering threshold as $k$ grows{ (\cite{ac} provides rigorous grounding for this),} and so the difficulties {appear to arise from} the onset of clusters.   It has been suggested that this algorithmic barrier may occur precisely at the freezing threshold; i.e. the formation of clusters does not  cause substantial algorithmic difficulties until {most of} the clusters have frozen variables
(see section~\ref{sab} below).

Although the { picture described above} is, for the most part, not established rigorously,  understanding it has led to substantial new theorems\cite{aco,mmw,sly,gmont, mrt, cop, cobp, coe, ffv, abbe, ksv, amm1,amm2}.  For example, \cite{cop} used our understanding of how condensation has foiled previous second moment arguments to modify those arguments
and obtain a remarkably tight bound on the satisfiability threshold for $k$-NAE-SAT.  \cite{cobp} used our understanding of clustering to design an algorithm that provably solves random $k$-SAT
up to densities of $O(2^k\ln k/k)$, which is the asymptotic value of the clustering threshold.
A particularly impressive heuristic result is the Survey Propogation algorithm\cite{mz,bmz}, which experimentally has solved random 3-SAT on $10^7$ variables at densities far closer to the satisfiability threshold than anyone
had previously been able to handle, even on fewer than 1000 variables. This algorithm was designed specifically to take advantage of the clustering picture.

Of course, another thrust has been to try to rigorously establish pieces of the clustering picture\cite{ac,acr,mrt,xor1,xor2,cz,zkqp,mmcol}.
  We have been most successful with $k$-XOR-SAT; i.e. a random system of boolean linear equations.  The satisfiability threshold was established in \cite{dm} for $k=3$ and in \cite{cuc} for $k\geq 4$.  More recently, \cite{xor1,xor2} each established a very precise description of the clustering picture.  It should be noted that the solutions of a system of linear equations are very well-understood, and that was of tremendous help in the study of the clustering of the solutions.
Other CSP's, for which we do not have nearly as much control over the solutions, have been {much} more resistant to rigorous analysis; {nevertheless, there have been substantial results - see Section \ref{srw}.}

The contribution of this paper is to rigorously determine the precise freezing threshold for a  family
of CSP models including $k$-NAE-SAT and hypergraph 2-colouring.  The freezing threshold for
$k$-COL was determined by the first author in \cite{mmcol}; prior to this work, $k$-COL and $k$-XOR-SAT are the only two common models for which the freezing threshold was determined rigorously.  

We follow the approach of \cite{mmcol},
but we differ mainly in: (i) Where\cite{mmcol} analyzed the {\em Kempe-core}, we need to analyze the {\em *-core}, which was introduced in \cite{art} to prove the existence of frozen variables in random $k$-SAT. (ii) Rather than carrying out the analysis for a single model,
we carry it out simultaneously for a family of models. 
%To do so, we introduce the Essential Model which greatly simplifies the analysis of the *-core.

Our informal description of freezing described it in terms of the clusters.  At this point,  not enough information about clustering has been established rigorously to permit us to define
freezing in those terms.  (Eg. we do not know the {exact} clustering threshold for any interesting model except $k$-XOR-SAT.) So our formal definition of a frozen variable avoids the notion of clustering.

\begin{definition} An $\ell$-path of solutions of a CSP $F$ is a sequence
$\s_0,\s_1,...,\s_t$ of solutions, where for each $0\leq i\leq t-1$, $\s_{i}$ and $\s_{i+1}$ differ on at most $\ell$ variables.
\end{definition}

\begin{definition}\label{df}
 Given a solution $\s$ of a CSP $F$, we say that a variable $x$ is {\em $\ell$-frozen} with respect to $\s$ if for every $\ell$-path
$\s=\s_0,\s_1,...,\s_t$ of solutions of $F$, we have $\s_t(x)=\s(x)$.
\end{definition}

In other words, it is not possible to change the value of $v$ by changing at most $\ell$ vertices at a time.  Roughly speaking, the solutions in the same cluster as $\s$ are the solutions that can be reached by a $o(n)$-path.  So $x$ is $o(n)$-frozen with respect to $\s$ if $x$ has the same value in every solution in the same cluster as $\s$. Thus, this definition is essentially equivalent to the informal one if the clustering picture is accurate.

We make critical use of the {\em planted model} (section~\ref{sp}); {\cite{ac} permits us to do so}.
We prove that one can use the planted model up to a certain density, and so we want the freezing threshold to be below that density. It will be if the constraint size $k$ is sufficiently large; $k\geq 30$ will do.

We analyze CSP-models satisfying certain properties: non-trivial, feasible, symmetric,  balance-dominated, and 1-essential (defined in section~\ref{scsp}). The first four are needed to permit
the planted model; the fifth allows us to focus on the *-core.
Given such a CSP model $\U$, we define  constants $r_f(\U), r_p(\U)$ and function $\la(\U,r)$ below.  Our main theorem is that $r_f(\U)$ is the freezing threshold for $\U$ and that $\la(\U,r)$ is the proportion of frozen vertices. We require the density to be below $r_p(\U)$ in order to apply the planted model.  This is not just a technicality - if the density is  significantly above $r_p(\U)$, then it will be above the condensation threshold
and the {expressions} that we provide will fail to {yield the correct constants.}

Given a CSP-model $\U$, $C(\U,n,M)$ is a random instance of $\U$ on $n$ variables and with $M$ constraints (see Section~\ref{scsp}). We say that a property holds w.h.p. (with high probability)
if it holds with probability tending to 1 as $n\rightarrow\infty$.

%\newpage

\begin{theorem}\label{mt} Consider any non-trivial, feasible, symmetric,  balance-dominated, and 1-essential CSP-model $\U$ with $r_f(\U)<r_p(\U)$.
Let $\s$ be a uniformly random solution of $C(\U,n,M=r n)$.
 \begin{enumerate}
  \item[(a)] For any $r_f(\U)<r<r_p(\U)$, there exists a constant $0<\b<1$ for which:
\begin{enumerate}
\item[(i)] w.h.p. there are $\la(\U,r)n +o(n)$ variables that are $\b n$-frozen with respect to $\s$.
\item[(ii)] w.h.p. there are $(1-\la(\U,r))n +o(n)$ variables that are not $1$-frozen with respect to $\s$.
\end{enumerate}
\item[(b)] For any $r<r_f(\U)$, 
w.h.p. at most $o(n)$ variables  are {$1$-frozen} with respect to $\s$.
\end{enumerate}
\end{theorem}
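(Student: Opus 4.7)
The plan is to reduce to the planted model and then carry out the entire analysis in terms of the $*$-core of the planted pair. Since by hypothesis $r<r_p(\U)$, the contiguity-type result underlying the planted model (from \cite{ac}, as invoked in the excerpt) lets me replace ``$\s$ is a uniformly random solution of $C(\U,n,rn)$'' by ``$(F,\s)$ is planted,'' i.e.\ $\s$ is a fixed coloring and $F$ is a uniformly random instance of $\U$ satisfied by $\s$. All conclusions of the theorem are of the form ``w.h.p., some number of variables has a given freezing behaviour,'' so whp statements transfer between the models in this density range.

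For the planted pair $(F,\s)$ I associate to $\s$ its $*$-core $\core{H}$, the sub-hyperstructure of the variable--constraint incidence obtained by iteratively stripping any variable that has a ``flippable'' direction available within the currently surviving constraints, exactly as in \cite{art}. The assumption that $\U$ is $1$-essential is precisely what makes this peeling step purely local and meaningful outside $k$-SAT. Using the standard differential-equation and configuration-style analysis of such peeling processes (the technology of \cite{mmcol,art}), I compute w.h.p.\ the size of $\core{H}$; this calculation both \emph{defines} $\la(\U,r)$ (the limiting fraction of variables that survive the stripping) and $r_f(\U)$ (the infimum of $r$ for which the stripping leaves a positive fraction). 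The outcome is: whp $|\core{H}|=\la(\U,r)n+o(n)$ for $r>r_f(\U)$, and whp $|\core{H}|=o(n)$ for $r<r_f(\U)$.

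The heart of the argument is to connect $*$-core membership to freezing. For part (a)(i) I have to show that every variable in $\core{H}$ is $\b n$-frozen for some constant $\b=\b(\U,r)>0$. This is an expansion argument: the peeling definition forces every $*$-core variable, in each of its surviving constraints, to lack an ``escape'' among the other surviving variables, so any attempted alteration supported on a small set $S\subseteq\core{H}$ must violate some constraint unless $S$ is forced to grow. A first-moment union bound over all subsets $S$ of size at most $\b n$ that meet $\core{H}$ --- evaluated in the planted model --- then rules out whp the existence of any $\b n$-path starting at $\s$ that ever alters a $*$-core variable. Establishing enough expansion to make this union bound converge, uniformly for $r\in(r_f,r_p)$, is the main technical obstacle, and is where the requirement that $k$ be sufficiently large (e.g.\ $k\ge 30$) enters.

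For the complementary direction needed for (a)(ii), and which also drives (b), I show that w.h.p.\ every variable outside $\core{H}$ is not $1$-frozen. The stripping process supplies, for each non-$*$-core variable $x$, a canonical sequence of single-variable flips witnessing that $\s(x)$ can be flipped while remaining a solution at each step: one unwinds the peeling in reverse order from $x$, and the $1$-essential property guarantees that at each step the variable being flipped has the required alternative value available. A first-moment calculation on the bounded-depth portion of these peeling histories in the planted model shows that all of these certificates are simultaneously valid $1$-paths whp. Combined with the size estimates for $\core{H}$, this yields the $(1-\la(\U,r))n+o(n)$ count in (a)(ii) and, since $|\core{H}|=o(n)$ below $r_f(\U)$, also the $o(n)$ count in (b).
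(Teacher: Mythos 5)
Your high-level plan matches the paper's: transfer to the planted model, analyze the $*$-core peeling process to get its limiting size (this is where $\la$ and $r_f$ come from), show that $*$-core variables are $\Theta(n)$-frozen, and show that non-$*$-core variables are not $1$-frozen. But the crucial step in part (a)(i) — the ``first-moment union bound over all subsets $S$ of size at most $\b n$ that meet $\core{H}$'' — does not work as stated, and the paper explicitly points this out. The obstruction is a jackpot phenomenon: a single flippable set $S$ in the $*$-core typically spawns exponentially many larger flippable sets, obtained by appending any vertex $x$ that is essential in exactly one $*$-core hyperedge and whose hyperedge contains a non-essential vertex of $S$. The expected number of intermediate-size flippable sets is therefore huge even though their existence probability is small, so the naive first moment diverges. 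The paper has to restructure the argument: decompose each flippable set $S$ into a \emph{weakly flippable} part $A_S = S\setminus H_1$, a \emph{cyclic} part $C_S$, and the ``one-path'' closure $\cl{A_S\cup C_S}$, then prove three separate lemmas (Lemmas~\ref{lem:weakly}, \ref{lem:cyclic}, \ref{lem:closure}) showing each piece stays sub-linear, using the key branching estimate (Lemma~\ref{lbranch}) that $|H_1^\pm| \le \frac{\tfrac12-\g}{k-1}|V(H^*)|$. Your proposal has none of this machinery, and invoking ``expansion'' in general terms does not substitute for it.

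A secondary issue: you claim to show that \emph{every} $*$-core variable is $\b n$-frozen and \emph{every} non-$*$-core variable fails to be $1$-frozen. Neither is true, nor what the theorem needs. The paper only establishes both statements for all but $o(n)$ variables: a small number of $*$-core variables can lie in small flippable sets, and a small number of non-$*$-core variables can have large $*$-depth or lie near short cycles, where the ``unwind the peeling'' certificate breaks down. Your wording of the non-$*$-core argument (``unwinds the peeling in reverse order from $x$'') also glosses over the cycle issue: that unwinding only produces a valid sequence of single-flip solutions when the hyperedges touching the peeling chain are acyclic, which requires a separate argument that cycles near $x$ are rare.
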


In other words, {in a typical solution}: for $r>r_f$, a linear number of variables are $\a n$-frozen, while for $r<r_f$, all but at most $o(n)$ variables are not even $1$-frozen. Furthermore, for $r>r_f$ we specify the  number of $\a n$-frozen vertices,
up to an additive $o(n)$ term.  All but at most $o(n)$ of the other vertices are not even $1$-frozen.  

We remark that for $k$-COL and $k$-XOR-SAT, we have ``$\omega(n)$-frozen'' rather
than ``1-frozen'', for some $\omega(n)\rightarrow\infty$.  
%The reason that the unfrozen variables are so much more unrestricted in the present models, arises from the fact that they are outside of the *-core.  
{Part} (b) probably remains true upon replacing ``$o(n)$'' with ``zero''. 
The $o(n)$ terms arises from a limitation of using the planted model.

For $k\geq30$ we always have $r_f(\U)<r_p(\U)$ {(see Proposition~\ref{p27})} and so our theorem applies.

For densities below the freezing threshold, our proof yields that, in fact, almost all variables can be changed via a $o(n)$-path of length 1:

\begin{theorem}\label{mt2} Consider any non-trivial, feasible, symmetric,  balance-dominated, and 1-essential CSP-model $\U$ with with $r_f(\U)<r_p(\U)$
Let $\s$ be a uniformly random solution of $C(\U,n,M=r n)$ with $r<r_f(\U)$. 

 For any $\omega(n)\rightarrow\infty$,
w.h.p. for all but at most $o(n)$ variables $x$, there is a solution $\s'$ such that
(i) $\s'(x)\neq s(x)$ and (ii) $\s'(x),\s(x)$ differ on at most $\omega(n)$ variables.
\end{theorem}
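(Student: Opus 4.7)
The plan is to refine the proof of Theorem~\ref{mt}(b), which shows that for $r<r_f(\U)$ the *-core associated with the planted solution has size $o(n)$ w.h.p. Recall that the *-core is the fixed point of an iterative peeling that removes a variable $x$ as soon as its essential alternative values are not pinned down by essential constraints involving only still-present variables; each peeled $x$ acquires a peeling depth $d(x)$ equal to the round of its removal. Theorem~\ref{mt2} demands more than the mere existence of a 1-path: for all but $o(n)$ variables $x$ one must exhibit an actual solution $\s'$ within Hamming distance $\om(n)$ of $\s$ with $\s'(x)\neq \s(x)$.

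First I would associate to every peeled $x$ a \emph{correction cascade} $R(x)\subseteq V$, built by the following depth-first procedure starting from $\{x\}$: flip the stored essential alternative at the variable of largest peeling depth currently in $R(x)$; whenever an essential constraint $c$ now fails, add to $R(x)$ a variable $y\in c$ of strictly smaller peeling depth whose stored alternative restores $c$. Such a $y$ exists by the very definition of peeling: when $x$ was removed, every essential constraint at $x$ had already been slackened by the earlier peeling of another variable of that constraint. Since peeling depths strictly decrease along the recursion, the process terminates, and modifying $\s$ on $R(x)$ to the recorded alternatives produces a satisfying assignment $\s'$ with $\s'(x)\neq\s(x)$ and $|\{v:\s(v)\neq\s'(v)\}|\leq|R(x)|$.

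Next I would bound $\Exp[\,|R(x)|\,]$ for $x$ uniformly chosen from the peeled variables under the planted model, which is applicable because $r<r_f(\U)<r_p(\U)$. The local structure around a typical $x$ is tree-like, so the construction of $R(x)$ is governed by a multi-type branching process whose offspring distribution is determined by the peeling statistics of the planted instance. The threshold $r_f(\U)$ is precisely the critical density for this branching process---supercriticality is what produces the linear-sized *-core above $r_f$---so for $r<r_f(\U)$ the process is subcritical and $\Exp[\,|R(x)|\,]=O(1)$. Contiguity between the planted and uniform models, via \cite{ac}, then transfers this bound to the uniform random solution, yielding $\sum_x|R(x)|=O(n)$ w.h.p.

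Finally, for any $\om(n)\to\infty$, Markov's inequality applied to the nonnegative integer random variables $|R(x)|$ shows that at most $o(n)$ variables have $|R(x)|>\om(n)$; every remaining variable admits the $\s'$ constructed in the first step, within Hamming distance $\om(n)$ of $\s$. The main obstacle is verifying that the local branching description of $R(x)$ matches the offspring distribution used to define $r_f(\U)$, so that the same threshold governs both quantities; a secondary technical issue is passing the cascade-size bound from the planted to the uniform model uniformly over the starting vertex. Both points should follow from the machinery already developed for Theorem~\ref{mt}(b), but the dependence of the cascade statistics on the planted solution must be tracked carefully.
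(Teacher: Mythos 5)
The paper's own proof of Theorem~\ref{mt2} (the short argument at the end of Appendix~\ref{asc}, mislabelled there as the ``Proof of Theorem~\ref{mt}'') proceeds by combining Lemma~\ref{ldepth} (bounded *-depth for all but $\e n$ vertices) with the bounded-degree analysis from Lemma~\ref{lconc}: for all but $\e n$ vertices the entire peeling chain has deterministic size $(kD)^L = O(1)$, with failure probability $e^{-g(n)}$, so Lemma~\ref{tptp} can be invoked. Your cascade idea is in the same spirit --- follow the peeling structure to construct a nearby solution --- but it differs in two ways that leave genuine gaps.

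First, you never address cycles in the local neighbourhood of $x$, and this matters. Your cascade flips a variable $y$ to repair a constraint $c$, but under $1$-essentiality the ``essential'' status of other variables in $c$ can change after the flip. When the set of hyperedges touched by the cascade forms a tree, one can argue by induction that the successive repairs never clash; but if there is a cycle, a later flip in the cascade can re-violate a constraint that was already repaired (or force the same vertex to be flipped twice, with an ambiguous meaning if $R(x)$ is a set). The paper's proof of Theorem~\ref{mt}(a.ii,b) explicitly invokes ``if no hyperedges of $W$ form a cycle'' and disposes of the cyclic case separately via the calculations in Appendix~\ref{scs}; your procedure needs the analogous acyclicity hypothesis before ``modifying $\s$ on $R(x)$ to the recorded alternatives produces a satisfying assignment'' is actually true.

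Second, your tail bound does not have the right shape for the transfer lemma. You propose showing $\Exp[|R(x)|]=O(1)$ via subcriticality and concluding $\sum_x|R(x)|=O(n)$ w.h.p., then applying Markov to bound the number of $x$ with $|R(x)|>\omega(n)$. But Lemma~\ref{tptp} requires the planted-model event to have probability at least $1-e^{-g(n)}$ for some $g(n)=o(n)$; a first-moment/Markov argument only yields $1-O(1/C)$, which cannot be pushed below, say, $e^{-\sqrt{n}}$ without additional concentration. The paper obtains the exponentially small failure probability by bounding the cascade size \emph{deterministically} for the vertices that have *-depth $\le L$ and no high-degree vertex within distance $L$, and then using the Azuma-based Lemma~\ref{lconc} to show that the set of vertices failing either condition has size $<\e n$ with probability $1-e^{-\eta n}$. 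Your plan would need a comparable exponential concentration statement for $\sum_x|R(x)|$ (or for the number of large $|R(x)|$), which the branching-process expectation alone does not provide.

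Both issues are repairable with the machinery already in the paper --- condition on the $\le \e n$ vertices near a short cycle and use Lemma~\ref{lconc} to replace the Markov step --- but as written the proposal asserts rather than establishes both the correctness of the cascade construction and the quantitative tail bound that the planted-to-uniform transfer demands.
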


As mentioned above, our theorems apply to $k$-NAE-SAT and hypergraph 2-colouring, two of the standard benchmark models.   $k$-NAE-SAT is a $k$-CNF boolean formula which is satisfied if every clause contains at least one true literal and at least one false literal.  For hypergraph 2-colouring, we are presented with a $k$-uniform hypergraph and we need
to find a boolean assignment to the vertices so that no hyperedge contains only vertices of one sign.  Thus, it is equivalent to an instance of $k$-NAE-SAT where every literal is signed positively.  See Appendix~\ref{a1} for a discussion of other models to which our theorems apply.

{Physicists tell us that there is a second freezing threshold, above which {\em every} solution
has frozen variables\cite{gs,zk} (as opposed to {\em almost every} solution as in Theorem~\ref{mt}).
\cite{art} proves that this occurs in $k$-SAT for large enough densities (albeit for a weaker notion of freezing); see Section~\ref{srw}.   We do not see how to determine the exact value of that threshold.}

We should emphasize that the clustering picture described above is very rough.  The mathematical analysis used by statistical physicists to determine the various thresholds actually studies properties of certain Gibbs distributions on infinite trees rather than solutions of random CSP's.  The clustering picture is a common geometric interpretation and  it is not exact.
Nevertheless, there is very strong evidence that something {close} to this picture should hold.

\subsection{The algorithmic barrier}~\label{sab} A great deal of the interest in random CSP's arises from the long-established observation that as the densities approach the satisfiability threshold, the problems appear to be extremely difficult to solve\cite{ckt,msl}.  Much work has gone into trying to understand what exactly causes dense problems to be so algorithmically challenging (eg. \cite{csres,abmol,cobp,mz,ms}).

 It has been suggested (eg. \cite{z,zk,kk,kk2,drz,gs}) that, for typical CSP's,
the freezing threshold forms an algorithmic barrier.  For $r<r_f$ very simple algorithms (eg. greedy algorithms with minor backtracking steps) will w.h.p. find a satisfying solution, but for $r>r_f$ one requires much more sophisticated algorithms (eg. Survey Propogation). It has been proposed that the following simple algorithm should succeed for $r<r^f$:

Suppose that Theorem~\ref{mt2} were to hold for {\em every} solution $\s$.  We build our CSP one random constraint at a time, letting $F_i$ denote the CSP with $i$ constraints.  We begin with a solution $\s_0$ for $F_0$ ($\s_0$ can be any assignment). Then we obtain $\s_{i+1}$ from $\s_i$ as follows:  If $\s_i$ does not violate the $(i+1)$st constaint added, then we keep
$\s_{i+1}=\s_i$. Otherwise, we modify $\s_i$ into another solution $\s'$ of $F_i$
in which the values of the variables in the $(i+1)$st constraint are changed so that it is satisfied; then we set $\s_{i+1}=\s'$.   If Theorem~\ref{mt2} holds for $\s_i$,  then we can change each
of the $k$ variables in that constraint by changing only $o(n)$ other variables.  Expansion properties of a random CSP imply that these small changes will (usually) not interfere with each
other and so we can change each of the $k$ variables to whatever we want.  Thus we will eventually end up with a solution $\s_M$ to our random CSP $F_M$.

However, Theorem~\ref{mt2} does not hold for {\em every} solution, only most of them.  This is not just a limit of our proof techniques - it is believed that it does not hold for an exponentially small, but positive, proportion of the solutions.  So proving that this algorithm works would require showing that we never encounter one of those solutions.

To see, intuitively, why the onset of freezing may create algorithmic difficulties, consider {\em near-solutions} - assignments which violate only a small number of constraints, say $o(n)$ of them.  The near-solutions will also form clusters (because of {\em high energy barriers}; see \cite{ac}).  Furthermore, almost all clusters of near-solutions will not contain any solutions. This is because, above the freezing threshold, almost all solution clusters have a linear number of frozen variables and so after adding only $o(n)$ constraints, we will pick a constraint that violates the frozen variables. This will violate all solutions in that cluster, thus forming a near-solution cluster that contains no actual solutions. Of course, this description is non-rigorous but it provides a good intuition.

Now consider a greedy algorithm with  backtracking.  As it sets its variables, it will approach a near-solution $\r$.  At that point, it cannot move to a near-solution in a different cluster than $\r$, without employing a backtracking step that changes a linear number of variables.  So the algorithm will need to be sophisticated enough to approach one of the rare near-solution clusters that contains solutions.  

{As described above,} there is a second freezing threshold, above which {\em every} cluster has frozen variables. \cite{z} suggests that this is another algorithmic barrier above which even the sophisticated
algorithms fail to find solutions.  One indication is that, empirically, every solution $\s$ found by Survey Propogation is such that no variables are frozen with respect to $\s$.  So somehow, the algorithm is drawn to those rare unfrozen clusters, and hence may fail when there are no such clusters.

\subsection{Related work}\label{srw} The clustering picture for $k$-NAE-SAT and hypergraph 2-colouring was analyzed non-rigorously in \cite{drz}.  There are hundreds of other papers from the statistical physics community analyzing clustering and related matters.  Some are listed above; rather than listing more, we refer the reader to the book\cite{mmbook}. 

Achlioptas and Ricci-Tersenghi\cite{art} were the first to rigorously prove that freezing occurs  in a random CSP.  They studied random $k$-SAT and showed that for $k\geq 8$, for a wide range of edge-densities
below the satisfiability threshold and for {\em every} satisfying assignment $\s$, the vast majority of variables are 1-frozen w.r.t $\s$.  
They did so  by stripping down to the *-core, which inspired us to do the same here.
One difference between their approach and ours is that the variables of the *-core are 1-frozen by definition, whereas much of the work in this paper is devoted to proving that, for our models, they are in fact $\Theta(n)$-frozen.  We expect that our techniques should be able to prove
that the 1-frozen variables established in \cite{art} are, indeed, $\Theta(n)$-frozen.

\cite{ac}  proves the asymptotic (in $k$) {density} for the appearance of what they call {\em rigid} variables in  {$k$-COL, $k$-NAE-SAT and hypergraph 2-colouring (and proves that this is an upper bound for $k$-SAT)}.   The definition of rigid is somewhat weaker than frozen, but a simple modification extends their proof to show the same  for frozen vertices. So \cite{ac} provided the asymptotic, in $k$, location of the freezing threhold for {those models}.
\cite{mmcol} provided the exact location of the threshold for $k$-COL, when $k$ is sufficiently large.

%As mentioned above, \cite{art} studies 1-clusters for random $k$-SAT.  \cite{mrt} studies analogous
%clusters for other CSP's.  Such clusters are connected, but are not know to be $\Theta(n)$-separated.
\cite{acr,ac,mrt} establish the existence of what they call {\em cluster-regions} for {$k$-SAT, $k$-COL, $k$-NAE-SAT and hypergraph 2-colouring}. \cite{ac} proves that by the time the density exceeds $(1+o_k(1))$ times the hypothesized
clustering threshold the solution space w.h.p.\ shatters into an exponential number of $\Theta(n)$-separated
cluster-regions, each containing an exponential number of solutions. While these cluster-regions are not shown to be well-connected,
the well-connected property  does not seem to be crucial to the difficulties
that clusters pose for algorithms.  So {this} was a very big step towards explaining why an algorithmic barrier seems to arise asymptotically {(in $k$)} close to the clustering threshold.

\cite{amnae,am2col} provided the first asymptotically tight lower bounds on the satisfiability threshold
of $k$-NAE-SAT and hypergraph 2-colouring, achieving a bound that is roughly equal to the condensation threshold.
\cite{cz} provides an even stronger bound for hypergraph 2-colouring, extending above the condensation threshold.  \cite{cop} provides a remarkably strong bound for $k$-NAE-SAT - the difference between their upper and lower bounds decreases exponentially with $k$.

\section{CSP models}\label{scsp}
A  \emph{boolean constraint} of arity $k$ consists of $k$ \emph{ordered} variables $(x_1,\ldots,x_k)$ together with a boolean function $\varphi:\{-1,1\}^k\to \{0,1\}$. This function constrains the set of variables to take values $\sigma=(\sigma_1,\ldots,\sigma_k)\in\{-1,1\}^k$ such that $\varphi(\sigma_1,\ldots,\sigma_k)=1$.  We say that the constraint is {\em satisfied} by a boolean assignment $\s$ if it evaluates to 1 on $\s$.

A \emph{constraint satisfaction problem (CSP)} is a set  of  constraints, where the $a^{\text{th}}$ constraint is formed by a boolean function $\varphi_a$ over the variables $(x_{i_{1,a}},\ldots,x_{i_{k,a}})$, with $i_{j,a} \in [n]$. A CSP, $H$, defines a boolean function $F^{(H)}:\{-1,1\}^n \to \{0,1\}$ given by 
\[
F^{(H)}(\sigma_1,\ldots,\sigma_n):=\prod_{a} \varphi_a(\sigma_{i_{1,a}},\ldots,\sigma_{i_{k,a}}) .
\]
Given $\sigma\in\{-1,1\}^n$, we say that $\sigma$ is a \emph{satisfying assignment}, or {\em solution}, of the CSP $H$ if $\s$ satisfies every constraint of $H$; i.e. if $F^{(H)}(\sigma)=1$.  

A  {\em CSP model} is a set $\Phi$ of boolean functions, together with a probability distribution $p:\Phi\to [0,1]$ defined on it (we assume implicitly that the support of $p$ is $\Phi$).  
Our random CSPs are:
\begin{definition}
Given a CSP model $\U=(\Phi,p)$, a \emph{random CSP}, $C(\U,n,M)$, is a CSP over the variables $\{x_1,\ldots,x_n\}$ consisting of $M$ constraints $\{\varphi_{{a}}(x_{i_{1,a}},\ldots,x_{i_{k,a}}):a=1,\ldots,M\}$ where the boolean constraints $\{\varphi_{{a}}:a=1,\ldots,M\}$ are drawn independently from $\Phi$ according to the distribution $p$, and the $k$-tuples $\{(x_{i_{1,a}},...,x_{i_{k,a}}): a=1,\ldots,m\}$ are drawn uniformly and independently from the set of $k$-tuples of $\{x_1,\ldots,x_n\}$. 
\end{definition}

%{\bf THINK ABOUT WHETHER TO INCLUDE THIS!!!!}
%Given two assignments $\sigma,\sigma' \in \{-1,1\}^n$, we denote by $D_H(\sigma,\sigma'):=\#\{i:\sigma(i)\neq \sigma'(i)\}$ their \emph{Hamming distance}. If $\mathcal{S}\subseteq\{-1,1\}^n$ (in particular, say, the set of solutions of a given CSP), we say that $\sigma$ and $\sigma'$ are \emph{$\ell$-connected} in $\mathcal{S}$ if there exists a sequence $\sigma_0,\sigma_1\ldots,\sigma_T \in \mathcal{S}$ with $\sigma_0=\sigma$ and $\sigma_T=\sigma'$ and such that $D_H(\sigma_{t-1},\sigma_t)\leq \ell$ for $t=1,\ldots T$. We define the $\ell$-cluster of $\sigma$, $\mathcal{C}_{\ell}(\sigma)$, as the set of assignments $\sigma' \in \mathcal{S}$ that are $\ell$-connected with $\sigma$. The \emph{$\ell$-frozen variables} of an assignment $\sigma$ are the variables $x_i$ such that $\sigma'(i)=\sigma(i)$ for all $\sigma'\in \mathcal{C}_{\ell}(\sigma)$. In other words, the variables that cannot be changed by a sequence of alterations whereby we change the assignments of $\ell$ vertices at a time, always obtaining another assignment within $\mathcal{S}$

We consider random CSP-models $\U=(\Phi,p)$ with the following properties.

\begin{definition}\label{def:boolprop}\mbox{ }\\
\textbf{Non-trivial}: There is at least one $\varphi\in\Phi$ that is not satisfied by
$x_1=...=x_k=1$ and  at least one $\varphi\in\Phi$ that is not satisfied by
$x_1=...=x_k=-1$.\\
\\
\textbf{Feasible}: For any $\varphi\in\Phi$, and every assignment to any $k-1$ of the variables, at least one of the two possible assignments to the remaining variable will result in $\vp$ being satisfied.\\
\\
\textbf{Symmetric}: For every $\varphi\in\Phi$, and for every assignment $x$, we have $\varphi(x)=\varphi(-x)$, where  $-x$ is the assignment obtained from $x$ by reversing the assignment to each variable.\\
\\ 
\textbf{Balance-dominated} Consider a random assignment $\s$ where each variable
is independently set to be 1 with probability $q$ and -1 with probability $1-q$, and let $\vp$ be a random constraint from $\Phi$ with distribution $p$. The probability that $\s$ satisfies $\vp$ is maximized at $q=\hf$.
\end{definition}

Those four properties will allow us to apply the planted model.
`Non-trivial' is a standard property to require.  `Feasible' is also quite natural, although
some {common} models do not satisfy it.  The other two properties help us to bound the second moment
of the number of solutions, which in turn enables us to use the planted model.

Our final property allows us to analyze frozen variables using the *-core.

\begin{definition}{\bf 1-essential}:
Given a boolean constraint $\varphi$ and an assignment $\sigma$  that satisfies
$\varphi$, we say that the variable $x$ is {\em  essential} for $(\varphi,\sigma)$ if changing the value of $x$ results in $\varphi$ being unsatisfied.
%We say that $\vp$ is {\em 1-essential} for $\s$ if 
%exactly one variable is essential for $(\vp,\s)$.  
We say that a set $\Phi$ of constraints is {\em 1-essential} if for every  $\varphi\in\Phi$, and every $\s$ satisfying $\varphi$,
at most one variable is essential for $(\varphi,\s)$.  A CSP-model $(\Phi,p)$ is  {\em 1-essential} if $\Phi$ is   1-essential. {A CSP is 1-essential if all of its constraints are 1-essential.}
\end{definition}

For example: in hypergraph 2-colouring, $x$ is essential iff its value is different from that of every other variable in $\phi$; in $k$-XOR-SAT, every variable is essential. It is easily confirmed that  {for $k\geq 3$}: $k$-SAT, hypergraph 2-colouring and $k$-NAE-SAT are  1-essential, but $k$-XOR-SAT is not.

%Montanari, Restrepo and Tetali\cite{mrt} define a class of CSP models which we denote as $\calf$.  We present the definition of $\calf$ in Appendix ???.  We remark that $\calf$ contains both hypergraph 2-COL and NAE-SAT.  Achlioptas and Coja-Oghlan\cite{ac} proved that one can analyze certain models, including hypergraph 2-COL and NAE-SAT using the {\em planted model} (see Section~\ref{splant} for more details). Montanari, Restrepo and Tetali\cite{mrt} extended that result to a class of CSP models which we denote as $\calf$.  We present the definition of $\calf$ in Appendix ???.  We prove here that

%\begin{lemma}\label{l1ess}
%For every $U=(\Phi,p)\in\calf$, every constraint $\varphi\in\Phi$, and any $\s$ satisfying $\varphi$:
%at most one variable is essential for $(\varphi,\s)$.
%\end{lemma}

%\proofstart {\bf ADD THIS}
%\proofend

%{\bf Are there any other properties that we need to prove about $\calf$??  Eg. is $\calf$ always either polarized or symmetric??}

\section{The planted model}\label{sp}
Consider any CSP-model $\U=(\Phi,p)$.  Theorem~\ref{mt} concerns a uniformly random satisfying assignment of
$C(\U,n,M)$; i.e. a pair $(F,\s)$ drawn from:
\begin{definition}
The {\em uniform model} $U(\U,n,M)$ is a random pair $(F,\s)$ where $F$ is taken from the $C(\U,n,M)$ model and $\s$ is a uniformly random satisfying solution of $F$.
\end{definition}

The uniform model is very difficult to analyze directly.  So instead we turn to the much more amenable planted model:

\begin{definition}\label{dplant} The  {\em planted model} $P(\U,n,M)$ is a random pair $(F,\s)$ chosen as follows:  Take a uniformly random assignment $\s\in\{-1,+1\}^n$. Next select a random $F$ drawn from $C(\U,n,M)$ conditional on $\s$ satisfying $F$. 
\end{definition}

{\bf Remark:} Note that we can select $F$ by choosing $M$ independent constraints.  Each time, we choose a uniformly random $k$-tuple of $k$ variables, then choose for those variables a constraint $\vp\in\Phi$ with probability distribution $p$. If $\s$ does not satisfy the constraint then reject and choose a new one. Equivalently, we can choose the $k$-tuples non-uniformly where the probability that
a particular $k$-tuple is chosen is proportional to the probability that, upon choosing $\vp$ for that
set, the constraint will be satisfied by $\s$. Then we choose $\vp\in\Phi$ with probability $p$ conditional on $\s$ satisfying $\vp$.

It is not hard to see that the uniform and planted models are not equivalent.  In the planted model, a CSP  is selected with probability roughly proportional to the number of satisfying assignments. Nevertheless,  Achlioptas and Coja-Oghlan\cite{ac}  proved that, under certain conditions, one can transfer results about the planted model to the uniform model when $\U$ is $k$-COL, $k$-NAE-SAT or hypergraph 2-colouring (also $k$-SAT, but under stronger conditions).  Montanari, Restrepo and Tetali\cite{mrt} extended this to all $\U$ in a class of CSP-models, including all models that are non-trivial, feasible, symmetric, and balance-dominated.

For each non-trivial, feasible, symmetric and balance-dominated CSP-model $\U$ we define (in Appendix~\ref{a1})
a constant $r_p(\U)$, which is the highest density for which we can use the planted model.  
The following key tool essentially follows from Theorem B.3 of \cite{mrt}, {except that they do not explicitly mention $r_p(\U)$, instead giving an implicit lower bound under appropriate conditions.} It was first proven in \cite{ac} for NAE-SAT, hypergraph 2-COL and a few other models.  
 
\begin{lemma} \label{tptp}  Consider any non-trivial, feasible, symmetric, and balance-dominated CSP-model $\U$.  
For  every $r<r_p(\U)$, there is
a function $g(n)=o(n)$ such that: Let $\cale$ be any property of pairs $(F,\s)$ where $\s$ is a satisfying solution of $F$. 
If
\[\pr(P(\U,n,M=rn) \mbox{ has } \cale)>1-e^{-g(n)},\]
then
\[\pr(U(\U,n,M=rn) \mbox{ has } \cale)>1-o(1).\]
\end{lemma}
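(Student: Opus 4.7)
The plan follows the planted-to-uniform transfer of Achlioptas--Coja-Oghlan~\cite{ac} and its generalization in Montanari--Restrepo--Tetali~\cite{mrt}: write down the Radon--Nikodym derivative of the uniform model with respect to the planted model in terms of $Z(F):=\#\{\s : \s\text{ satisfies }F\}$, then show $Z(F)$ is concentrated enough that this derivative is close to $1$, then transfer $\cale$.

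For the change of measure, writing $Z_{\s}(F):=\mathbf{1}[\s\text{ sats }F]$ and $\mu$ for the law of $C(\U,n,M)$, Definition~\ref{dplant} gives
\[
\pr_{\text{plant}}(F,\s) \;=\; \frac{\mu(F)\,Z_{\s}(F)}{2^{n}\,\pr_{C}[Z_{\s}=1]},
\qquad
\pr_{\text{uniform}}(F,\s) \;=\; \frac{\mu(F)\,Z_{\s}(F)}{Z(F)}.
\]
Since $\varphi\in\Phi$ is symmetric under $x\mapsto-x$, $\pr_{C}[Z_{\s}=1]$ depends only on the balance of $\s$; balance-dominance together with the fact that a uniformly random $\s$ is balanced within $O(n^{-1/2})$ w.h.p.\ then yields $2^{n}\pr_{C}[Z_{\s}=1] = \mathrm{poly}(n)\cdot\mathbf{E}[Z]$ for typical $\s$. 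Summing over $\s$ (and truncating to typical $\s$, which accounts for all but a negligible part of either measure) produces
\[
\pr_{\text{plant}}(\cale^{c}) \;\geq\; e^{-o(n)}\,\mathbf{E}_{\mu}\!\left[\frac{Z(F)}{\mathbf{E}[Z]}\,\pr_{\text{uniform}}(\cale^{c}\mid F)\right].
\]

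For concentration, the symmetry and balance-dominated hypotheses are exactly what pin the overlap-maximizer of the pair large-deviation rate at overlap $\tfrac12$, so that the second moment argument of \cite[Theorem B.3]{mrt} gives $\mathbf{E}[Z^{2}]\leq C\,\mathbf{E}[Z]^{2}$ whenever $r<r_p(\U)$; indeed $r_p(\U)$ is defined (in Appendix~\ref{a1}) as the supremum of $r$ for which this holds. Paley--Zygmund then gives $\pr_{\mu}[Z\geq \mathbf{E}[Z]/(2C)]\geq 1/(2C)$, which is upgraded to near-certainty by Azuma--Hoeffding on the constraint-exposure martingale (resampling one constraint changes $\ln Z$ by at most $\ln 2$): $\ln Z$ concentrates within $O(\sqrt{n\log n})$ of its median. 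The combination pins the median of $\ln Z$ within $O(\sqrt n)$ of $\ln\mathbf{E}[Z]$, producing an $h(n)=o(n)$ with $\pr_{\mu}[\,Z(F)\geq\mathbf{E}[Z]\,e^{-h(n)}\,]\geq 1-e^{-h(n)}$.

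To finish, pick $g(n)=o(n)$ with $g(n)\geq 2h(n)+1$, and let $\mathcal{G}:=\{Z(F)\geq\mathbf{E}[Z]\,e^{-h(n)}\}$. On $\mathcal{G}$, step one gives $Z(F)/\mathbf{E}[Z]\geq e^{-h(n)}$, so
\[
e^{-g(n)} \;\geq\; \pr_{\text{plant}}(\cale^{c}) \;\geq\; e^{-h(n)-o(n)}\,\mathbf{E}_{\mu}\!\left[\mathbf{1}_{\mathcal{G}}\,\pr_{\text{uniform}}(\cale^{c}\mid F)\right],
\]
whence $\mathbf{E}_{\mu}[\mathbf{1}_{\mathcal{G}}\pr_{\text{uniform}}(\cale^{c}\mid F)]=o(1)$; adding $\pr_{\mu}(\mathcal{G}^{c})\leq e^{-h(n)}=o(1)$ gives $\pr_{\text{uniform}}(\cale^{c})=o(1)$, as required. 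The main obstacle is the second-moment estimate: it is the only place the hypotheses of the lemma are essentially used, and it is what determines $r_p(\U)$; the change-of-measure identity and the Azuma concentration step are routine once it is in hand.
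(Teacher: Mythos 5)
Your change-of-measure identity and the use of Paley--Zygmund on a second-moment bound are in the right spirit and match the strategy behind [ac]/[mrt] that the paper invokes, but there are two problems, one of which is fatal.

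The fatal one is the Azuma--Hoeffding step. You claim that resampling one constraint changes $\ln Z$ by at most $\ln 2$. This is false, and it is false precisely because adding a constraint can wipe out \emph{all} remaining solutions. Feasibility only guarantees that each individual constraint has $|S_{\varphi}|\geq 2^{k-1}$; it says nothing about how a new constraint interacts with the rest of $F$. Concretely, for NAE-SAT a constraint is violated only by the two monochromatic assignments, yet if the other constraints of $F$ already force all $k$ of the chosen variables to the same value, adding that one constraint sends $Z$ from $Z(F'')>0$ to $0$ and $\ln Z$ to $-\infty$. So the constraint-exposure sequence for $\ln Z$ is not a bounded-differences martingale and Azuma does not apply. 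The paper sidesteps this entirely: after Paley--Zygmund gives $\Pr[Z>e^{-\epsilon n}\,\mathbf{E}[Z]]\geq C_0/2$, it appeals to the sharp-threshold result of Appendix~C of~\cite{mrt} — the monotone event $\{Z>B^n\}$ has a sharp threshold in the clause density — which upgrades constant probability to probability $1-o(1)$ without any Lipschitz hypothesis on $\ln Z$. That sharp-threshold argument (or some substitute for it) is exactly the missing ingredient in your write-up.

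A secondary issue: your second moment is applied to $Z$ itself, whereas the paper applies it to the number $Z_b$ of \emph{balanced} solutions (and conditions on the event $\mathcal{F}$ that constraint-type counts are concentrated), and then uses balance-dominance in the form $\mathbf{E}[Z]\leq n\,\mathbf{E}[Z_b]$ to transfer the conclusion back to $Z$. For symmetric models such as hypergraph $2$-colouring, $\mathbf{E}[Z^2]/\mathbf{E}[Z]^2$ is typically \emph{not} $O(1)$ because unbalanced pairs dominate; the restriction to balanced solutions is what makes the overlap rate function negative on $(0,1)$. So the inequality $\mathbf{E}[Z^2]\leq C\,\mathbf{E}[Z]^2$ that you assert, and from which you draw Paley--Zygmund, needs the same restriction. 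This one is repairable along the paper's lines, but it is not a cosmetic omission.
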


In Appendix~\ref{a1}, we prove that if $\U$ is also 1-essential, then for $k\geq30$, we have $r_p(\U)>r_f(\U)$ and so Theorem~\ref{mt} is non-trivial. In fact, $r_p(\U)=\Theta(\frac{k}{\ln k})r_f(\U)$.  
The bound $k\geq 30$ can be lowered, and for some specific models $\U$ it can
be lowered significantly.  For example, for $k$-NAE-SAT and hypergraph 2-colouring,
one can probably prove that $k\geq 6$ will do.

\section{The *-core}\label{sstar}  The *-core was introduced in \cite{art} to study frozen variables in random $k$-SAT.

Fix a satisfying assignment $\s$, and consider a variable $x$.  Suppose that there are no constraints $\vp$ such that $x$ is essential for $(\vp,\s)$.  Then, by the definition of essential, we can change $x$ and still have a satisfying assignment.  So $x$ is not frozen.  This inspires the following:

\begin{definition}  Consider a CSP $F$ with a satisfying assignment $\s$. The {\em *-core} of $(F,\s)$ is the sub-CSP formed as follows:\\
Iteratively remove every variable $x$ such that for every constraint {$\vp$}:  $x$ is not essential for $(\vp,\s)$. When we remove a variable, we also remove all constraints containing that variable.  
\end{definition}
Note that the order in which variables are deleted will not affect the outcome of the iterative procedure. So the *-core is well-defined, albeit possibly empty. 

As described above, it is clear that the first variable removed is not frozen.  Expansion properties of a random CSP - in particular the fact that it is locally tree-like - imply that almost every variable removed is not frozen. Furthermore, we will prove that if the model is 1-essential then almost all variables that remain in the *-core are frozen.  Having proven those two key results, Theorem~\ref{mt} follows from an analysis of the *-core process.

Now suppose that our CSP-model is 1-essential.
A key observation is that the *-core depends only on the constraints that have essential variables.  I.e., if we first remove all constraints with no essential variables from the CSP and then apply the *-core process, the set of {variables} in the resultant *-core will not change.  

\begin{definition}
Given a 1-essential CSP, $F$, and a satisfying solution $\s$, we define {\em the hypergraph
$\G(F,\s)$} as follows:  The vertices are  the variables of $F$ and the  variables of each  constraint of $F$ form a hyperedge, if that constraint has an essential variable.  That essential variable
is called the {\em essential vertex} of the  hyperedge.
\end{definition}

Note that we can find the *-core of $(F,\s)$ by repeatedly deleting from $\G(F,\s)$
 vertices that are not essential in any hyperedges, {along with all hyperedges containing the deleted vertices}. The resulting hypergraph is called the {\em *-core of $\G(F,\s)$}.

The precise model for the  random hypergraph $\G(F,\s)$ varies with $\U$ (see appendix~\ref{asc}).  However,
 the size of the *-core as a function of the number of hyperedges is the same
for all such models.

We define:
\[\a_k:=\inf_{x> 0}\frac{x}{(1-e^{-x})^{k-1}}.\]

Also, for $\alpha>\alpha_k$, let $x_k(\a)$ be the maximum value of $x\geq 0$ such that 
$\frac{x}{(1-e^{-x})^{k-1}}=\alpha$ and set
\[\r_k(\a)=1-e^{-x_k(\a)}.
\]
In Appendix~\ref{a3}, we prove

\begin{lemma}\label{lHp} Consider any 1-essential CSP-model $\U=(\Phi,p)$ of arity $k$,
and a random CSP, $F$, drawn from $P(\U,n,M=rn)$.  Suppose $\G(F,\s)$ has $\a n +o(n)$ hyperedges. For any $g(n)=o(n)$,  with probability at least $1-e^{-g(n)}$:
\begin{enumerate}
\item[(a)] If $\a> \a_k$ then the *-core of  $\G(F,\s)$ has $\r_k(\a) n+o(n)$ vertices.
\item[(b)] If $\a< \a_k$ then the *-core of  $\G(F,\s)$ has $o(n)$ vertices.
\end{enumerate}
\end{lemma}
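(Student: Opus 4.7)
The plan is to analyze the *-core via the parallel peeling process on $\G(F,\s)$: in round $t$, simultaneously delete every vertex that is not currently the essential vertex of any surviving hyperedge, and then delete every hyperedge that now contains a deleted vertex. Because the *-core is invariant under the order of deletions, this parallel version has the same fixed point as the iterative definition. Appendix~\ref{asc} guarantees that $|V(\text{*-core})|$ depends only on the number of hyperedges, so we may pass to a canonical random model in which the $\alpha n+o(n)$ hyperedges are placed by choosing $k$-tuples independently and marking the essential vertex uniformly among the $k$ slots; all arguments below take place in this canonical model.

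First I would derive the fixed-point equation heuristically. In the canonical model, the number of hyperedges in which a given vertex is the essential one is asymptotically $\mathrm{Poisson}(\alpha)$, and, conditional on this count, the $k-1$ non-essential slots of each such hyperedge are asymptotically independent uniform vertices of $[n]$. Writing $\pi$ for the asymptotic probability that a vertex survives, $\pi$ must satisfy
\[
\pi \;=\; 1-e^{-\alpha\pi^{k-1}},
\]
since a vertex survives iff at least one hyperedge in which it is essential has all $k-1$ of its other vertices in the core. Substituting $x=\alpha\pi^{k-1}$ gives $x=\alpha(1-e^{-x})^{k-1}$, so $\pi = 1-e^{-x}$; the largest positive solution exists precisely when $\alpha>\alpha_k=\inf_{x>0}x/(1-e^{-x})^{k-1}$ and, when it exists, equals $\rho_k(\alpha)$.

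To make this rigorous, I would apply a Wormald-style differential equation analysis of the parallel peeling, in the spirit of the Pittel--Spencer--Wormald $k$-core argument and its many later adaptations. The state variable is the vector of vertex--hyperedge incidence counts of each essentialness type; at each round, the bipartite configuration can be re-randomized conditional on these counts, and the one-step increments have bounded differences with Lipschitz expected drift. Standard concentration (Azuma--Hoeffding on the round-by-round exposure martingales) shows that after any bounded number $t$ of rounds, the normalized surviving fraction $V_t/n$ sits within $o(1)$ of $T^{\,t}(1)$, where $T(\pi)=1-e^{-\alpha\pi^{k-1}}$. Iterating $T$ from $\pi=1$ converges monotonically to the largest fixed point of $T$, which equals $\rho_k(\alpha)$ when $\alpha>\alpha_k$ and equals $0$ when $\alpha<\alpha_k$, yielding the two cases of the lemma at the level of any fixed round count.

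The main obstacle is twofold. First, the bounded-round analysis gives a trajectory estimate, but the actual *-core is the limit of unboundedly many peeling rounds; bridging this requires an endgame argument. For $\alpha<\alpha_k$ the map $T$ is a strict contraction on $[0,1]$, so the peeling is subcritical and a local expansion argument on the surviving bipartite incidence graph shows that after $O(\log n)$ rounds only $o(n)$ vertices remain. For $\alpha>\alpha_k$ one works above the stable fixed point: once $V_t/n$ lies in a small neighbourhood of $\rho_k(\alpha)$, a standard isoperimetric estimate on the random bipartite incidence structure shows that any further stripped set must shrink by a constant factor in each round, so only $o(n)$ additional vertices leave. Second, the probability bound $1-e^{-g(n)}$ for arbitrary $g(n)=o(n)$ is stronger than mere w.h.p.; this is delivered by the same Azuma--Hoeffding martingales, whose tails are of order $\exp(-\Omega(n))$, combined with a union bound over at most $n$ rounds, which comfortably absorbs any $g(n)=o(n)$.
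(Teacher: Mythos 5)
Your overall plan matches the paper's: run a bounded number of parallel peeling rounds to reach the neighbourhood of the fixed point $\rho$, then mop up with an endgame argument. The heuristic fixed-point derivation and the use of Azuma for the bounded-round concentration (the paper's Lemma~\ref{lconc}) are on target. However, the two load-bearing steps are left too vague to count as a proof.

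First, your endgame for $\a>\a_k$ needs the precise inequality that makes the fixed point \emph{strictly} stable, and you neither state nor prove it. The paper isolates this as Lemma~\ref{lbr}: there exists $\g>0$ with $\l e^{-\l}<(1-\g)\r/(k-1)$, which is exactly the statement that a removed vertex exposes on average strictly fewer than one new removable vertex, i.e. that the residual peeling is a subcritical branching process. With that in hand, the paper switches after $I$ parallel rounds to one-at-a-time removal, exposes hyperedge types and uses the Essential Model to conclude that with probability $1-e^{-\delta n}$ only $O(\zeta n)$ further vertices are stripped. Your phrase ``a standard isoperimetric estimate shows that any further stripped set must shrink by a constant factor in each round'' is not the mechanism actually at work and is not a standard estimate; without something playing the role of Lemma~\ref{lbr}, the endgame does not close. (Your $\a<\a_k$ case is also overcomplicated and slightly wrong -- $T$ need not be a contraction on all of $[0,1]$. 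The paper simply picks a constant $I$ with $\r_I<\e$ and notes the *-core is contained in $H(I)$, then lets $\e\to 0$.)

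Second, two reductions you wave at are not free. Passing to a ``canonical model'' with uniformly marked essential vertices and forgetting the $\La^+/\La^-$ bipartition needs justification, because the actual placement of non-essential vertices is correlated with the partition through the type distribution $w(\t)$; the paper instead works in the Essential Model throughout and uses the symmetry Lemma~\ref{lxx}/\ref{lbranch}(a) to show $\r^+_i=\r^-_i+o(1)$, which is what makes the recursion type-independent. And the final ``union bound over at most $n$ rounds'' to get $1-e^{-g(n)}$ is problematic: the Azuma rate in Lemma~\ref{lconc} degrades with the round index $i$, so one cannot naively union bound over a growing number of rounds. The paper's two-phase structure (finitely many parallel rounds, then the branching endgame with a single $e^{-\delta n}$ bound) is exactly how this issue is avoided.
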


This allows us to analyze our family of models simultaneously by working directly with the *-core of $\G(F,\s)$.  We prove that almost all vertices of the *-core are $\Theta(n)$-frozen variables in $F$
and almost all vertices outside of the *-core are not even 1-frozen in $F$.

 In Appendix~\ref{a2}, we define for any 1-essential  CSP-model
$\U=(\Phi,p)$, a constant $\xi(\U)>0$ and prove:
\begin{lemma}\label{lxi}
For any $g(n)=o(n)$ and $r>0$, with probability at least $1-e^{-g(n)}$, the number of constraints
in $P(\U,n,M=rn)$ that have an essential variable is $\xi(\U) r n +o(n)$.
\end{lemma}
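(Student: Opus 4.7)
The plan is a two-step concentration argument: first over the choice of the $M$ constraints conditional on the planted assignment $\sigma$, and then over the randomness of $\sigma$ itself.

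By the Remark following Definition~\ref{dplant}, conditional on $\sigma$ the $M$ constraints of $F\sim P(\U,n,M=rn)$ are drawn i.i.d.\ from an explicit distribution on $(T,\varphi)\in[n]^k\times\Phi$. Let $p_\sigma$ be the probability that one such constraint has an essential variable for $\sigma$, and let $X$ be the total number of constraints with an essential variable. Then $X\mid\sigma\sim\mathrm{Bin}(M,p_\sigma)$, and a standard Chernoff--Hoeffding bound gives
$$\Pr[\,|X-Mp_\sigma|>s\mid\sigma\,]\leq 2e^{-2s^2/M}.$$
Choosing $s=\sqrt{g(n)M}=o(n)$ (using $g(n)=o(n)$ and $M=rn$) makes the failure probability at most $2e^{-2g(n)}$.

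Next, I would show that $\sigma\mapsto p_\sigma$ concentrates around its mean $\xi(\U)$ (as defined in Appendix~\ref{a2}). Writing $p_\sigma$ as a ratio, both its numerator and denominator are sums of $\mathbf{1}[\sigma|_T\text{ satisfies }\varphi]$ against fixed weights over all $(T,\varphi)\in[n]^k\times\Phi$. Flipping a single coordinate of $\sigma$ only affects summands whose $k$-tuple $T$ contains that coordinate, and such tuples constitute an $O(k/n)$ fraction of all $k$-tuples. Feasibility of $\U$ lower bounds the denominator by a positive constant depending only on $\U$, so $p_\sigma$ changes by at most $O(k/n)$ per coordinate flip. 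McDiarmid's inequality then yields
$$\Pr[\,|p_\sigma-\xi(\U)|>t\,]\leq 2\exp\bigl(-\Omega(t^2 n/k^2)\bigr).$$
Setting $t=k\sqrt{g(n)/n}$ produces failure probability $2e^{-cg(n)}$ for some constant $c>0$, and $Mt=O(\sqrt{g(n)n})=o(n)$.

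Combining the two estimates via the triangle inequality $|X-\xi(\U)rn|\leq |X-Mp_\sigma|+M|p_\sigma-\xi(\U)|\leq s+Mt=o(n)$ and a union bound gives the claim, with failure probability $O(e^{-cg(n)})$; absorbing the constant into $g$ (both the conclusion slack and the failure exponent remain $o(n)$) yields the stated bound $1-e^{-g(n)}$ for arbitrary $g(n)=o(n)$. The main obstacle is verifying the $O(k/n)$-Lipschitz estimate on $p_\sigma$: one must control how the rejection-sampling normalization responds to a single-coordinate flip of $\sigma$, showing that both the numerator and the denominator of the ratio are $O(k/n)$-Lipschitz and that the denominator is bounded away from zero by a constant depending only on $\U$. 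Note that the 1-essential hypothesis is not needed here (only the notion of ``essential variable'' is), while feasibility is crucial for the lower bound on the denominator.
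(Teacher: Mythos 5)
Your proposal is correct and follows the same two-step structure as the paper: conditional on $\sigma$ the count is $\mathrm{Bin}(M,p_\sigma)$ (Chernoff handles that), and then one shows $p_\sigma$ is close to $\xi(\U)$. The one place you work harder than necessary is the second step: you invoke McDiarmid with an $O(k/n)$-Lipschitz estimate on $p_\sigma$, whereas the paper exploits the cleaner observation that, in the planted model, $p_\sigma$ depends on $\sigma$ only through the balance $|\La^+|$ (up to $o(1)$ corrections) — since the $k$-tuple is chosen uniformly at random, only the number of $+1$'s and $-1$'s among the $k$ chosen slots matters. Then the concentration of $|\La^+|$ around $n/2$ is a one-line binomial bound, and the value $\xi(\U)=k\,\Omega_f+o(1)$ at balance $1/2$ follows from a direct count ($|S^e_\vp|=k|I_\vp|$ via feasibility). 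Your McDiarmid route is valid and slightly more robust (it would apply to non-symmetric models where $p_\sigma$ depends on more than the balance), but for the symmetric models in scope the paper's shortcut is simpler. Both arguments land in the same place.
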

This yields Theorem~\ref{mt} (see appendix ~\ref{asc}) with:
\[ r_f(\U)=\a_k/\xi(\U);\qquad \l(\U,r)=\r_k(\xi(\U) r).\]

In Appendix~\ref{asc}, we describe the models that we use to analyze $\G(F,\s)$ and the *-core of $\G(F,\s)$.

\section{Unfrozen variables outside of the *-core}\label{sec:unstable}

%Recall that $\G(F,\s)$ is as defined in Section~\ref{sstar}.

 Let $x$ be a vertex of $\G(F,\s)$ which is not in the *-core of $\G(F,\s)$.
We will consider how $x$ can be removed during the peeling process used to find the *-core
of $\G(F,\s)$.
More specifically, we consider a sequence of vertices, culminating in $x$,
 which could be removed in sequence by the peeling process.

\begin{definition}
A {\em peeling chain} for a vertex $x\in \G(F,\s)$ is a sequence of vertices $x_1,...,x_{\ell}=x$ such that
each $x_i$ is not essential for any hyperedges in the hypergraph remaining after removing $x_1,...,x_{i-1}$ from $\G(F,\s)$.  The {\em depth} of the chain is the maximum distance from one of the vertices to $x$. The {\em *-depth} of $x$ is the minimum depth over all  peeling chains for $x$.
\end{definition}

In Appendix~\ref{a3}, we will prove:

\begin{lemma} \label{ldepth} { Consider any non-trivial, feasible, symmetric,  balance-dominated, and 1-essential CSP-model $\U$. Let $(F,\s)$ be drawn from the planted model $P(\U,n,M=rn)$ where $r\neq r_f(\U)$.}

For any $\e>0$, there exists constant $L$ such that: For all $g(n)=o(n)$, 
the probability that at least $\e n$ vertices of $\G(F,\s)$
that are not in the *-core of $\G(F,\s)$ have *-depth greater than $L$ is less than $e^{-g(n)}$.
\end{lemma}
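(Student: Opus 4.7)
The plan is to analyze the parallel variant of the peeling process: let $R_t \subseteq V(\G(F,\s))$ denote the set of vertices removed after $t$ rounds, where one round simultaneously removes every vertex that is not essential for any surviving hyperedge. A simple induction on $t$ shows that each $v \in R_t$ admits a peeling chain of depth at most $t$ --- take, in round order, the round-$s$ removed vertices within graph distance $s$ of $v$, for $s = 1,\ldots,t$. Hence it suffices to produce, for each $\e>0$, a constant $L$ such that $|R_\infty \setminus R_L| \le \e n$ with the required probability, where $R_\infty = V \setminus (\text{*-core})$.

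From the appendix and Lemma~\ref{lHp}, $\G(F,\s)$ is (up to lower-order corrections) a Poisson-style random hypergraph of edge density $\a = \xi(\U) r + o(1)$, and the depth-$L$ neighborhood of a uniformly random vertex converges in distribution to a Galton-Watson tree $T$ carrying an essential-vertex labeling derived from $\U$. Let $p_L$ be the probability that the root of $T$ is removed within $L$ rounds of the analogous tree-peeling process; then $p_L$ is non-decreasing with limit $p_\infty$, which a standard fixed-point analysis of the tree-peeling recursion (paralleling the argument behind Lemma~\ref{lHp}) identifies as $1 - \r_k(\a)$ when $\a > \a_k$ and as $1$ when $\a < \a_k$. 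Crucially, the hypothesis $r \ne r_f(\U)$ (equivalently $\a \ne \a_k$) rules out the critical case: away from $\a_k$ the relevant slope of the peeling map is strictly bounded away from $1$, so $p_L \to p_\infty$ geometrically in $L$, and we may fix a constant $L = L(\e,\U,r)$ with $p_L \ge p_\infty - \e/4$.

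By the same local convergence, $\mathbf{E}[|R_L|] = p_L \cdot n + o(n) \ge (p_\infty - \e/3)n$. To concentrate $|R_L|$ around its mean with failure probability $e^{-\Omega(n)}$ (which is what ``less than $e^{-g(n)}$ for every $g(n) = o(n)$'' requires), we exploit that $|R_L|$ is a sum of indicators each determined by a depth-$L$ neighborhood, and that the Poisson degrees in the planted model are bounded by a large constant $D = D(\e,L)$ on all but a vanishing fraction of vertices. One route is a bounded-differences martingale on the $M = rn$ independent constraints, restricted to the event that all relevant local neighborhoods have size at most $C = C(\e,L)$; the complement of this event has probability $e^{-\Omega(n)}$ by Chernoff bounds on the total excess degree. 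An alternative, in the spirit of the Kempe-core analysis of \cite{mmcol}, tracks $|R_t|$ through Wormald's differential-equation method adapted to the planted model. Either yields $|R_L| \ge (p_\infty - \e/2)n$ with probability $\ge 1 - e^{-\Omega(n)}$, which combined with $|R_\infty| \le p_\infty n + o(n)$ from Lemma~\ref{lHp} gives $|R_\infty \setminus R_L| \le \e n + o(n)$, completing the proof after a harmless adjustment of $\e$.

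The main obstacle is the middle step: making precise the geometric convergence $p_L \to p_\infty$ and genuinely using $r \ne r_f(\U)$. At the critical density the peeling recursion becomes tangent to the identity at its non-trivial fixed point, so $p_L$ approaches $p_\infty$ only polynomially slowly in $L$ and no constant $L(\e)$ would work. Away from criticality one obtains a geometric rate from strict contractivity (or strict expansion in the sub-critical case); matching this with the required $e^{-\Omega(n)}$ concentration, while handling the mild dependence introduced by the planted conditioning, is the technical heart of the proof.
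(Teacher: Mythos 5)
Your overall plan matches the paper's own argument: analyze the parallel peeling process (the paper's $H(i)$), translate membership in $R_L$ into a bounded-depth peeling chain, show $|R_L|/n$ concentrates around the limit of the recursion $\r_{i+1}=1-e^{-\a\r_i^{k-1}}$, and subtract the *-core size from Lemma~\ref{lHp}. The paper's Lemma~\ref{lconc} is exactly your Azuma/bounded-neighborhood concentration step, and your final subtraction is the paper's one-line computation $X_I^+ + X_I^- - |H^*| \approx (\r_I-\r)n < \e n$. So no novelty in route, but two points in your write-up are wrong and worth fixing.

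First, your explicit construction of the peeling chain is backwards. For $v$ removed in parallel round $t$, you propose to take, for each $s=1,\ldots,t$, the round-$s$ vertices within distance $s$ of $v$. This fails on a simple induced path: if $v_3$ is round-1, $v_2$ round-2, $v_1$ round-3, $v_0$ round-4 (so $t=4$ and $v=v_0$), your rule excludes $v_3$ (distance $3 > 1$) even though $v_3$ must be removed before $v_2$ can be. The correct choice is the round-$s$ vertices within distance $t-s$ of $v$ (i.e. all $w$ with $pr(w)+\dist(w,v)\le t$), ordered by round. That gives depth at most $t-1$, and the induction goes through. The claim ``$v\in R_t$ has *-depth $\le t$'' is right; the parenthetical construction is not.

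Second, and more substantively, your explanation of where $r\ne r_f(\U)$ is used is incorrect. You argue that at criticality ``$p_L$ approaches $p_\infty$ only polynomially slowly in $L$ and no constant $L(\e)$ would work.'' But polynomial (indeed any) convergence of $\r_I\downarrow\r$ still lets you choose a \emph{constant} $I=I(\e)$ with $\r_I<\r+\e$; the lemma explicitly permits $L$ to depend on $\e$. The real reason the critical density is excluded is that the conclusion ``$|R_\infty|=(1-\r_k(\a))n+o(n)$ with probability $1-e^{-g(n)}$'' — your use of Lemma~\ref{lHp}, the paper's Lemma~\ref{lcore} — is only established for $\a\ne\a_k$. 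At $\a=\a_k$ the *-core size is not concentrated around the fixed-point value (the final sequential cleanup after the parallel rounds is no longer a subcritical branching process, since the contraction constant $\g$ from Lemma~\ref{lbr} degenerates to zero), so $R_\infty$ can differ from $p_\infty n$ by a linear amount. The geometric-vs-polynomial contrast you highlight as ``the technical heart'' is actually a red herring; the paper never needs a rate on $\r_I\to\r$, only monotone convergence plus the separate concentration of the core.
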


This is enough to prove that all but $o(n)$ variables outside the *-core are 1-frozen as follows:

{
{\em Proof outline of Theorem~\ref{mt}(a.ii,b):}

Consider any $\e>0$.  If $(F,\s)$ is drawn from the planted model then, by Lemma~\ref{ldepth}, $\G(F,\s)$ has fewer than $\e n$ vertices of *-depth greater than $L$ 
with probability at least $1-e^{-g(n)}$.  So for $r<r_p(\U)$, Lemma~\ref{tptp} yields that the same is true w.h.p.\ when $(F,\s)$ is drawn from the uniform model.}

 Consider any vertex $x$ of *-depth at most $L$.  Consider a peeling chain for $x$
of depth at most $L$ and let $W$ be the set of all hyperedges  that contain at least one vertex of the peeling chain.  

If no hyperedges of $W$ form a cycle, then it is easy to see that we can change all of the variables in the peeling chain, one-at-a-time and still have a satisfying assignment for $F$.  Indeed, this follows from a straightforward induction on $L$. Therefore, the variable $x$ is not $1$-frozen. The case where $W$ contains a cycle is rare enough to be negligible (see  {Appendix~\ref{scs}}).  So for
all $\e>0$ there are fewer than $\e n$ variables outside of the *-core that are not 1-frozen,
as required. 
\proofend

This argument also leads to:

{\em Proof {outline} of Theorem~\ref{mt}:} This theorem follows as above, by adding the observation
that with sufficiently high probability, almost all vertices outside the *-core have a peeling chain
of size $O(1)$.  We can change the corresponding variable by changing a subset of the entire peeling chain.  See Appendix~\ref{asc} for the short proof.
\proofend

\section{Frozen variables in the *-core}\label{schop}
Most of the work in this paper is in proving that almost all vertices in the *-core of $\G(F,\s)$ are $\Theta(n)$-frozen. To do so, we first study the structure of sets of variables that can be changed to obtain a new solution.  Note that if changing the value of every variable of $S$ yields a solution, then every constraint whose essential variable is in $S$ must contain at least one other variable in $S$. This leads us to define:

 \begin{definition} A {\em flippable set} of the *-core of $\G(F,\s)$ is a set of vertices $S$
such that for every $x\in S$ and every *-core hyperedge $f$ in which $x$ is essential,
$S$ contains another vertex of $f$.
\end{definition}

For every vertex $x\in S$, since $x$ is in the *-core, there will be at least one such hyperedge $f$.  

{Note:  if $S$ is a flippable set in $\G(F,\s)$, then changing the variables of $F$ corresponding to $S$ will not necessarily yield another solution; this will depend on the
actual constraints of $F$.  But it is easily seen that the converse holds:

\begin{proposition} \label{pflip}  If $\s,\s'$ are two solutions to a 1-essential CSP, $F$, then the set of *-core variables
on which they differ form a flippable set in $\G(F,\s)$.
\end{proposition}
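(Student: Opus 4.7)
The plan is a direct verification of the flippable-set definition using only the definition of ``essential''; no probabilistic input is required, because the statement is purely combinatorial. Let $S$ denote the set of *-core variables on which $\s$ and $\s'$ differ. I will fix an arbitrary $x\in S$ together with an arbitrary *-core hyperedge $f$ of $\G(F,\s)$ for which $x$ is the essential vertex of the associated constraint $\vp$, and then exhibit a second vertex of $f$ that lies in $S$.

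The central observation is the following. Since $x$ is essential for $(\vp,\s)$, the assignment obtained from $\s$ by flipping the value of $x$ alone does not satisfy $\vp$. On the other hand, $\s'$ is a solution of $F$ and in particular satisfies $\vp$, while by hypothesis $\s'(x)\neq \s(x)$. If $\s$ and $\s'$ agreed on every other variable in the scope of $\vp$, then $\s'$ would coincide on that scope with the single-flip modification of $\s$ and would therefore fail $\vp$, a contradiction. Hence there is some vertex $y$ of $f$ with $y\neq x$ and $\s(y)\neq \s'(y)$.

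It remains to check that this $y$ actually belongs to $S$, i.e.\ that $y$ is a *-core vertex. This is immediate from the construction of the *-core: each time a variable is removed by the peeling procedure, every hyperedge containing it is removed as well, so every vertex of a surviving *-core hyperedge is itself a *-core vertex. Since $f$ is a *-core hyperedge and $y\in f$, the vertex $y$ lies in the *-core; combined with $\s(y)\neq \s'(y)$ this places $y$ in $S$, as required.

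There is really no substantive obstacle to this proof. The 1-essential hypothesis enters only implicitly, through the fact that $\G(F,\s)$ and its *-core are defined under that assumption; the essential-variable axiom itself applies per constraint and requires nothing more. The one point worth guarding against is a notational confusion between ``*-core variable'' and ``variable that appears in some *-core hyperedge'': the remark in the previous paragraph that all endpoints of surviving hyperedges survive makes these two notions coincide, which is what lets the argument close without any extra case analysis.
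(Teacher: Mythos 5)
Your proof is correct and takes essentially the same route as the paper's: you verify the flippable-set condition directly from the definition of ``essential,'' where the paper phrases the identical argument as a contradiction. One small point in your favor: you explicitly justify that the second differing vertex $y$ is itself a *-core variable (because *-core hyperedges only contain *-core vertices, since the peeling removes any hyperedge meeting a removed vertex), a step the paper's proof uses but leaves implicit when it concludes ``$\s,\s'$ agree on all variables but $x$.''
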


\begin{proof} Let $S$ be the set of variables in the *-core of $(F,\s)$ on which $\s,\s'$ differ.
Suppose that $S$ does not form a flippable set in $\G(F,\s)$.  Then there is a variable $x\in S$ and a *-core hyperedge $e$ in which $x$ is essential, such that $e$ contains no other members of $S$.  The hyperedge $e$ corresponds to a constraint in $F$.  In that constrain, the solutions $\s,\s'$ agree on all variables but $x$, which contradicts the fact that $x$ is essential for $e$.
\end{proof}
}

We prove that for some $\phi'(n)=o(n)$ and constant $\z>0$, with sufficiently high probabilty,
there are no flippable sets of size $\phi'(n)<a<\z n$.  This will be enough to prove that at most $o(n)$ vertices lie in flippable sets, which in turn will be enough to show that almost all of the *-core is frozen. 

We apply the first moment method.  Unfortunately, we cannot apply it directly to the number of flippable sets because the existence of one flippable set $S$ typically leads to the existence
of an exponential number of flippable sets formed by adding to $S$ vertices $x$ such that (i) $x$ is essential in exactly one hyperedge, and (ii) that hyperedge contains a non-essential vertex in $S$.  So instead we focus on something that we call {\em weakly flippable sets}, which do not contain such vertices $x$.  Roughly speaking: every flippable set can be formed from a weakly flippable set
by repeatedly adding vertices $x$ in that manner.  We prove that with sufficently high probability:
\begin{enumerate}
\item[(a)] There are no weakly flippable sets of size $\phi(n)<a<\z n$.
\item[(b)] There are no weakly flippable sets of size at most $\phi(n)$  which extend to
a flippable set of size greater than $\phi'(n)$.
\end{enumerate}
This establishes our bound on the sizes of flippable sets. (This is not quite true - we also need to consider {\em cyclic sets} - but it provides a good intuition.)

Let $H_1$ denote the vertices that are essential in exactly one hyperedge.  Define a {\em one-path} to be a sequence of vertices $x_1,...,x_{t+1}$ such that for each $1\leq i\leq t$:
$x_i\in H_1$ and $x_{i+1}$ is in the hyperedge in which $x_i$ is essential.  Note that if $x_{t+1}$ is in a  flippable set $S$, then we can add the entire one-path to $S$ and it will still be flippable.
This ends up implying that if we have a proliferation of long one-paths, then we would not be able to prove (b).  It turns out that a proliferation of long one-paths would also prevent us from proving (a).

Consider a vertex $x\in H_1$ and the edge $f$ in which $x$ is essential.  Intutively, the expected number of other members of $H_1$ that are in $f$ is $(k-1)|H_1|$ divided by the
size of the *-core.  We prove (Lemma~\ref{lbranch}) that this ratio is less than 1.  This implies that one-paths do not ``branch'' and so we do not tend to get many long one-paths. So our bound on this ratio plays a key role in establishing both (a) and (b).

This is just an intuition.  In fact, {\em one-paths} are not explicitly mentioned anywhere in the proofs.  For all the details, see Appendix~\ref{sec:stable}.

\section{Further Challenges}\label{sec:conclusion}
Of course, one ongoing challenge is to continue to rigorously establish parts of the clustering picture.  By now, it is clear that in order to establish satisfiability thresholds or understand the algorithmic challenges for problems with densities approaching that threshold, we will {probably} need a strong understanding of clustering.

Another challenge is to try to establish whether the freezing threshold is, indeed, an algorithmic barrier.  For several CSP-models, we now know the precise location of that threshold, and we have a very good understanding of how it arises and which variables are frozen.  Perhaps we can use that understanding to prove that a simple algorithm works for all densities up to that threshold and/or establish that frozen clusters will indeed neccesitate more sophistication.

Another challenge is to  determine the freezing threshold for a wider
variety of CSP-models.  These techniques rely crucially on {using} the planted model; at this point
there is no known way to get to the exact threshold without it.  This prevents us from
extending our results to $k$-SAT and many other models as the planted model does not work nearly well enough,
mainly because the number of solutions is not sufficiently concentrated.  A more important challenge
would be to devise a better means to analyze random solutions to CSP's drawn from those models.

% use section* for acknowledgement
\section*{Acknowledgment}

The authors are supported by an NSERC Discovery Grant and an NSERC Accelerator Fund.

\newpage

\begin{center}
{\Large{\bf Appendix}}
\end{center}

\section{The transfer theorem}\label{a1}

%We recall our properties of a CSP-model $\U=(\Phi,p)$:

%\begin{definition}\mbox{ }\\
%\textbf{Non-trivial}: There is at least one $\varphi\in\Phi$ that is not satisfied by
%$x_1=...=x_k=1$ and  at least one $\varphi\in\Phi$ that is not satisfied by
%$x_1=...=x_k=-1$.\\
%\\
%\textbf{Feasible}: For any $\varphi\in\Phi$, and every assignment to any $k-1$ of the variables, at least one of the two possible assignments to the $k$th variable will result in $\vp$ being satisfied.\\
%\\
%\textbf{Symmetric}: For every $\varphi\in\Phi$, and for every assignment $x$, we have $\varphi(x)=\varphi(-x)$, where  $-x$ is the assignment obtained from $x$ by reversing the assignment to each variable.\\
%\\ 
%\textbf{Balance-dominated} Consider a random assignment $\s$ where each variable
%is independently set to be 1 with probability $q$ and -1 with probability $1-q$, and let $\vp$ be a random constraint from $\Phi$ with distribution $p$. The probability that $\s$ satisfies $\vp$ is maximized at $q=\hf$. 
%\end{definition}

{Let us consider a CSP-model $\U=(\Phi,p)$. Let us recall the properties from Definition \ref{def:boolprop}.} Given a boolean function $\varphi\in\Phi$, {we} denote by $S_{\varphi}$ %\subset\{-1,+1\}^k$ 
the set of satisfying assignments of $\varphi$ {and also we define $I_{\varphi}:=\{-1,1\}^k\setminus S_{\varphi}$}. Now, let $\varphi(x)=\sum\limits_{Q\subseteq\{-1,1\}^k}\left(\varphi_Q\prod_{i\in Q}x_i\right)$ be its \emph{Fourier expansion}. Such expansion is unique with $\varphi_Q:=\sum\limits_{x\in\{-1,1\}^k}\left(\varphi(x)\prod_{i\in Q}x_i\right)$. In particular, {it is the case} that $\varphi_{\emptyset}=\frac{|S_{\varphi}|}{2^k}=\sum\limits_{Q\subseteq\{-1,1\}^k}\varphi_Q^2$. {Moreover, if $\varphi$ is symmetric, we have that} $\varphi_{\{i\}}=0$ (In fact, $\varphi_Q=0$ whenever $|Q|$ is odd). Now, we define the polynomial $p_{\varphi}(\theta)$ as follows,
\[
p_{\varphi}(\theta):=\sum_{Q\subseteq \{-1,1\}^k}(\varphi_Q/\varphi_{\emptyset})^2\theta^{|Q|}
\]
Also, we define the \emph{binary entropy function} $\Hh(\theta)$ as  
\[
\Hh(\theta):=-\frac{1+\theta}{2}\ln(1+\theta)-\frac{1-\theta}{2}\ln(1-\theta)
\]
%CHECK NOTATION!
{Finally, we define}
\[
r_p(\U):=\inf_{\theta\in(0,1)}\frac{-\Hh(\theta)}{\sum_{\varphi\in\Phi}p_{\varphi}\ln(p_{\varphi}(\theta))}.
\]

We will now prove Lemma~\ref{tptp}, which we restate:

\noindent {\bf Lemma~\ref{tptp} } {\em
  Consider any non-trivial, feasible, symmetric, and balance-dominated CSP-model $\U$.  
For  every $r<r_p(\U)$, there is
a function $g(n)=o(n)$ such that: Let $\cale$ be any property of pairs $(F,\s)$ where $\s$ is a satisfying solution of $F$. 
If
\[\pr(P(\U,n,M=rn) \mbox{ has } \cale)>1-e^{-g(n)},\]
then
\[\pr(U(\U,n,M=rn) \mbox{ has } \cale)>1-o(1).\]
}

The proof follows the argument employed in \cite{mrt} to prove Theorem B.3, which followed the same spirit of similar results in \cite{ac}.

\begin{proof}
In what follows, we will take expectations over a random $\vp$ chosen from $\Phi$ with distribution $p$.  Thus, for a variable $X(\vp)$,  we have $\expect(X)=\sum_{\vp\in\Phi}p(\vp)X(\vp)$.
Let $\xi_{\varphi}$ be the number of clauses with constraint $\varphi$ in the random CSP $H$ drawn from $\U$. {Let $\gamma$ be a fixed constant in $(0,1/2)$ and let $\mathcal{F}$ be the event `For all $\varphi\in\Phi$, $|\xi_{\varphi}-\alpha p_{\varphi} n|<n^{1/2+\gamma}$'}. So, $\mathcal{F}$ holds w.h.p.

{We say that a solution $\s$ is {\em balanced} if the number of variables assigned +1 is either $\lceil\frac{n}{2}\rceil$ or  $\lfloor\frac{n}{2}\rfloor$.}
Let $Z_b$ be the number of balanced solutions of $H$, let $Z$ be the number of solutions of $H$ and let $Z_b(\theta)$ be the number of \emph{pairs} of balanced solutions $x^{(1)}$,$x^{(2)}$ with discrepancy $\theta$, that is, such that $\frac{1}{n}\sum_{i=1}^n x^{(1)}_i x^{(2)}_i=\theta$. Now,
\[
\frac{\expect[Z_b^2\I(\mathcal{F})]}{(\expect[Z_b\I(\mathcal{F})])^2}=\sum_{\theta\in U_n}\frac{\expect[Z_b(\theta)\I(\mathcal{F})]}{(\expect[Z_b\I(\mathcal{F})])^2}
\]
where $U_n:=\{i/n:i=-n,\ldots,n\}$. From lemma A.2 in \cite{mrt}, {then} it is the case that
\[
\frac{\expect[Z_b(\theta)\I(\mathcal{F})]}{(\expect[Z_b\I(\mathcal{F})])^2}\leq C n^{-1/2}\exp\left(n(\Hh(\theta)+\alpha\expect[\ln(p_{\varphi}(\theta))]+o(1))\right)
\]
where $C$ does not depends on $\theta$ (neither the $o(1)$ term). 

Now, if
$\alpha<r_p(\U)$, it is the case that 
\begin{equation}\label{eq:1}
H(\theta)+\alpha\expect[\ln(p_{\varphi}(\theta))]<0 \text{ for all } \theta\in(0,1).
\end{equation}

On the other hand, {since $\U$ is symmetric}, 
\[
H(\theta)+\alpha\expect[\ln(p_{\varphi}(\theta))]=\left(-\frac{1}{2}+\alpha\expect\left[\sum_{|Q|=2}(\varphi_Q/\varphi_{\emptyset})^2\right]\right)\theta^2+O(\theta^4).
\]
{Now, since}
\[
\lim_{\theta\to 0}\frac{-H(\theta)}{\expect[\ln(p_{\varphi}(\theta))]}=\frac{1/2}{\expect\left[\sum_{|Q|=2}(\varphi_Q/\varphi_{\emptyset})^2\right]}>\alpha,
\]
then it is the case 
$H(\theta)+\alpha\expect[\ln(p_{\varphi}(\theta))]<-c\theta^2$ for some $c>0$ and $\theta$ close enough to $0$. Combining this fact with eq. \eqref{eq:1}, we have that for some $c'>0$,
\begin{equation}\label{eq:12}
H(\theta)+\alpha\expect[\ln(p_{\varphi}(\theta))]<-c'\theta^2 \text{ for all } \theta\in(0,1).
\end{equation}
Now,
\begin{eqnarray}
\frac{\expect[Z_b^2\I(\mathcal{F})]}{(\expect[Z_b\I(\mathcal{F})])^2}&\leq&\frac{C}{n^{1/2}}\sum_{\theta\in U_n} \exp(-c'n(\theta^2+o(1)))\\
&\leq& Cn^{1/2}\int_{-\infty}^{\infty}\exp(-c'n(\theta^2+o(1)))
\end{eqnarray}
And the last quantity is bounded by a constant $C_0$ (not depending on $n$). {This implies, by the Paley-Zygmund inequality, that} for every $\epsilon>0$ and all $n\geq n_0$ it is the case that $\prob(Z_b>e^{-n\epsilon}\expect[Z_b])\geq C_0/2$.

Now, {because $\U$ is balance-dominated}, we have that $\expect[Z]\leq n\expect[Z_b]$. Therefore, for $n$ large enough, we have that
\[
\prob(Z>e^{-n\epsilon}\expect[Z])\geq\prob(Z_b>n e^{-n\epsilon}\expect[Z_b])\geq\prob(Z_b\geq e^{-n(\epsilon/2)}\expect[Z_b])\geq C_0/2.
\]
{On the other hand,} it is easy to see that $\expect[Z]$ is exponential in $n$ for $\alpha<r_p(\U)$ (Indeed $\expect[Z]$ is exponential for $\alpha<r_{sat}(\U):=\frac{\ln 2}{\expect_{\varphi}[\ln(1+|I_{\varphi}|/|S_{\varphi}|)]}=\frac{-H(1)}{\expect[\ln(p_{\varphi}(1))]}$). {Now, let us recall from Appendix C in \cite{mrt}, that the event `$Z>B^n$', where $B>1$, has a sharp threshold in the clauses to variables ratio. Thus, the event `$Z>e^{-n\epsilon}\expect[Z]$' has a sharp threshold in the parameter $\alpha$}. Therefore, necessarily, it is the case {that $Z>e^{-n\epsilon}\expect[Z]$ w.h.p.}. This implies therefore, that for some function $g(n)$ of order $o(n)$, it is the case that w.h.p., 
\begin{equation}\label{eq:transfer}
\ln(Z)>\ln(\expect(Z))-g(n).
\end{equation}
After this equation is established now the lemma follows. For instance, from Theorem B.3 in \cite{mrt}.
\end{proof}
 
Now, recall our other property:

{\bf 1-essential:} {\em 
Given a boolean constraint $\varphi$ and an assignment $\sigma$  that satisfies
$\varphi$, we say that the variable $x$ is {\em  essential} for $(\varphi,\sigma)$ if changing the value of $x$ results in $\varphi$ being unsatisfied.
%We say that $\vp$ is {\em 1-essential} for $\s$ if 
%exactly one variable is essential for $(\vp,\s)$.  
We say that a set $\Phi$ of constraints is {\em 1-essential} if for every  $\varphi\in\Phi$, and every $\s$ satisfying $\varphi$,
at most one variable is essential for $(\varphi,\s)$.  A CSP-model $(\Phi,p)$ is  {\em 1-essential} if $\Phi$ is   1-essential.
}

An easy description of a feasible, 1-essential constraint is the following: $\varphi$ is feasible and 1-essential iff the Hamming distance between any pair of assignments in $I_{\varphi}$ is greater than $2$. This implies in particular that $|I_{\varphi}|\leq \frac{2^k}{\binom{k}{2}+1}$ {and} $\varphi_{\{i,j\}}=-\frac{1}{2^k}\sum_{x\in I_{\varphi}}x_i x_j$. This allows us to prove a more concrete lower bound on the transfer threshold $r_p(\U)$ that we will use in the next section to establish that $r_p(\U)$ is above the freezing threshold for large enough $k$.
%instead of definion feasible and 1-essential, we could present this form for the constraints we want to study.
% A short name for feasible-nonessential?

\begin{theorem}~\label{tomega}
{Consider any non-trivial, feasible, symmetric, balance-dominated and 1-essential CSP-model $\U$. It is the case that 
\[
r_p(\U)\geq \frac{0.25}{\Omega_p(\U)},
\]
where
\[
\Omega_p(\U):=\expect_{\varphi}[|I_{\varphi}|/|S_{\varphi}|].
\]
}
\end{theorem}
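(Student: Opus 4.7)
The plan is to bound the numerator and denominator of the expression defining $r_p(\U)$ separately, and then take the infimum. Throughout, I work with $\theta \in (0,1)$.

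First I would upper-bound the denominator $\sum_{\varphi}p(\varphi)\ln p_\varphi(\theta)$. Since $\U$ is symmetric, the text already observes that $\varphi_Q=0$ whenever $|Q|$ is odd, so the constant term of $p_\varphi$ is $1$ and every other surviving monomial has degree at least $2$. Hence for $\theta \in [0,1]$,
\[
p_\varphi(\theta)-1 \;=\; \sum_{\substack{Q\neq \emptyset \\ |Q|\text{ even}}}(\varphi_Q/\varphi_\emptyset)^2\,\theta^{|Q|} \;\le\; \theta^2 \sum_{Q\neq \emptyset}(\varphi_Q/\varphi_\emptyset)^2.
\]
By Parseval, $\sum_Q \varphi_Q^2 = 2^{-k}\sum_x \varphi(x)^2 = 2^{-k}|S_\varphi| = \varphi_\emptyset$, so $\sum_Q(\varphi_Q/\varphi_\emptyset)^2 = 1/\varphi_\emptyset = 2^k/|S_\varphi|$, and subtracting the $Q=\emptyset$ term gives $\sum_{Q\neq\emptyset}(\varphi_Q/\varphi_\emptyset)^2 = |I_\varphi|/|S_\varphi|$. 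Therefore $p_\varphi(\theta) \le 1+\theta^2\,|I_\varphi|/|S_\varphi|$ and, using $\ln(1+x)\le x$,
\[
\ln p_\varphi(\theta) \;\le\; \theta^2\,\frac{|I_\varphi|}{|S_\varphi|}.
\]
Averaging over $\varphi$ against $p$ yields $\sum_\varphi p(\varphi)\ln p_\varphi(\theta) \le \theta^2\,\Omega_p(\U)$.

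Next I would lower-bound the numerator. A direct convexity argument shows $-\Hh(\theta) \ge \theta^2/2$ on $[0,1]$: define $f(\theta)=-\Hh(\theta)-\theta^2/2$; then $f(0)=0$, $f'(0)=0$, and a quick computation gives $f''(\theta)=\theta^2/(1-\theta^2)\ge 0$, so $f\ge 0$. (Note that the feasibility property guarantees $p_\varphi(\theta)$ is not identically $1$ for some $\varphi$ with positive mass, so the denominator is strictly positive and the ratio is well defined.)

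Combining the two bounds, for every $\theta \in (0,1)$,
\[
\frac{-\Hh(\theta)}{\sum_\varphi p(\varphi)\ln p_\varphi(\theta)} \;\ge\; \frac{\theta^2/2}{\theta^2\,\Omega_p(\U)} \;=\; \frac{1}{2\,\Omega_p(\U)} \;\ge\; \frac{0.25}{\Omega_p(\U)},
\]
so taking the infimum over $\theta$ yields $r_p(\U) \ge 0.25/\Omega_p(\U)$. The main obstacle is really just verifying the Parseval identity in the asymmetric-Fourier normalization used in the paper and making sure the symmetric and 1-essential hypotheses are invoked correctly (symmetry kills the odd-degree Fourier coefficients so that the expansion starts at $\theta^2$; 1-essentiality is not strictly needed for the inequality itself, but it is the hypothesis ensuring that $\Omega_p(\U)$ is the right ``small'' quantity to compare against, since by the text's observation $|I_\varphi|\le 2^k/(\binom{k}{2}+1)$ when $\varphi$ is feasible and 1-essential). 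Everything else is elementary calculus.
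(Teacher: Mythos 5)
Your proof is correct, and it is in fact sharper than the paper's: you obtain $r_p(\U)\geq\frac{1}{2\Omega_p(\U)}$, twice the stated bound. The route is genuinely different. The paper splits $p_\varphi(\theta)-1$ into its degree-$2$ part (bounded via $\varphi_{\{i,j\}}=-2^{-k}\sum_{x\in I_\varphi}x_ix_j$, giving $\binom{k}{2}(|I_\varphi|/|S_\varphi|)^2$, which needs the 1-essential cardinality bound $|I_\varphi|\leq 2^k/(\binom{k}{2}+1)$ to be comparable to $|I_\varphi|/|S_\varphi|$) and its degree-$\geq 4$ part (bounded by Parseval, giving another $|I_\varphi|/|S_\varphi|$), then adds the two to get $\expect[\ln p_\varphi(\theta)]\leq 2\theta^2\Omega_p(\U)$. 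You instead apply Parseval once to the entire non-constant mass $\sum_{Q\neq\emptyset}(\varphi_Q/\varphi_\emptyset)^2=|I_\varphi|/|S_\varphi|$, pull out $\theta^2$ using only the symmetry-induced vanishing of odd-degree coefficients, and get $\expect[\ln p_\varphi(\theta)]\leq\theta^2\Omega_p(\U)$ directly — a factor of $2$ better, and notably without invoking 1-essentiality at all. This shows the inequality (and indeed the stronger $r_p\geq 0.5/\Omega_p$) holds for any non-trivial, feasible, symmetric, balance-dominated model. One small slip: the positivity of the denominator $\expect[\ln p_\varphi(\theta)]$ for $\theta>0$ is a consequence of non-triviality (some $\varphi$ in the support has $|I_\varphi|\geq 1$, hence $\sum_{Q\neq\emptyset}\varphi_Q^2>0$ by Parseval), not of feasibility as you wrote; feasibility alone would allow $\Phi$ to consist entirely of constraints $\varphi\equiv 1$. (In any case, if the denominator vanished identically the infimum would be $+\infty$ and the claim would hold vacuously.) Your $-\Hh(\theta)\geq\theta^2/2$ step is the same as the paper's evaluation of $\inf_{\theta\in(0,1)}\frac{-\Hh(\theta)}{\theta^2}=\tfrac12$, just proved via a second-derivative argument.
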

\begin{proof}
Since every constraint $\varphi\in\Phi$ is feasible and 1-essential, we have that
\begin{eqnarray*}
\sum_{\{i,j\}}\left(\frac{\varphi_{\{i,j\}}}{\varphi_{\emptyset}}\right)^2
=\sum_{\{i,j\}}\frac{\left(\sum_{x\in I_{\varphi}} x_i x_j\right)^2}{|S_{\varphi}|^2}\leq \binom{k}{2}\left(\frac{|I_{\varphi}|}{|S_{\varphi}|}\right)^2
\end{eqnarray*}
Therefore, since 
\begin{eqnarray*}
\sum_{|Q|\geq4}\varphi_Q^2\theta^{|Q|}
\leq\sum_{|Q|\geq4}\varphi_Q^2\theta^{4}
\leq\left(\sum_{Q\subseteq\{-1,1\}^k}\varphi_Q^2-\varphi_{\emptyset}^2\right)\theta^{4}
=\varphi_{\emptyset}(1-\varphi_{\emptyset})\theta^4,
\end{eqnarray*}
we have that 
\[
p_{\varphi(\theta)}\leq 1+\binom{k}{2}\left(\frac{|I_{\varphi}|}{|S_{\varphi}|}\right)^2 \theta^2+\frac{|I_{\varphi}|}{|S_{\varphi}|}\theta^4
\]
And, since $|I_{\varphi}|\leq\frac{2^k}{\binom{k}{2}+1}$, and therefore $\binom{k}{2}\left(\frac{|I_{\varphi}|}{|S_{\varphi}|}\right)^2\leq \frac{|I_{\varphi}|}{|S_{\varphi}|}$, we get that
\[
p_{\varphi(\theta)}\leq 1+2\frac{|I_{\varphi}|}{|S_{\varphi}|}\theta^2
\]
Thus,
\[
\expect_{\varphi}[\ln(p_{\varphi}(\theta))]\leq 2\theta^2\Omega_p(\U)
\]
Now, we finally conclude that
\begin{eqnarray}
r_p(\U)=\inf_{\theta\in(0,1)}\frac{-H(\theta)}{\expect_{\varphi}[\ln(p_{\varphi}(\theta))]}\geq \frac{0.5}{\Omega_p(\U)} \inf_{\theta\in(0,1)}\frac{-H(\theta)}{\theta^2}=\frac{0.25}{\Omega_p(\U)}.
\end{eqnarray}
\end{proof}

We close this section by discussing the CSP-models that satisfy our five conditions: non-trivial, feasible, symmetric,  balance-dominated, and 1-essential.

Our properties are rich enough to permit a large class of CSP-models beyond hypergraph 2-coloring and $k$-NAE-SAT. 
For example, we can construct a model in the following way: 

Represent the assignments in $\{-1,+1\}^k$ as the $k$-dimensional hypercube $H_k$,
and so two assignments are adjacent if they differ on exactly one variable. Let $L_{\e}$ denote the vertices $x \in H_k$ with $\sum x_k>\epsilon k$. Consider any subset $I\subseteq L_{\e}$  containing no two vertices of distance at most two.  We use $-I$ to denote the subset formed
by switching the sign of every vertex in $I$, and set $J:=I \cup -I$ to be the assignments which violate our constraint $\vp_J$. I.e., $\varphi_J(x):=1$ iff $x\notin J$. 

Now consider any set $\Phi$ of constraints of this form in which at least one is non-trivial
(i.e. has $(1,1,...,1)\in J$). Let $\U=(\Phi,p)$ for any $p$ (such that $\supp(p)=\Phi$). For \emph{any $k$ large enough in terms of $\e$}, $\Phi$ satisfies our five properties.
For instance, hypergraph 2-coloring is formed in this way with  $I:=(1,...1).$

Given a constraint $\vp$ and some $s\in\{-1,+1\}^k$, we define the constraint $\vp^s$ as
$\vp^s(x_1,...,x_k)=\vp(s_1x_1,...,s_kx_k)$. We can allow $\e=0$ and drop the condition
that $k$ must be large if (a) no two vertices of $J$ are within distance 2, and (b) for every $\vp\in\Phi$ and every $s\in\{-1,+1\}^k$, we have $\vp^s\in\Phi$ and $p(\vp^s)=p(\vp)$.
For instance, $k$-NAE-SAT is formed in this way with $I:=(1,...,1)$.

\section{Essential hyperedges}\label{a2}

Consider any nontrivial, feasible, symmetric 1-essential CSP-model $\U=(\Phi,p)$.  We will draw  $(F,\s)$ from the planted model  $P(\U,n,M)$. We begin by taking a random assignment $\s$ for the variables $x_1,...,x_n$ and note that $|\La^+|,|\La^-|=\hf n+o(n)$ with probability at least $1-e^{-g(n)}$, for any $g(n)=o(n)$.  So we can assume that this condition holds.

{In what follows, we will take expectations over a random $\vp$ chosen from $\Phi$ with distribution $p$.  Thus, for a variable $X(\vp)$,  we have $\expect(X)=\sum_{\vp\in\Phi}p(\vp)X(\vp)$.}

For every $\vp\in\Phi$, recall from the previous section that $S_{\vp}$ is the set of assigments in $\{-1,+1\}^k$ that
satisfy $\vp$ and $I_{\vp}=\overline{S_{\vp}}$ is the set that do not satisfy $\vp$.  We define
$S^e_{\vp}\subseteq S_{\vp}$ to be the set of assignments that satisfy $\vp$ and for which
$\vp$ has an essential variable.  Noting that switching the essential variable of an assignment
in $S^e_{\vp}$ yields an assignment in $I_{\vp}$, and using the fact that $\U$ is feasible, it is easy to see that $|S^e_{\vp}|=k|I_{\vp}|$.

Since $|\La^+|,|\La^-|=\hf n+o(n)$, it follows that when picking a constraint in
the planted model, we choose $\vp$ with probability proportional to $p(\vp)|S_{\vp}|+o(1)$.
  Thus, 
defining $\Omega_f:=\frac{\expect|I_{\varphi}|}{\expect|S_{\varphi}|}$, the probability that
$\vp$ has an essential variable is: 

\[\xi(\U)=k\Omega_f+o(1).\]

So the number of constraints that have an essential variable is distributed as the binomial $BIN(M=r n,\xi(\U))$.   Concentration of the binomial variable implies Lemma~\ref{lxi}.

Now recall the type of $\vp$, as defined in Section~\ref{shm}.  
For a constraint $\varphi\in\Phi$, define $I_{\varphi}(a,b):=\{x\in I_{\varphi}: x \text{ has } a \text{ } 1's \text{ and } b \text{ } -1's\}$ then the clause $\varphi$ has exactly $(b+1)|I_{\varphi}(a,b+1)|$ assignments of type $(1;a,b)$ and $(a+1)|I_{\varphi}(a+1,b)|$ assignments of type $(-1;a,b)$. Therefore, when picking a constraint in
the planted model, if we condition on the event that it has an essential variable, then the conditional probability that it has type $\tau=(1;a,b)$ is 
\[
\gamma_{\tau}=\frac{(b+1)\expect[|I_{\varphi}(a,b+1)|]}{k\expect[|I_{\varphi}|]}+o(1)
\]
and to be of type $\tau=(-1;a,b)$ is 
\[
\gamma_{\tau}=\frac{(a+1)\expect[|I_{\varphi}(a+1,b)|]}{k\expect[|I_{\varphi}|]}+o(1)
\]
Since $\U$ is symmetric, $\vp(x)=\vp(-x)$ for every assignment $x$. It follows that $|I_{\varphi}(a,b)|=|I_{\varphi}(b,a)|$ and therefore $\gamma_{\tau=(1;a,b)}=\gamma_{\tau=(-1;b,a)}+o(1)$. %\footnote{Indeed, this is satisfied with the weaker condition $\expect[|I_{\varphi}(a,b)|]=\expect[|I_{\varphi(b,a)}|]$} \footnote{Moreover, if $\varphi$ is \emph{polarized} (see \cite{mrt}), it is the case that $\gamma_{\tau=(z;a,b)}=\binom{k-1}{a}\frac{1}{2^k}$ and therefore the hypergraph models furtehr simplifies}. \\
So, noting that we can exchange $a,b$ in the following definition: 
\[
\gamma^+:=\sum_{\tau=(1,a,b)} \gamma_{\tau} \text{, }\qquad \gamma^-:=\sum_{\tau=(-1,a,b)} \gamma_{\tau},
\]
we have $\gamma^+=\gamma^-=\hf+o(1)$. In other words:
\begin{lemma}\label{lxx}
When we choose a random clause for the planted model, and condition on it having an essential variable: the probability that the essential
variable is in $\La^+$ is equal to the probability that it is in $\La^-$ plus $o(1)$.
\end{lemma}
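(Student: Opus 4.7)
The plan is to exploit the symmetry property of $\U$ together with the balance assumption $|\La^+|,|\La^-| = \hf n + o(n)$, both of which were already invoked immediately before the statement. The discussion preceding the lemma already computes, for each type $\tau$ of essential assignment, the conditional probability $\gamma_\tau$ that a random clause (conditioned on having an essential variable) is of type $\tau$; it remains only to sum these across types and read off that $\gamma^+$ and $\gamma^-$ must each equal $\hf+o(1)$.

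Concretely, I would rely on the two explicit formulas already derived, namely
\[ \gamma_{(1;a,b)} = \frac{(b+1)\,\expect[|I_\vp(a,b+1)|]}{k\,\expect[|I_\vp|]} + o(1), \qquad \gamma_{(-1;a,b)} = \frac{(a+1)\,\expect[|I_\vp(a+1,b)|]}{k\,\expect[|I_\vp|]} + o(1). \]
Symmetry of every $\vp\in\Phi$ gives $|I_\vp(a,b)| = |I_\vp(b,a)|$ for all $a,b\geq 0$, and hence $\gamma_{(1;a,b)} = \gamma_{(-1;b,a)} + o(1)$ for every pair. The map $(a,b) \mapsto (b,a)$ is an involution on $\{(a,b) : a,b\geq 0,\ a+b=k-1\}$, so summing over this index set yields
\[ \gamma^+ = \sum_{a+b=k-1}\gamma_{(1;a,b)} = \sum_{a+b=k-1}\gamma_{(-1;b,a)} + o(1) = \gamma^- + o(1). \]

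Since every clause with an essential variable contributes that variable to exactly one of $\La^+, \La^-$, the two probabilities sum to $1$, and combined with the previous display this forces $\gamma^+ = \gamma^- = \hf + o(1)$. There is no real obstacle beyond a careful accounting of where the $o(1)$ terms enter: they arise only from the fact that the balance $|\La^+|,|\La^-|=\hf n + o(n)$ holds only up to lower order, which is why the weights given to type $(1;a,b)$ and type $(-1;b,a)$ in the planted model match exactly up to an additive $o(1)$ rather than exactly. No deeper combinatorial input is required beyond the symmetry $|I_\vp(a,b)|=|I_\vp(b,a)|$ already in hand.
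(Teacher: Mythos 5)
Your proof is correct and follows the same route as the paper: you use the two explicit type probabilities $\gamma_{(1;a,b)}$ and $\gamma_{(-1;a,b)}$ derived just above the lemma, apply the symmetry $|I_\varphi(a,b)|=|I_\varphi(b,a)|$ to pair up $\gamma_{(1;a,b)}$ with $\gamma_{(-1;b,a)}$, and sum over the (finitely many) types via the involution $(a,b)\mapsto(b,a)$ together with $\gamma^++\gamma^-=1$. Your accounting of where the $o(1)$ errors enter, namely from the $\hf n + o(n)$ balance of $\La^\pm$ feeding into the type weights, is exactly what the paper has in mind.
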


We close this section by showing that $r_f(\U)<r_p(\U)$ for sufficiently large $k$.

\begin{proposition}\label{p27}
For any nontrivial, symmetric, feasible, balance-dominated, 1 essential CSP model $\U$ of arity $k$:
\begin{enumerate}
\item[(a)] For every $k\geq 27$, $r_p(\U)>r_f(\U)$.
\item[(b)] Asymptotically in $k$, $\frac{r_f(\U)}{r_p(\U)}\lesssim \frac{\ln k}{k}$.
\end{enumerate}
\end{proposition}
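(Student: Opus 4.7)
The plan is to combine the lower bound $r_p(\U) \ge 0.25/\Omega_p(\U)$ from Theorem~\ref{tomega} with the identity $r_f(\U) = \alpha_k/(k\,\Omega_f(\U))$, which is obtained by substituting $\xi(\U) = k\Omega_f(\U)$ (from Lemma~\ref{lxi} and the derivation in Appendix~\ref{a2}) into the definition $r_f(\U) = \alpha_k/\xi(\U)$ given in Section~\ref{sstar}. Taking the ratio yields
\[
\frac{r_f(\U)}{r_p(\U)} \;\le\; \frac{4\alpha_k}{k}\cdot\frac{\Omega_p(\U)}{\Omega_f(\U)}.
\]

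The next step is to bound $\Omega_p(\U)/\Omega_f(\U)$. Because $\U$ is feasible and 1-essential, the observation recorded at the start of Appendix~\ref{a1} gives $|I_\varphi| \le 2^k/(\binom{k}{2}+1)$ for every $\varphi \in \Phi$, so every $|S_\varphi|$ is trapped in the narrow interval $[(1-\rho_k)2^k,\,2^k]$ with $\rho_k := 1/(\binom{k}{2}+1)$. Hence $\Omega_p = \mathbf{E}[|I_\varphi|/|S_\varphi|] \le \mathbf{E}|I_\varphi|/((1-\rho_k)2^k)$ while $\Omega_f = \mathbf{E}|I_\varphi|/\mathbf{E}|S_\varphi| \ge \mathbf{E}|I_\varphi|/2^k$, so $\Omega_p/\Omega_f \le 1/(1-\rho_k) = 1+O(1/k^2)$, and therefore
\[
\frac{r_f(\U)}{r_p(\U)} \;\le\; \frac{4\alpha_k}{k(1-\rho_k)}.
\]

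For part (b), it now suffices to show $\alpha_k = O(\ln k)$. Plugging $x_k = \ln k + \ln\ln k$ into the definition $\alpha_k = \inf_{x>0} x/(1-e^{-x})^{k-1}$ gives $e^{-x_k} = 1/(k\ln k)$, so $(1-e^{-x_k})^{k-1} \ge 1 - (k-1)/(k\ln k) = 1 - o(1)$, yielding $\alpha_k \le (1+o(1))\ln k$. Combined with the previous display and $\rho_k \to 0$ we obtain $r_f(\U)/r_p(\U) \le (4+o(1))\ln k/k$, which is (b).

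Part (a) reduces to verifying $4\alpha_k < k(1-\rho_k)$ for every $k \ge 27$, which I would do in two stages. For the base case $k = 27$, evaluating the defining function at a single point such as $x = 5$ gives $5/(1-e^{-5})^{26} \approx 5.96$, so $\alpha_{27} < 6 < 27(1-\rho_{27})/4 \approx 6.73$. For $k \ge 28$ I would use the closed-form upper bound $\alpha_k \le (\ln k + \ln\ln k)/(1-1/(k\ln k))^{k-1}$ and check by elementary calculus that this stays below $k(1-\rho_k)/4$: the upper bound grows as $\ln k$ while the target grows linearly, so a short monotonicity argument past some explicit $k_0$ together with direct substitution for the finitely many values $28 \le k \le k_0$ handles all cases. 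The main obstacle is the numerical constant $27$: the prefactor $0.25$ in Theorem~\ref{tomega} is a uniform worst-case bound and is typically far from tight for individual models (for hypergraph $2$-colouring the threshold is expected to come down to $k \ge 6$, as remarked after Lemma~\ref{tptp}), so sharpening the numeric threshold further would require revisiting the second-moment computation that produces the $0.25$.
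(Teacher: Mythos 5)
Your proposal follows the same route as the paper: express $r_f = \alpha_k/(k\Omega_f)$, invoke Theorem~\ref{tomega}'s lower bound $r_p \ge 0.25/\Omega_p$, and bound $\Omega_p/\Omega_f \le 1/(1-\rho_k)$ with $\rho_k = 1/(\binom{k}{2}+1)$ via $|I_\varphi|\le 2^k\rho_k$, reducing both parts to an upper bound on $\alpha_k$. The one place you genuinely diverge from the paper is the choice of test point for $\alpha_k = \inf_{x>0} x/(1-e^{-x})^{k-1}$: you take $x = \ln k + \ln\ln k$ (and $x=5$ for the base case), whereas the paper takes $x = 2\ln k$, giving $\alpha_k \le 2\ln k/(1-1/k^2)^{k-1}$. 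Your choice is sharper — it yields $\alpha_k \le (1+o(1))\ln k$ versus the paper's $(2+o(1))\ln k$, and hence a constant of $4$ rather than $8$ in part~(b). More significantly, your tighter base-case estimate $\alpha_{27} < 6$ actually does satisfy $\alpha_{27} < (27/4)(1-\rho_{27}) \approx 6.73$, whereas the paper's claimed inequality $2\ln 27 / \bigl(27(1-1/729)^{26}\bigr) \le (1/4)(1-1/352)$ fails by a hair (the left side is $\approx 0.2530$ while the right is $\approx 0.2493$); with the paper's coarser bound on $\alpha_k$, one would need $k\ge 28$. So your version of the computation is correct where the paper's as written is slightly off, though the proposition itself is true at $k=27$. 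The remaining gap in your write-up is that the monotonicity check for $28 \le k \le k_0$ is left as an outline — but the paper is no more explicit on that point, and the argument is routine.
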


\begin{proof}
Notice first that 
\begin{eqnarray*}
\Omega_p = \expect\left[\frac{|I_{\varphi}|}{|S_{\varphi}|}\right]
\leq \frac{\expect[|I_{\varphi}|]}{2^k(1-\frac{1}{\binom{k}{2}+1})}
\leq  \frac{\expect[|I_{\varphi}|]}{(1-\frac{1}{\binom{k}{2}+1})\expect[|S_{\varphi}|]}
=\frac{\Omega_f}{(1-\frac{1}{\binom{k}{2}+1})}.
\end{eqnarray*}
Notice also that $\alpha_k\leq \frac{2\ln(k)}{(1-1/k^2)^{k-1}}$. Therefore, since 
\[
\frac{2\ln(k)}{ k(1-1/k^2)^{k-1}}\leq (1/4)(1-\frac{1}{\binom{k}{2}+1})
\]
for $k\geq 27$, then
\[
r_f(\U)\leq \frac{2\ln(k)}{\Omega_f k(1-1/k^2)^{k-1}}\leq \frac{(1/4)}{\Omega_p}\leq r_p(\U), 
\]
{by Theorem~\ref{tomega}.}
Then, part (a) follows. To prove part (b) {we use the previous inequality, so that}
\[
\frac{r_f(\U)}{r_p(\U)}\leq \frac{8\ln(k)}{k(1-\frac{1}{\binom{k}{2}+1})(1-1/k^2)^{k-1}}\sim 8 \ln(k)/k
\]
\end{proof}

\section{The *-core}\label{asc}

\begin{lemma}\label{lstarfr} Consider any 1-essential CSP-model $\U=(\Phi,p)$ of arity $k$,
and a random CSP, $F$, drawn from $P(\U,n,M=rn)$. Suppose $\G(F,\s)$ has $\a n +o(n)$ hyperedges with $\a\neq\a_k$. For any $g(n)=o(n)$ and constant $\e>0$, there exist constants $T,Z,\b>0$ such that, with probability at least $1-e^{-g(n)}$:
\begin{enumerate}
\item[(a)] All but $o(n)$ vertices of the *-core of  $\G(F,\s)$ are $\b n$-frozen variables for $(F,\s)$.
\item[(b)] All but at most $\e n$ vertices outside the *-core of  $\G(F,\s)$ are either (i) not $T$-frozen variables for $(F,\s)$ or (ii) within distance $Z$ from a cycle of length at most $Z$.
\end{enumerate}
\end{lemma}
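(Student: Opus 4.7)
My plan is to handle parts (a) and (b) separately, using the frameworks set up earlier in the paper.

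For part (b), I would invoke Lemma~\ref{ldepth} with parameter $\e/2$ to obtain a constant $L$ such that, with probability at least $1 - e^{-g(n)}$, all but $\e n /2$ vertices outside the *-core of $\Gamma(F,\s)$ have *-depth at most $L$. Fix such a vertex $x$ together with a peeling chain $x_1, \ldots, x_\ell = x$ of depth at most $L$, and let $W(x)$ be the subhypergraph of $\Gamma(F,\s)$ formed by the hyperedges incident with some chain vertex. When $W(x)$ is acyclic, an induction on $\ell$ shows the chain variables can be flipped one at a time, each intermediate being a satisfying assignment of $F$: in the induction step, whenever $x_i$ is essential in a hyperedge $f$ the peeling-chain property forces $f$ to contain some earlier $x_j$, and the 1-essential assumption (which forces pairwise Hamming distance at least $3$ in $I_f$) ensures that flipping $x_i$ and $x_j$ jointly keeps $f$ satisfied. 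In particular $x$ is flipped along a 1-path, so $x$ is not 1-frozen and case (i) holds with $T = 1$. The remaining non-*-core vertices either have *-depth larger than $L$ or have $W(x)$ containing a cycle; the latter forces a cycle of length at most $2L$ within distance $L$ of $x$ in $\Gamma(F,\s)$, so setting $Z = 2L$ yields case (ii).

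For part (a), by Proposition~\ref{pflip} it suffices to prove that for suitable $\phi(n), \phi'(n) = o(n)$ and a constant $\zeta > 0$, with probability at least $1 - e^{-g(n)}$ there is no flippable set of $\Gamma(F,\s)$ of size in $[\phi'(n), \zeta n]$, and at most $o(n)$ vertices lie in flippable sets of size at most $\phi'(n)$. Granting this, fix $0 < \b < \zeta$ small enough that $\phi'(n) + \b n < \zeta n$ for large $n$, and consider any $\b n$-path $\s = \s_0, \ldots, \s_t$ of solutions. Let $D_i = \{v : \s_i(v) \neq \s(v)\}$. By Proposition~\ref{pflip} each $D_i$ intersected with the *-core is a flippable set in $\Gamma(F,\s)$, and since $|D_{i+1} \cap \text{*-core}|$ differs from $|D_i \cap \text{*-core}|$ by at most $\b n$ while starting at $0$, an induction on $i$ confines it to $[0, \phi'(n)]$ throughout. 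Hence every non-$\b n$-frozen *-core vertex lies in some flippable set of size at most $\phi'(n)$, and there are only $o(n)$ such vertices by assumption.

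The remaining work, and the main obstacle, is the first-moment calculation underpinning these two statements. Following Section~\ref{schop}, I would introduce weakly flippable sets and bound their enumeration in the planted hypergraph model. The expected number of weakly flippable sets of size $a$ should decay exponentially in $n$ for $a \in [\phi(n), \zeta n]$, with cyclic variants damped by a further $\Theta(1/n)$ factor per closing edge and thus negligible. To bound the vertex mass in small flippable sets, one exploits that any flippable set of size at most $\phi'(n)$ arises from a weakly flippable kernel of size at most $\phi(n)$ by appending one-path extensions, and the degree inequality $(k-1)\,\expect[|H_1|] < |\text{*-core}|$ prevents long essential chains, making the total expected mass $o(n)$. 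These estimates must be uniform in $a$, use the planting symmetry between $\Lambda^+$ and $\Lambda^-$ supplied by Lemma~\ref{lxx}, and rely on the hypothesis $\alpha \neq \alpha_k$ to stay strictly away from the critical regime where the non-branching estimate would degenerate.
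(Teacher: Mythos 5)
Your part (b) argument is essentially the paper's: Lemma~\ref{ldepth} bounds the number of vertices of large *-depth, and for a vertex $x$ of small *-depth, if the hyperedges touching its peeling chain contain no cycle then the chain variables can be flipped one at a time (using the pairwise-Hamming-distance-$\geq 3$ property of feasible 1-essential constraints to keep every constraint satisfied at each step), so $x$ is not $1$-frozen and (i) holds with $T=1$; otherwise $x$ is within distance $L$ of a cycle of length $\leq 2L$, so (ii) holds with $Z=2L$. This matches.

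For part (a) there is a genuine gap. You reduce to two facts: (1) no flippable set of size in $[\phi'(n),\zeta n]$, and (2) at most $o(n)$ vertices lie in flippable sets of size $\leq \phi'(n)$. Fact (1) matches the paper's Lemmas~\ref{labc},~\ref{lem:weakly},~\ref{lem:cyclic},~\ref{lem:closure}. But fact (2) is not a separate first-moment estimate in the paper; it falls out of (1) together with the closure-under-union property of flippable sets (Proposition~\ref{prop:flippable}(b)): given (1), a short induction shows that the union of \emph{all} flippable sets of size $<\zeta n$ is itself a flippable set that stays below $\zeta n$ and hence has size $<\phi'(n)$, so the total vertex mass is automatically $<\phi'(n)=o(n)$. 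Your proposed alternative — bounding the vertex mass by summing closures of weakly-flippable kernels — would need to control a union over all kernels of size $\leq 2\phi(n)$; since there can be $\Theta(n)$ kernels of bounded size, each contributing a bounded closure, a direct first-moment argument of the kind you sketch gives $\Theta(n)$, not $o(n)$. Without invoking that unions of flippable sets are flippable, this step does not close. (A small additional nit: the expected number of weakly flippable sets of size $a$ decays exponentially in $a$, not in $n$; this is what lets you sum over $a\geq\phi(n)\gg g(n)$ to obtain the $e^{-g(n)}$ failure probability.)
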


This  yields  Theorem~\ref{mt}:

{\em Proof of Theorem~\ref{mt}:} Consider $(F,\s)$ drawn from the uniform model $U(\U,n,M=rn)$.  A simple first moment calculation shows that the expected number of variables  that are within distance $Z$ of a cycle of length at most $Z$ in the underlying hypergraph of $F$ is $O(1)$.  Therefore w.h.p. there are $o(n)$ such vertices.

For part (b): If $r>r_f(\U)$ then $\a>\a_k$.  Consider any $\e>0$.  Lemma~\ref{tptp} allows us to transfer  Lemmas~\ref{lHp},~\ref{lstarfr},~\ref{lxi} to $(F,\s)$ to establish that w.h.p.\ all but at most $\e n$ variables are
either $T$-frozen with respect to $\s$ or are within distance $Z$ of a cycle of length at most $Z$.  W.h.p.\ there are $o(n)$ variables of the latter type, and so all but at most $\e n+o(n)$
vertices are $T$-frozen.  By letting $T$ tend to infinity we can take $\e$ arbitrarily small
thus obtaining part (b).

For part (a):  If $r>r_f(\U)$ then $\a<\a_k$. Again, we transfer Lemmas~\ref{lHp},~\ref{lstarfr},~\ref{lxi} to $(F,\s)$. This shows that w.h.p. all but $o(n)$ of the vertices of the *-core are frozen.  The same
argument as for part (b) shows that w.h.p. all but $o(n)$ of the vertices outside of the *-core
are frozen. Part (a) follows since $\l(\U,r)=\r_k(\xi(\U) r)=\r_k(\a)$ and w.h.p.\ the size of the *-core
is $\r_k(\a)n+o(n)$.
\proofend

Lemma~\ref{lstarfr}(a) is proven in Section~\ref{sec:stable}. Lemma~\ref{lstarfr}(b) follows
from Lemma~\ref{ldepth} as follows:

{\em Proof of Lemma~\ref{lstarfr}(b):} Consider any $\e>0$. 
With probability at least $1-e^{-g(n)}$, $\G(F,\s)$ has fewer than $\e n$ vertices of *-depth greater than $L$, where $L$ comes from Lemma~\ref{ldepth}.  Consider any vertex $x$ of *-depth at most $L$.  Consider a peeling chain for $x$
of depth at most $L$ and let $W$ be the set of all hyperedges  that contain at least one vertex of the peeling chain.  

If some hyperedges of $W$ form a cycle, then there must be a cycle of length at most $2L$ within distance $L$ of $x$.   
If no hyperedges of $W$ form a cycle, then it is easy to see that we can change all of the variables in the peeling chain, one-at-a-time and still have a satisfying assignment for $F$.  Indeed, this follows from a straightforward induction on $L$. Therefore, the variable $x$ is not $1$-frozen.
\proofend

\subsection{Our hypergraph models}\label{shm}

 Consider any 1-essential CSP, $F$, and any solution $\s$.

The vertices of $\G(F,\s)$ are partitioned into two sets $\La^+,\La^-$ containing those variables which are assigned $+1,-1$ respectively under $\s$.

\begin{definition} For each hyperedge $e\in \G(F,\s)$:
Let $a$ be the number of non-essential vertices of $e$ in $\La^+$ and let $b$ be the number of non-essential vertices of $e$ in $\La^-$.  The {\em type} of $e$ is defined to be:
\begin{itemize}
%\item $(0,a,b)$ if $e$ has no essential vertex;
\item $(1,a,b)$ if  the essential vertex vertex of $e$ is in $\La^+$;
\item $(-1,a,b)$, if  the essential vertex vertex of $e$ is in $\La^-$.
\end{itemize}
The {\em type} of a constraint of $(F,\s)$ with an essential vertex, is the
type of the corresponding hyperedge in $\G(F,\s)$.
\end{definition}

Now consider a nontrivial, feasible, symmetric, balance-dominated, 1-essential  CSP-model $\U$ and choose a random $(F,\s)$
from the planted model $P(\U,n,M)$.  
Recalling the Remark following Definition~\ref{dplant}, we can selected the constraints of $F$ independently. Given the partition $\La^+,\La^-$,
and a type $\t$, we let $w(\t)=w(\t,\La^+,\La^-)$ denote the probability that a selected constraint has type $\t$,
conditional on it having an essential vertex. (See Appendix~\ref{a2} for further discussion.)  Note that $w(\t)$ depends only on $\U,|\La^+|,|\La^-|$.   Note further that, conditional on a hyperedge $e$ having type $\t$, every choice of the
vertices of $e$ which is consistent with $\t$ is equally likely.  Thus, when choosing $\G(F,\s)$
we can choose the type of a hyperedge first and then its vertices.  This leads us to:

\noindent{\bf Model A:}
\begin{enumerate}
\item Partition the vertices into $\La^+,\La^-$ uniformly at random.
\item For $i=1$ to $M$, choose the $i$th hyperedge $e_i$ as follows:
\begin{enumerate}
\item Choose the type $(s,a,b)$ of $e_i$ (where $s\in\{+1,-1\}$), where  type $\t$ is chosen
with probability $w(\t)$.
\item Choose the essential vertex for $e_i$ uniformly from the appropriate set, $\La^+$ or $\La^-$, according to $s$.
\item Choose $a$ vertices uniformly from $\La^+$ and $b$ vertices uniformly from $\La^-$.
These are the non-essential vertices of $e_i$.
\end{enumerate} 
\end{enumerate}

In some cases, it will be useful to fix the essential vertex of every hyperedge, along with the assignment $\s$, and then choose our planted hypergraph.  In this case, for $s\in\{-1,+1\}$, we use $w^s(\t)=w(\t,\La^+,\La^-)$ denote the probability that a selected constraint has type $\t$,
conditional on it having an essential vertex in $\La^s$.
We can use the following model:

\noindent{\bf The Essential Model:}
\begin{enumerate}
\item We are given a partition the vertices into $\La^+,\La^-$.
\item For $i=1$ to $M$, we are given the essential vertex of $e_i$.
We choose the rest of $e_i$ as follows:
\begin{enumerate}
\item Choose the type $(s,a,b)$ of $e_i$, where  $s$ is already determined and type $\t$ is chosen
with probability $w^s(\t)$.
\item Choose $a$ vertices uniformly from $\La^+$ and $b$ vertices uniformly from $\La^-$.
These are the non-essential vertices of $e_i$.
\end{enumerate} 
\end{enumerate}

The essential model will be useful in analyzing the *-core of $\G(F,\s)$.

We let $H_1$ denote the set of vertices $v\in H^*$  that
are essential in exactly one hyperedge.   We use $H_1^+,H_1^-$ to denote $H_1\cap \La^+,H_1\cap \La^-$, the vertices of $H_1$ corresponding to variables assigned $+1,-1$ by $\s$.
The following  lemma will be key in proving that
most of $H^*$ is frozen:

\begin{lemma}\label{lbranch}  If $\U$ is non-trivial, feasible, symmetric, and balance-dominated and if $\a>\a_k$ then there exists $\g=\g(\U,\a)>0$ such that:
for any $g(n)=o(n)$, with probability at least $1-e^{-g(n)}$,
\begin{enumerate}
\item[(a)] $|V(H^*)\cap \La^+|,|V(H^*)\cap \La^-| =|V(H^*)|(\hf+o(1))$;
\item[(b)] $|H^+_1|,| H^-_1|\leq \frac{\hf-\g}{k-1}|V(H^*)|.$
\end{enumerate}
\end{lemma}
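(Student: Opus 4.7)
The plan for part (a) is to exploit the symmetry of $\U$. Because $\vp(x)=\vp(-x)$ for every $\vp\in\Phi$, the planted model is invariant under the involution $\s\mapsto-\s$: the conditional distribution of $F$ given $\s$ agrees with that given $-\s$, and essentiality is unchanged so $\G(F,\s)=\G(F,-\s)$. Under the involution $\La^+$ and $\La^-$ are swapped, so the pair $\bigl(|V(H^*)\cap\La^+|,|V(H^*)\cap\La^-|\bigr)$ is exchangeable and both halves share the common mean $\hf\,\mathbf{E}[|V(H^*)|]$. Exponential concentration of each half with the same $1-e^{-g(n)}$ bound delivered by Lemma~\ref{lHp} follows by re-running that lemma's peeling-process concentration argument while tracking $\La^+$- and $\La^-$-contributions separately; combining this with Lemma~\ref{lHp} yields (a).

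For part (b), the idea is to extract the full surviving-degree distribution from the analysis underlying Lemma~\ref{lHp}. In the essential model each vertex initially receives $\poi(\a)$ essential hyperedges; the standard configuration-model fixed-point argument says that if $x$ denotes the limiting Poisson mean of essential hyperedges at a surviving vertex, then a hyperedge survives iff all $k-1$ of its non-essential vertices do, giving the self-consistency $x=\a(1-e^{-x})^{k-1}$. This pins $x=x_k(\a)$ and leaves each surviving vertex with $\poi(x)$ essential hyperedges conditioned on being at least $1$. Using the same exponential concentration machinery that proves Lemma~\ref{lHp},
\[
\frac{|H_1|}{|V(H^*)|}=\frac{xe^{-x}}{1-e^{-x}}+o(1)=\frac{x}{e^x-1}+o(1),
\]
and by part (a) this mass splits evenly between $H_1^+$ and $H_1^-$. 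To finish, set $h(x):=x/(e^x-1)$; an elementary check gives $h'(x)=-(xe^x-e^x+1)/(e^x-1)^2<0$ on $(0,\infty)$ (the numerator vanishes at $0$ and has derivative $xe^x>0$), so $h$ is strictly decreasing. The first-order optimality condition for the minimizer $x^*$ of $x/(1-e^{-x})^{k-1}$ rearranges to $h(x^*)=1/(k-1)$; since $\a>\a_k$ forces $x=x_k(\a)>x^*$, we obtain $h(x)<1/(k-1)$ strictly, and taking $\g:=\hf\bigl(1-(k-1)h(x)\bigr)>0$ delivers the bound in (b).

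\paragraph{Main obstacle.} The chief technical hurdle is upgrading Lemma~\ref{lHp}'s exponential concentration from the size of the *-core to the full surviving-degree profile, and in particular to $|H_1^\pm|$. This is a standard but nontrivial extension of the peeling analysis --- most cleanly handled via the Poisson cloning framework of Kim~\cite{jhk}, or by a coordinate-wise martingale tracking the per-degree counts during the peeling --- but care is needed to preserve the $1-e^{-g(n)}$ probability guarantee uniformly over the bounded range of degrees relevant to $H_1$, and to show that exchangeability plus marginal concentration suffices (rather than a direct symmetric coupling) for the $\La^+/\La^-$ split in part (a).
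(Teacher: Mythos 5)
Your proposal is correct in substance and tracks the paper's proof closely, so I will just flag the points of agreement and the one small slip.

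The paper proves (a) and (b) as corollaries of its Lemma~\ref{lcore} (which gives $|V(H^*)\cap\La^\pm|=\tfrac12\r n+o(n)$ and $|H_1^\pm|=\tfrac12\l e^{-\l}n+o(n)$ with probability $1-e^{-g(n)}$, via the peeling process run for $I$ parallel rounds, Azuma concentration in Lemma~\ref{lconc}, and a branching-process cleanup) together with Lemma~\ref{lbr} (the elementary calculus fact $\l e^{-\l}<(1-\g)\r/(k-1)$ for $\a>\a_k$). Your plan matches this exactly: the Poisson fixed point $x=\a(1-e^{-x})^{k-1}$, the identity $|H_1|/|V(H^*)|\to xe^{-x}/(1-e^{-x})=x/(e^x-1)=:h(x)$, the observation that $h$ is strictly decreasing with $h(x^*)=1/(k-1)$ at the minimizer $x^*$, and the fact that $\a>\a_k$ forces $x_k(\a)>x^*$ are all precisely what appears in Lemmas~\ref{lcore} and~\ref{lbr} (the paper works with $1/h(x)=(e^x-1)/x$ being increasing; same content). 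Your ``main obstacle'' paragraph correctly identifies where the real work is; the paper handles it via Azuma after $I$ parallel peeling rounds followed by a branching-process argument, which is the standard alternative to the Poisson-cloning route you mention.

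The one place where you diverge is the way you establish that the $\La^+/\La^-$ masses have equal means: you invoke the global involution $\s\mapsto-\s$, under which the planted distribution and $\G(F,\s)$ are invariant while $\La^+,\La^-$ swap, giving exchangeability outright. The paper instead computes this directly at the level of clause types (Lemma~\ref{lxx}: a random clause with an essential vertex has that vertex in $\La^+$ or $\La^-$ with probability $\tfrac12+o(1)$ each). Your involution argument is cleaner and a bit more robust, though it must still be paired with the same concentration machinery --- exchangeability alone gives equal means, not the $\tfrac12+o(1)$ split; you note this, and it is fine.

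One small slip: taking $\g:=\tfrac12\bigl(1-(k-1)h(x)\bigr)$ makes $\frac{\frac12-\g}{k-1}$ exactly equal to $\tfrac12 h(x)$, leaving no room for the $o(1)$ error in the concentration estimate. You need $\g$ strictly smaller, e.g.\ $\g:=\tfrac14\bigl(1-(k-1)h(x)\bigr)$, which is how Lemma~\ref{lbr} is set up in the paper (it proves the strict inequality $\l e^{-\l}<(1-\g)\r/(k-1)$, so that the $o(1)$ can be absorbed). Otherwise your argument is essentially the paper's.
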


The proof appears in Appendix~\ref{a3}.

We close this section with:

{\em Proof of Theorem~\ref{mt}:} Since $r<r_f(\U)$, w.h.p. the *-core is empty.
During the proof of Lemma~\ref{lconc} in Appendix~\ref{a3},
we prove that for $D$ sufficiently large,  with probability at least $1-e^{-g(n)}$, fewer than $\e n$ vertices are within distance $L$ of a vertex with degree greater than $D$.  It follows that for all but at most $\e n$ vertices of depth at most $I$, the size of their peeling chain is at most $(kD)^I=O(1)$.  We can change any such variable by changing a subset
of the entire peeling chain in one step. So, applying Lemma~\ref{ldepth},
we see that for all but  $2\e n$ variables $v$,  we can change $v$ by changing at most $(kD)^I$
variables.  

We use Lemma~\ref{tptp} to show that this holds
w.h.p.\ in the uniform model.  Then by taking $D$ arbitrarily large and $\e$ arbitarily small,we obtain the theorem.
\proofend
\section{Analysis of the *-core process}\label{a3}

Recall that $\U$ is a non-trivial, feasible, symmetric,  balance-dominated, and 1-essential CSP-model, and that we draw $(F,\s)$ from the planted model.

Let $H$ denote the hypergraph $\G(F,\s)$.  $H$ has $M=\a n$ edges.  We will analyze the *-core process on $H$ (recall Section~\ref{sstar}).  We follow the analysis of \cite{mmcore},  being careful to obtain a failure
probability of at most $e^{-g(n)}$ for any $g(n)=o(n)$;  alternatively, we could have followed
the analysis of \cite{jhk}.

Recall that $\La^+,\La^-$ denotes the sets of vertices corresponding to variables of sign $+1,-1$ in $\s$.  We can assume that $|\La^+|,|\La^-|=\hf n+o(n)$, as this occurs with probability $1-e^{-g(n)}$ for any $g(n)=o(n)$.

Let $H(0)=H$ and define $H(i+1)$ to be the hypergraph obtained by removing every vertex in $H(i)$ that is not essential for any hyperedges, along with all hyperedges in which that vertex is non-essential.  We call this operation a {\em parallel round} of the *-core process.  We begin by
analyzing $H(i)$ for constant $i$, using Model A from section~\ref{shm}.

We let $\r^+_i,\r^-_i$ denote the probability that a vertex $v\in\La^+,\La^-$ survives the
$i$ parallel rounds; i.e. $\pr(v\in H(i))$. Initially $\r^+_0=\r^-_0=1$; it will
follow by induction that $\r^+_i=\r^-_i+o(1)$.  So we will recursively define $\r_i$
and show that $\r^+_i=\r^-_i=\r_i+o(1)$.

Consider any vertex $v$.  Note that $v\in H(i+1)$ iff there is at least one hyperedge $f$ in which $v$ is the essential vertex and every non-essential vertex is in $H(i)$.
Lemma~\ref{lxx} implies the following
key property:

\begin{property}\label{pplus}
For every vertex $v$, the expected number of hyperedges in which $v$ is essential is $\a+o(1)$.
\end{property}

Consider any hyperedge $e$ in which $v$ is the essential vertex.  Let the other vertices
be $u_1,...,u_{k-1}$.  By induction, $\pr(u_j\in H(i))=\r_i+o(1)$ for each $1\leq j\leq k-1$. 
W.h.p.\ $F$ is locally tree-like; in particular $v$ does not lie within distance $i$ of a cycle of length at most $2i$.  From this, it is straightforward to show that these  $k-1$ events are nearly independent, and so $\pr(u_1,...,u_{k-1}\in H(i))=\r_i^{k-1}+o(1)$.  Furthermore, Property~\ref{pplus} and the fact that w.h.p. $v$ does not lie near a short cycle imply that the expected number of hyperedges in which $v$ is essential and all non-essential vertices are in $H(i)$ is $\l_i+o(n)$ where
\[\l_i=\a\r_i^{k-1}.\]
A similar straightforward expected number calculation shows that for any $t>0$, the expected number
of $t$-tuples of such hyperedges is $\l_i^t+o(1)$; again, the key point is that if there are no nearby short
cycles, then the hyperedges occur nearly independently.  So the Method of Moments 
(see  e.g.\ \cite{jlr}) implies that the number of such hyperedges is distributed asymptotically as a Poisson.
In particular, the probability that there is at least one is $\r_{i+1}+o(1)$ where
\[\r_{i+1}=1-e^{-\l_{i}}=1-e^{-\a\r_i^{k-1}}.\]
In other words $\r^+_{i+1},\r^-_{i+1}=\r_{i+1}+o(1)$, thus completing the induction.
We define
\begin{itemize}
\item $X^+_i,X^-_i$ is the number of vertices of $\La^+,\La^-$ in $H(i)$;
\item $Y^+_i,Y^-_i$ is the number of hyperedges in $H(i)$ whose essential vertex is in $\La^+,\La^-$;
\item $A^+_i,A^-_i$ is the number of vertices of $\La^+,\La^-$ in $H(i)$ that are not essential in any hyperedges of $H(i)$;
\item $B^+_i,B^-_i$ is the number of vertices of $\La^+,\La^-$  that are essential in exactly one hyperedge of $H(i)$.
\end{itemize}
By the above calculations, $\ex(X^+_i),\ex(X^-_i)=\hf\r_i n+o(n)$.  Since every hyperedge has exactly one
essential variable, those calculations yield $\ex(Y^+_i),\ex(Y^-_i)=\hf\l_i n+o(n)$.   $A^+_i,A^-_i$ count the vertices
that are in $H(i)$ but not in $H(i+1)$; so $\ex(A^+_i),\ex(A^-_i)=\hf(\r_i-\r_{i+1})n+o(n)$.  Since the number
of edges in which $v$ is essential is asymptotic to a Poisson with mean $\l_i$, $\ex(B^+_i),\ex(B^-_i)=\hf\l_ie^{-\l_i}n+o(n)$.  We will prove below that these variables are all highly concentrated.

\begin{lemma}\label{lconc}
For any fixed $i\geq 0$, and any constant $\e>0$, there exists $\eta=\eta(\e,\a,i,\U)$:
\begin{enumerate}
\item[(a)] $\pr(|X^+_i- \hf\r_i n|>\e n)<e^{-\eta n}$, $\pr(|X^-_i- \hf\r_i n|>\e n)<e^{-\eta n}$
\item[(b)] $\pr(|Y^+_i- \hf\l_i n|>\e n)<e^{-\eta n}$, $\pr(|Y^-_i- \hf\l_i n|>\e n)<e^{-\eta n}$;
\item[(c)] $\pr(|A^+_i- \hf(\r_i-\r_{i+1})n|>\e n)<e^{-\eta n}$, $\pr(|A^-_i- \hf(\r_i-\r_{i+1})n|>\e n)<e^{-\eta n}$;
\item[(d)] $\pr(|B^+_i- \hf\l_ie^{-\l_i} n|>\e n)<e^{-\eta n}$,$\pr(|B^-_i- \hf\l_ie^{-\l_i} n|>\e n)<e^{-\eta n}$.
\end{enumerate}
\end{lemma}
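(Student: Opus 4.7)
The first-moment calculations performed in the preceding paragraphs already compute all eight expectations; the remaining task is to upgrade these estimates to exponential concentration. My plan is to apply an edge-exposure Doob martingale together with a degree-truncation argument, exploiting the fact that whether a vertex $v$ lies in $H(i)$ is determined entirely by the sub-hypergraph on the radius-$i$ ball around $v$. Thus, for any fixed $i$, each of $X^{\pm}_i, Y^{\pm}_i, A^{\pm}_i, B^{\pm}_i$ is a sum of indicators, each depending only on constant-radius local structure.

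Using Model A of Section~\ref{shm}, I would reveal the $M=\a n$ hyperedges of $H$ one at a time, and let $\mathcal{F}_j$ denote the $\sigma$-algebra generated by the first $j$ hyperedges. Setting $Z_j:=\ex(X^+_i\mid\mathcal{F}_j)$ produces a Doob martingale with $Z_0=\ex(X^+_i)$ and $Z_M=X^+_i$. The increment $|Z_{j+1}-Z_j|$ is bounded by the number of vertices whose membership in $H(i)$ can be affected by resampling the $j$-th hyperedge, which in turn is bounded by the number of vertices within distance $i$ of $e_j$ in $H$; in a hypergraph of maximum degree $\Delta$ this is $O((k\Delta)^i)$.

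The main obstacle is that the maximum degree of $H$ is typically $\Theta(\log n/\log\log n)$, not bounded, so raw Azuma-Hoeffding will give only polynomial concentration. To circumvent this, I would first establish the auxiliary statement invoked in the deduction of Theorem~\ref{mt}(b): for every $\e>0$ there exist a constant $D=D(\e,\a,i,k)$ and $\eta_1>0$ such that, with probability at least $1-e^{-\eta_1 n}$, the number of vertices of $H$ of degree exceeding $D$ is at most $\frac{\e}{2(kD)^i}n$. This follows from a Chernoff bound applied to the count of such vertices, using that each vertex's degree is $\mathrm{Bin}(kM,1/n)$-distributed and that $\pr(\mathrm{Bin}(kM,1/n)>D)$ decays super-exponentially in $D$. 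Let $U$ be the set of vertices within distance $i$ of some vertex of degree $>D$; on the good event $\mathcal{G}$, $|U|\le(kD)^i\cdot\frac{\e}{2(kD)^i}n=\frac{\e}{2}n$.

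I would then introduce the truncated variable $\tilde X^+_i$, defined as the number of $v\in\La^+\cap H(i)$ whose radius-$i$ ball in $H$ contains no vertex of degree exceeding $D$. On $\mathcal{G}$ we have $|X^+_i-\tilde X^+_i|\le|U|\le\frac{\e}{2}n$, and the truncated variable is Lipschitz with constant $O((kD)^i)$ under edge exposure, since resampling $e_j$ either changes no vertex counted by $\tilde X^+_i$ or affects only $O((kD)^i)$ of them (the truncation excludes precisely those vertices whose neighborhoods are polluted by high-degree vertices). Applying Azuma-Hoeffding in the presence of the rare bad event $\mathcal{G}^c$, either by stopping the martingale at the first time $\mathcal{G}$ is violated or by invoking a typical-bounded-differences inequality, should yield $\pr(|X^+_i-\ex(X^+_i)|>\e n)\le e^{-\eta n}$ for some $\eta=\eta(\e,\a,i,\U)>0$, establishing part (a). Parts (b), (c), (d) follow by the identical argument, since $Y^{\pm}_i, A^{\pm}_i, B^{\pm}_i$ are also sums of indicators determined by radius-$(i+1)$ local structure, and the same truncation and martingale setup applies verbatim with possibly a larger constant $D$. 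The main technical difficulty throughout is the degree-truncation step; the rest is routine application of Azuma.
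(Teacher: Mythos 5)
Your overall strategy — truncating on degree, applying Azuma to the truncated variable, and bounding the truncation error — is the same as the paper's, but the step controlling $|U|$ has a concrete gap. You deduce that, on the event $\mathcal{G}$ that there are at most $\frac{\e}{2(kD)^i}n$ vertices of degree exceeding $D$, one has $|U|\le(kD)^i\cdot\frac{\e}{2(kD)^i}n=\frac{\e}{2}n$. This does not follow: a vertex $u$ of degree $d>D$ has up to $(k-1)d$ neighbours, so even restricting the remaining $i-1$ steps to vertices of degree at most $D$, its radius-$i$ ball can contain up to $O\bigl(d\,((k-1)D)^{i-1}\bigr)$ vertices, and $d$ is unbounded on $\mathcal{G}$. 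Indeed a single vertex of degree $\omega(D)$ already contributes many more than $(kD)^i$ vertices to $U$, so knowing only the \emph{count} of high-degree vertices does not control $|U|$.

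The paper's proof handles exactly this point differently: it shows that $|U|=|\Psi_D|$ is \emph{itself} Lipschitz with constant $O((kD)^i)$ under resampling of a single hyperedge — the key observation being that any path from $v$ to an old or new vertex of the resampled hyperedge that passes through a vertex of degree exceeding $D$ leaves the membership of $v$ in $\Psi_D$ unchanged, so only low-degree paths matter when bounding the change — and then applies Azuma to $|\Psi_D|$ together with the first-moment bound $\ex|\Psi_D|<\frac13\e n$ (obtained by taking $D$ large). To repair your version you would need to do the same, or alternatively establish exponential concentration of $\sum_{u:\deg(u)>D}\deg(u)$ rather than of the count alone. As a minor additional remark: since your truncated variable $\tilde X^+_i$ (which coincides with the paper's $X^+_i(D)$) is Lipschitz with constant $O((kD)^i)$ for \emph{every} realization of $H$, not just on $\mathcal{G}$, you do not need optional stopping or a typical-bounded-differences inequality at that step — vanilla Azuma already applies unconditionally.
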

We defer the proof to the end of this appendix.

We let $\r=\lim_{i\rightarrow\infty}\r_i$, which exists since $\r_i$ is positive and decreasing.  So $\r$ must satisfy $\r=1-e^{-\a\r^{k-1}}$.  Setting $\l=\lim_{i\rightarrow\infty}\l_i=\a\r^{k-1}$, we obtain:
\[\r=1-e^{-\l};\qquad\text{ so } \l=\a(1-e^{-\l})^{k-1};
\qquad \text{ so } \a=\frac{\l}{(1-e^{-\l})^{k-1}}.\]

We will prove:
\begin{lemma}\label{lcore}
 For any $g(n)=o(n)$,  with probability at least $1-e^{-g(n)}$:
\begin{enumerate}
\item[(a)]  If $\a< \a_k$ then the *-core of  $\G(F,\s)$ has $o(n)$ vertices.
\item[(b)  ]If $\a> \a_k$ then the *-core of  $\G(F,\s)$ has 
\begin{enumerate}
\item[(i)] $\hf\r n+o(n)$ vertices in each of $\La^+,\La^-$;  
\item[(ii)] $\hf\l n+o(n)$ hyperedges with essential vertices in each of $\La^+,\La^-$;
\item[(iii)] $\hf\l e^{-\l} n +o(n)$ vertices in each of $\La^+,\La^-$ that are essential in exactly one hyperedge. 
\end{enumerate}
\end{enumerate}
\end{lemma}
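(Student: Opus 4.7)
The plan is to leverage Lemma~\ref{lconc}, which gives exponential concentration of the round-$i$ counts $X^{\pm}_i, Y^{\pm}_i, A^{\pm}_i, B^{\pm}_i$ for each fixed $i$, and combine this with control of the peeling process as $i \to \infty$. First I would verify the limit behavior of the iteration $\rho_{i+1} = 1 - e^{-\alpha \rho_i^{k-1}}$: the sequence is monotone decreasing and bounded below by $0$, so it converges to the largest fixed point of $\phi(x) = 1 - e^{-\alpha x^{k-1}}$ in $[0,1]$. By the definition of $\alpha_k$, this limit equals $0$ when $\alpha < \alpha_k$ and equals the positive $\rho$ of Section~\ref{sstar} when $\alpha > \alpha_k$; in the latter regime $\phi'(\rho) < 1$, so the convergence is geometric, and likewise $\lambda_i \to \lambda$ and $\lambda_i e^{-\lambda_i} \to \lambda e^{-\lambda}$.

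For part (a), given $g(n) = o(n)$, fix $\epsilon > 0$ and choose a constant $i_0 = i_0(\epsilon)$ with $\rho_{i_0} < \epsilon$; Lemma~\ref{lconc}(a) gives $|V(H(i_0))| < 2\epsilon n$ with probability $\geq 1 - 2e^{-\eta n}$, where $\eta = \eta(\epsilon, \alpha, i_0, \U) > 0$ is a constant. Since $V(H^*) \subseteq V(H(i_0))$, letting $\epsilon = \epsilon_n \to 0$ slowly enough that $\eta(\epsilon_n, \alpha, i_0(\epsilon_n), \U)\, n \geq g(n) + 1$ still holds delivers $|V(H^*)| = o(n)$ with the required probability. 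The upper bound in (b)(i) is handled identically: for any fixed $\epsilon > 0$, pick a constant $i_0$ with $\rho_{i_0} < \rho + \epsilon$ and apply Lemma~\ref{lconc}(a) to conclude $|V(H^*) \cap \Lambda^{\pm}| \leq |V(H(i_0)) \cap \Lambda^{\pm}| \leq \tfrac{1}{2}(\rho + \epsilon)n + o(n)$ with probability $\geq 1 - e^{-\eta n}$.

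The main obstacle is the matching lower bound $|V(H^*) \cap \Lambda^{\pm}| \geq \tfrac{1}{2}(\rho - \epsilon) n$ with exponentially small failure probability: we must control $|V(H(i_0))| - |V(H^*)|$, the total number of vertices peeled after round $i_0$. My plan is a two-stage argument. Stage one: by Lemma~\ref{lconc}(a) applied at two constant rounds $i_0 < i_1$ with $\rho_{i_0}, \rho_{i_1}$ both within $\epsilon/8$ of $\rho$, the difference $|V(H(i_0))| - |V(H(i_1))|$ is at most $\epsilon n / 4$ with probability $\geq 1 - e^{-\eta n}$. Stage two: to rule out peeling of more than $\epsilon n/4$ further vertices past round $i_1$, I would bound the expected number of connected residual peelable substructures in $H(i_1)$ that would produce such a cascade. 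Because the local neighborhood statistics of $H(i_1)$ are approximately those of the Poisson fixed point $(\rho, \lambda)$ on a $k$-uniform random hypergraph (for $i_1$ large, the non-essential-vertex content is arbitrarily small), the expected number of such substructures of total size $\geq s$ decays exponentially in $s$. A first-moment union bound over the exponentially many potential configurations then yields the exponentially small failure probability. This local-structure analysis is in the spirit of the $k$-core arguments of \cite{mmcore} and the Poisson cloning approach of \cite{jhk}.

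Parts (b)(ii) and (b)(iii) follow by the same two-sided argument applied to $Y^{\pm}_i$ and $B^{\pm}_i$: once we know the *-core and $H(i_0)$ agree up to $o(n)$ vertices w.h.p., the convergence $\lambda_i \to \lambda$ and $\lambda_i e^{-\lambda_i} \to \lambda e^{-\lambda}$ combined with Lemma~\ref{lconc}(b),(d) yield the asserted edge count $\tfrac{1}{2}\lambda n + o(n)$ and degree-$1$ essential-vertex count $\tfrac{1}{2}\lambda e^{-\lambda} n + o(n)$ in each color class; the $o(n)$ further peelings past round $i_0$ perturb these counts by at most $O(k) \cdot o(n) = o(n)$.
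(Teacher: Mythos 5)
Your high-level framework is sound and matches the paper's outline: use Lemma~\ref{lconc} to get exponential concentration of the round-$i$ statistics for constant $i$, deduce parts (a) and the upper bounds in (b) directly, and then argue that the peeling process terminates without deleting many more vertices after round $I$. But your ``stage two'' is where the real work lives, and the sketch given there has a genuine gap.

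You claim that ``the expected number of such substructures of total size $\geq s$ decays exponentially in $s$'' and that a ``first-moment union bound over the exponentially many potential configurations'' finishes the job, but this claim is neither defined precisely nor justified, and it cannot be justified without the crucial \emph{subcriticality} estimate that the paper isolates as Lemma~\ref{lbr}: $\l e^{-\l} < (1-\g)\r/(k-1)$. This says that after many parallel rounds, the number of vertices in $H_1$ (those essential in exactly one hyperedge) per surviving vertex is strictly below $1/(k-1)$, which is precisely what makes the post-round-$I$ peeling a subcritical branching process. You gesture at ``Poisson fixed point'' local statistics, but a first-moment argument over substructures would still need this ratio bound to control the branching factor; without it, nothing prevents a long cascade. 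Moreover, a naive union bound over ``exponentially many configurations'' of linear size is dangerous: you would need to carefully match the entropy of the configuration count against the per-configuration probability, and the student does not set this up. The paper instead runs the sequential (one-vertex-at-a-time) peeling for at most $\tfrac{4\z}{\g}n$ steps, couples the number of new additions to $L$ at each step with independent Bernoulli trials of total success probability at most $1-\tfrac12\g$ (using Lemma~\ref{lbr}), and applies a Chernoff bound to the cumulative count -- a cleaner and fully quantitative argument. Finally, you omit the ``Observation'' that, conditional on the surviving vertex set and essential-vertex assignments after $I$ parallel rounds, $H(I)$ is distributed according to the Essential Model; without this re-randomization step, the independence needed for any stochastic domination or first-moment computation after round $I$ is unavailable.

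In short: parts (a) and the upper bounds in (b) go through as you describe, and (b)(ii), (b)(iii) follow as you say once (b)(i) is established, but the lower-bound cascade control is missing the two essential technical ingredients -- the subcriticality ratio of Lemma~\ref{lbr} and the Essential Model re-exposure -- and the first-moment/union-bound strategy you propose is not adequately specified to substitute for them.
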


Recalling that $\a_k=\inf_{x > 0}\frac{x}{(1-e^{-x})^{k-1}}$, we have that for $\a<\a_k, \r=0.$
For $\a>\a(k)$ we define $x_k(\a)$ to be the maximum $x>0$ such that 
$\a=\frac{x}{(1-e^{-x})^{k-1}}$.
It is straightforward to show that $x_k(\a)<1$, $\l=x_k(\a)$ and $\r=1-e^{-\l}$.
Thus Lemma~\ref{lcore} implies Lemma~\ref{lHp} and Lemma~\ref{lbranch}(a).
Also,  Lemmas~\ref{lconc},~\ref{lcore} imply Lemma~\ref{ldepth} as follows:

{\em Proof of Lemma~\ref{ldepth}:}  For any $\e>0$ we can choose $I$ such that $\r_I<\r+\e$.
The number of vertices outside of the *-core with *-depth greater than $I$ is $X_I^++X_I^-$ minus the size of the *-core, and hence is less than $\e n$.  This proves the lemma with $L=I$.
\proofend

We will require the following bound:

\begin{lemma}\label{lbr} For any $\a>\a_k$ there exists $\g>0$ such that
$\l e^{-\l}<(1-\g)\r/(k-1)$.
\end{lemma}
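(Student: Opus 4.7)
The plan is to reduce the claim to a statement about a single monotone function of $\l$. Substituting $\r = 1-e^{-\l}$, the desired inequality $\l e^{-\l} < (1-\g)\r/(k-1)$ is equivalent to
\[
f(\l) \;<\; 1-\g, \qquad \text{where } f(\l) := \frac{(k-1)\l e^{-\l}}{1-e^{-\l}} \;=\; \frac{(k-1)\l}{e^{\l}-1}.
\]
So it suffices to show $f(\l) < 1$ strictly, and then take $\g := (1-f(\l))/2$.

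The key step is to identify the critical point of $g(x) := x/(1-e^{-x})^{k-1}$, whose infimum is $\a_k$. Since $g(x)\to\infty$ as $x\to 0^+$ (for $k\ge 3$) and as $x\to\infty$, the infimum is attained at some interior $x^* > 0$. The stationary condition $(\ln g)'(x^*) = 0$ reads
\[
\frac{1}{x^*} \;=\; (k-1)\frac{e^{-x^*}}{1-e^{-x^*}},
\]
which rearranges exactly to $f(x^*)=1$. Thus the threshold value $x^*$ at which $g$ is minimized is also the value at which $f$ equals $1$.

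Next I would check that $f$ is strictly decreasing on $(0,\infty)$ by a short calculus computation: the numerator of $f'(\l)$ (up to the factor $k-1$) is $(e^\l-1)-\l e^\l$, which vanishes at $\l=0$ and has derivative $-\l e^\l < 0$ for $\l>0$, so it is negative on $(0,\infty)$.

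Finally, I would use the definition of $x_k(\a)$. Because $g$ has a unique interior minimum at $x^*$ and tends to $\infty$ at both ends, for each $\a>\a_k$ the equation $g(x)=\a$ has exactly two positive roots $x_1 < x^* < x_2$. By definition $\l=x_k(\a)$ is the \emph{maximum} such root, so $\l=x_2>x^*$. Strict monotonicity of $f$ then yields $f(\l)<f(x^*)=1$, and taking $\g:=(1-f(\l))/2>0$ completes the proof. There is no real obstacle here; the only point requiring care is confirming that $x_k(\a)$ lies to the \emph{right} of the critical point $x^*$, which is immediate from the shape of $g$ combined with the ``maximum root'' clause in the definition of $x_k(\a)$.
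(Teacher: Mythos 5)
Your proof is correct and follows essentially the same path as the paper's: both identify the stationarity condition of $g(x)=x/(1-e^{-x})^{k-1}$ as equivalent to $(k-1)xe^{-x}=1-e^{-x}$, establish that the relevant ratio (you use $f(\l)=(k-1)\l/(e^\l-1)$, the paper uses its reciprocal $(e^x-1)/x$ scaled by $1/(k-1)$) is strictly monotone, and conclude from $x_k(\a)$ being the rightmost root that strict inequality holds. You spell out the ``two-roots'' justification for $x_k(\a)>x^*$ a bit more explicitly than the paper's ``straightforward to check,'' but the argument is the same.
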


\begin{proof}  Let $x_1$ be the value of $x>0$ that minimizes $\frac{x}{(1-e^{-x})^{k-1}}$. It is straightforward to check that for $\a>\a_k$ we have $x_k(\a)>x_1$.
Differentiating, we see that 
\[(1-e^{-x_1})^{k-1}-(k-1)x_1e^{-x_1}(1-e^{-x_1})^{k-2}=0;
\qquad\mbox{ so } 1-e^{-x_1}=(k-1)x_1e^{-x_1}.\]
Clearly $\frac{1-e^{-x}}{xe^{-x}}=\inv{x}(e^x-1)=(1+\frac{x}{2}+...)$ is increasing with $x$.
So for $x>x_1$ we have $\frac{1-e^{-x}}{xe^{-x}}>k-1$ which yields the lemma since $\l=x_k(\a),\r=1-e^{-x_k(\a)}$.
\end{proof}

Note that Lemmas~\ref{lcore},~\ref{lbr} imply Lemma~\ref{lbranch}(b).

{\em Proof of Lemma~\ref{lcore}:}  We will choose a small constant  $\z>0$.
Lemma~\ref{lconc} implies that we can choose $I$ sufficiently large that, with probability at least
$1-e^{-g(n)}$:
\[(\hf\r-\z)n< X^+_I,X^-_I<(\hf\r+\z)n; \qquad Y^+_I,Y^-_I<(\hf\l+\z)n;\qquad
A^+_I,A^-_I<\hf\z n;\qquad B^+_I,B^-_I<(\hf\l e^{-\l}+\z)n.\]

Recall that the order in which vertices are removed during the *-core process does not affect the outcome.  So we can remove them as follows:  First, we carry out $I$ parallel rounds.  Then we remove vertices that are not essential in any edges one-at-a-time in arbitrary order; eg. we can pick one of the removable vertices uniformly at random, or we can choose the removable vertex with the lowest label.

After the $I$ parallel rounds, we expose the  vertices that remain, $W$, the number of hyperedges that remain, $Y_I$ and for each remaining hyperedge $f$ we expose its essential vertex, $\ess(f)$.  The following observation allows us to analyze 
$H(I)$ using the Essential Model (section~\ref{shm}).

{\bf Observation:} Consider any two hypergraphs $\Omega,\Omega'$ on the
same subset of the vertices of $H$, with edge set $\{e_1,...,e_{\ell}\}$ and $\{e'_1,...,e'_{\ell}\}$,
such that: for each $1\leq j\leq\ell$, the hyperedges $e_j,e'_j$ have the same type and the same essential vertex.  Then $\Omega,\Omega'$ are equally likely to be $H(I)$.

To see this, let $R$ be any
 hypergraph such that applying $I$ iterations of the parallel process to $R$ yields $\Omega$.  Form $R'$ from $R$ by replacing every edge of $R$ that is in $\Omega$ by the
corresponding edge from $\Omega'$. Then applying $I$ iterations of the parallel process to $R'$ will yield $\Omega'$.  Furthermore, $\pr(\G(F,\s)=R)=\pr(\G(F,\s))=R'$.

Note that this observation allows us to model $H(I)$ using the Essential Model.  So we expose the vertices of $H(I)$, and for each hyperedge $e\in H(I)$ we expose the essential vertex of $e$. 
We let $H_1$ denote the set of vertices that are essential in exactly one hyperedge; so $|H_1|=B_I^++B_I^-<(\l e^{-\l}+2\z)n$.

From here, the analysis is nearly identical to that from the proof of Lemma~\ref{lem:closure}.

Our first step will be to expose the type of every hyperedge; recall that we choose these types
independently and the probability that a hyperedge with essential vertex in $\La^s$ has type
$\t$ is $w^s(\t)$.  For each vertex $x\in H_1$,
if the type of the hyperedge in which $x$ is essential $(s,a,b)$ then we say $a(x)=a,b(x)=b$.  
We set $A=\sum_{x\in H_1}a(x)$ and $B=\sum_{x\in H_1}b(x)$. As in the proof of  Lemma~\ref{lem:closure},  with probability at least $1-e^{-g(n)}$ we have $A,B=|H_1|(\hf(k-1)+o(1))$.

As we remove vertices one-at-a-time from $H(I)$, we let $L$ denote the set of removable vertices that remain.  So initially, $|L|=A^+_I+A^-_I<\z n$.  At each step, we remove a vertex $w$ from $L$.  For each hyperedge $f$ containing $w$, if the essential vertex $\ess(f)$ is in $H_1$ then we add $\ess(f)$ to $L$. 

We carry out up to $\frac{4\z}{\g}n$ steps.  If we do not reach the *-core before that time, 
then we must have added  a total of at least $\frac{4\z}{\g}n-\z n$ vertices to $L$ during those steps.

To determine which  vertices are added to $L$ we expose the following information:
For each remaining hyperedge $f$, we ask whether $w$ is a non-essential vertex of $f$. If it is,
then we delete $f$ and place $\ess(f)$ into $L$ if $\ess(f)\in H_1$.  If
$w$ is not in $f$, then we do not expose  the non-essential vertices of $f$.

Suppose $w\in\La^+$.
As in the proof of Lemma~\ref{lem:closure}, it is easy to compute that the vertices of $H_1$
that will be added to $L$ are determined by  at most $H_1$ independent trials of total probability at most
\[\frac{A}{X_I^+-\frac{8\z}{\g}n}<\frac{(\hf\l e^{-\l}+\z)n}{(\hf\r-\z)n-\frac{8\z}{\g}n}
<1-\hf\g,\]
for $\z$ sufficiently small, by Lemma~\ref{lbr}.
 Similarly, if $w\in\La^-$ then we have at most $H_1$ independent trials of total probability at most $1-\hf\g$.

Summing over the first $\frac{4\z}{\g}n$ iterations, the total number of vertices added to $L$ is upperbounded in distribution by the sum of $\frac{4\z}{\g}n\times H_1$ independent trials, each with probability $\Theta(n^{-1})$ and with total expectation $\frac{4\z}{\g}n(1-\hf\g)=\frac{4\z}{\g}n-2\z n$.
Standard concentration results for binomial variables yield that the probability
that they total more than $\frac{4\z}{\g}n-\z n$ is at most $e^{-\d n}$
for some $\d>0$.

So for every $\z>0$, there exists $\d>0$ such that with probability at least $e^{-\d n}$, we halt within $\frac{4\z}{\g}n$ steps.  If we halt within that many steps then the number of
vertices of $\La^+$ that are in the *-core is between $X^+_I$ and $X^+_I-\frac{4\z}{\g}n$
and hence is within $\frac{5\z}{\g}n$ of $\hf\r n$.  Since we can take $\z$ arbitrarily small,
this implies that for any $g(n)=o(n)$, the number of such vertices is $\hf\r n+o(n)$ with probability at least $1-e^{-g(n)}$.  The same argument applies to the other parameters,
thus proving Lemma~\ref{lcore}.
\proofend

It only remains to prove our concentration lemma:

{\em Proof of Lemma~\ref{lconc}:}  We will apply Azuma's Inequality\cite{az} which implies (see eg. \cite{asp}) that for any random variable $Q=Q(H)=O(n)$, if changing the vertices of one of the $\a n$ hyperedges in $H$ can change $Q$ by at most an additive constant, then $\pr(|Q-\ex(Q)|>\e n)<e^{-\Theta(n)}$.

We start with the concentration of $X^+_i$.  Note that whether $v$ is counted in $X_i^+$ is
determined entirely by the subgraph induced by $N^i(v)$, the set of vertices within distance $i$ of $v$.  In an extreme case, changing a single hyperedge $f$ can affect $X_i^+$ by a lot,
if $f$ is within distance $i$ of many vertices.  So we fix a large constant $D$ and define:
\begin{eqnarray*}
\Psi_D&= &\mbox{the set of vertices that are within distance $i$ of a vertex $u$ of degree } > D,\\
X_i^+(D)&=&\mbox{the number of vertices of $\La^+\setminus \Psi_D$ that are in $H(i)$}.
\end{eqnarray*}
Changing a single hyperedge can affect $X_i^+(D)$ by at most $2k((k-1)D)^i=O(1)$. Indeed, if changing
the vertices of $f$ affects whether $v\in X^+_i$ then $v$ is connected to one of the old or new vertices of $f$ by a path of length at most $i$. If any vertex on a hyperedge of that path has degree greater than $D$ then $v\in\Psi_D$ and so $v$ will not count towards $X_i^+(D)$.  So each of the $2k$ old or new vertices of $f$ can affect at most $((k-1)D^i)$ vertices $v$.  Therefore, there exists $\eta_1=\eta_1(D,\e,k,i,\U)$ such that
\[\pr(|X_i^+(D)-\Exp(X_i^+(D))|>\inv{3}\e n)<e^{-\eta_1n}.\]
A standard property of random graphs (and indeed an easy calculation) yields that by taking $D$ sufficently large, we can make $\ex(|\Psi_D|)$ an arbitrarily small multiple of $n$.  (Roughly: the expected number of vertices $u$ of degree greater than $D$ drops exponentially in $D$ while the expected number of
vertices within distance $i$ of each such $u$ is linear in $D$, for fixed $i$.)  So we choose $D$  
such that $\ex(|\Psi_D|)<\inv{3}\e n$.

Next we show that $|\Psi_D|$ is concentrated.  A similar argument to that above shows that
changing the vertices of a single hyperedge $f$ can affect $|\Psi_D|$ by at most $2k((k-1)D)^i=O(1)$.  Indeed, if changing $f$ affects whether $v\in\Psi_D$ then  $v$ is connected to one of the old or new vertices of $f$ by a path of length at most $i$.  If any vertex on the hyperedges of that path has degree greater than $D$ then $v\in\Psi_D$ regardless of
the choice of $f$.  So each of the $2k$ old or new vertices of $f$ can affect at most $((k-1)D^i)$ vertices $v$.
Therefore, there exists $\eta_2=\eta_2(D,\e,k,i,\U)$ such that
\[\pr(|\Psi_D-\Exp(|\Psi_D|)|>\inv{6}\e n)<e^{-\eta_2n}.\]
Noting that $X_i^+(D)<X_i^+<X_i^+(D)+|\Psi_D|$ and applying linearity of expectation,
we have
\[\pr(|X_i^+-\Exp(X_i^+)|>\e n)<e^{-\eta_1n}+e^{-\eta_2n}<e^{-\eta n},\]
for any $\eta<\eta_1,\eta_2$.  The proof for the remaining parameters is nearly identical.
\proofend

We close this section by noting that by the same reasoning as for the Observation in the proof of Lemma~\ref{lcore},
we can model the *-core of $\G(F,\s)$ using the Essential Model.  We do so in section~\ref{sec:stable}.

\section{Frozen variables in the *-core} \label{sec:stable}
Here, we prove Lemma~\ref{lstarfr}(a).  Recall that $(F,\s)$ is drawn from $P(\U,n,M=rn)$
where $\U$ is symmetric and 1-essential.

$H^*$ is the *-core of $\G(F,\s)$, so every edge of $H^*$ has exactly one essential vertex
and every vertex is essential for at least one edge.   $H_1$ is the set of vertices
that are essential in exactly one hyperedge of $H^*$.
We need to show that, with sufficiently high probability, all but $o(n)$  vertices in $H^*$ are
$\b n$-frozen variables of $(F,\s)$.  

\begin{definition}
For each vertex $x\in H_1$, we use $e(x)$ to denote the unique hyperedge of $H^*$ in which $x$ is the essential vertex.
\end{definition}

\begin{definition}
 A \emph{flippable set} of $H^*$ is a set of vertices $S\subset H^*$ such that  for every $x\in S$ and for every hyperedge $f\in H^*$ in which $x$ is essential, $S$ contains another vertex of $f$.
\end{definition}

Note that, since every hyperedge of $H^*$ has exactly one essential variable, that other vertex is not essential for $f$.

Given two boolean assignments $\s,\s'$ to the variables of $F$, we let $\s\Delta\s'$ denote the
set of variables $x$ for which $\s(x)\neq\s'(x)$.

\begin{proposition}\label{prop:flippable}
\begin{enumerate}
\item[(a)]\label{prop:flippable1} If $\s'$ is any  solution of $F$, then $(\s\Delta\s')\cap H^*$ is a flippable set.
\item[(b)]\label{prop:flippable2} The union of any collection of flippable sets is a flippable set.
\end{enumerate}
\end{proposition}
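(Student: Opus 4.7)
The plan is to dispatch both parts by directly unpacking the definitions; part (a) is essentially a restatement of Proposition~\ref{pflip} (already established), and part (b) is purely set-theoretic. No new machinery is required.

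For part (a) I would set $S = (\s\Delta\s')\cap H^*$ and argue by contradiction, mirroring the proof of Proposition~\ref{pflip}. Suppose $S$ is not flippable: then there is some $x\in S$ and a hyperedge $f\in H^*$ in which $x$ is essential such that no other vertex of $f$ lies in $S$. Since $f$ is a hyperedge of the *-core $H^*$, every vertex of $f$ lies in $H^*$, so for any $v\in V(f)\setminus\{x\}$ the condition $v\notin S$ forces $\s(v)=\s'(v)$. Thus $\s$ and $\s'$ agree on $V(f)\setminus\{x\}$. By definition, essentiality of $x$ for $(\vp_f,\s)$ says that flipping only the value of $x$ in $\s$ makes $\vp_f$ unsatisfied; but that flipped restriction is precisely $\s'|_{V(f)}$, contradicting the assumption that $\s'$ satisfies $F$.

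For part (b), let $\{S_i\}_{i\in I}$ be any collection of flippable sets and put $S=\bigcup_i S_i$. Fix $x\in S$ and pick $i$ with $x\in S_i$. For every hyperedge $f\in H^*$ in which $x$ is essential, flippability of $S_i$ supplies a vertex $y\in S_i\cap V(f)$ with $y\neq x$; since $S_i\subseteq S$, this $y$ also lies in $S$, verifying the flippability condition at $x$. As $x\in S$ was arbitrary, $S$ is flippable.

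There is essentially no obstacle here: the argument is bookkeeping. The only subtlety, confined to part (a), is to remember that every vertex of a *-core hyperedge is automatically in $H^*$, which is what allows me to convert ``$v\notin S$'' into ``$\s(v)=\s'(v)$'' for the non-essential vertices of $f$. The value of the proposition lies not in its difficulty but in the bridge it provides for the rest of Section~\ref{sec:stable}: questions about pairs of solutions differing on many *-core vertices reduce to purely combinatorial questions about the existence of large flippable sets in the random hypergraph $H^*$, to which the first-moment arguments described in Section~\ref{schop} can be applied.
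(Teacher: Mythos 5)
Your proof is correct and follows essentially the same route as the paper's: part (a) is the same contradiction argument as Proposition~\ref{pflip} (you correctly note the small point that vertices of a *-core hyperedge lie in $H^*$, so ``$v\notin S$'' and ``$v\notin\s\Delta\s'$'' coincide on $V(f)$), and part (b) is the same immediate unpacking of the definition.
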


\begin{proof}
For (a):  if $(\s\Delta\s')\cap H^*$ is not a flippable set, then there is some $x\in(\s\Delta\s')\cap H^*$ and a hyperedge $f\in H^*$ such that $x$ is essential for $f$
and $\s\Delta\s'$ contains no other vertices of $f$.  Since $H^*\subset\G(F,\s)$, this means
that $x$ is essential for the constraint corresponding to $f$ in $(F,\s)$, and that $\s'$ changes
the value of $x$ but not of any other variables in $f$.  Therefore $\s'$ violates $f$ and so $\s'$ is not a solution for $F$.

For (b): this is immediate from the definition of a flippable set.
\end{proof}

To prove Lemma ~\ref{lstarfr}(a), we will show that there exists $\phi'(n)=o(n)$, $\zeta>0$ such that,
with sufficiently high probability,  there are no flippable sets $S$ in $H^*$ of size $\phi'(n)\leq|S|\leq \zeta n$. We will apply a first moment bound.  A direct approach does not work,
because of a ``jackpot phenomena":  The existence of a flippable set $S$ typically implies the existence
of an exponential number of other flippable sets formed by adding to $S$ variables  $x\in H_1$ with the property that $e(x)$ contains a member of $S$.  To overcome this issue, we focus instead on sets
with the following property.

\begin{definition}
We say that a set $A\subseteq H^*\setminus H_1$ is \emph{weakly flippable} if there exists $P\subseteq H_1$ such that $A\cup P$ is flippable.  $A$ is \emph{$\psi$-weakly flippable} if there exists such a $P$ with $|P|\leq\psi$.
\end{definition}

Given a flippable set $S\subseteq H^*$, we consider a directed graph $D(S)\subseteq D$. 
The vertices of $D(S)$ are the vertices of $S$; the edges of $D(S)$ are defined as follows:
\begin{itemize}
\item For each $x\in S\cap H_1$, we choose one other variable $x'\in e(x)$ that is in $S$,
and we add the  edge $x\edgw x'$ to  $D(S)$.
\end{itemize}
Note that, since $S$ is flippable, there is at least one such $x'$.  It is not important which one we choose,
but to be specific we could, eg., choose the lowest indexed variable from amongst all variables of $S$ (other than $x$) in $e(x)$.

Thus, every vertex in $D(S)$ has outdegree either 0 or 1. We define:
% reference?

\begin{itemize}
\item $A_S=S\setminus H_1$. Note that $A_S$ is the set of vertices with outdegree $0$ in $D(S)$. 
\item  $C_S$ is the set of all vertices on directed cycles of $D(S)$. Note that those directed cycles are disjoint
since the maximum outdegree is 1.
%\item $T_S$ is the graph that remains after the edges of $C_S$ are deleted from $D(S)$. $T_S$ consists of a directed subtree $T(x)$, for every $x\in A_S \cup C_S$. Note that these trees are disjoint, that each $T(x)$ is directed towards $x$, and that possibly $x$ is the only vertex of $T(x)$.
\end{itemize}
%Thus, $D(S)$ is the union of the (edge disjoint) subgraphs $A_S$, $C_S$ and $T_S$. 

Since the outdegree of every vertex outside of $A_S$ is one,  there is a directed path from every $x\in S\setminus(A_S\cup C_S)$ to $A_S\cup C_S$.

\begin{definition}
A set of vertices $x_{1},\ldots,x_{l}\in H_1$ is \emph{cyclic} if for some permutation $\pi \in \mathcal{S}_l$, $x_{{\pi(j)}}$ is in $e(x_j)$ for every $1\leq i\leq a$.
\end{definition}
\begin{definition}\label{dcl}
Given a set $A\subseteq H^*$, the \emph{closure} of $A$, $\cl{A}$ is the set of all vertices $x$ such that, either 
\begin{enumerate}
\item[(a)] $x\in A$, or 
\item[(b)]  $x \in H_1\setminus A$ and there is a sequence $x=x_0,x_1,...,x_{\ell}$ where (i) $x_{\ell}\in A$ and (ii) for all $i<\ell$: $x_i\in H_1\setminus A$ and
$x_{i+1}\in e(x_i)$.
\end{enumerate}
\end{definition}

\begin{proposition}\label{prop:flippableweak}
If $S$ is a flippable set, then:
\begin{enumerate}
\item[(a)] $A_S$ is weakly flippable. 
\item[(b)] $C_S$ is cyclic.
\item[(c)] $S \subseteq \cl{A_S \cup C_S}$.
\end{enumerate}
\end{proposition}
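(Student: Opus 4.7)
The plan is to unpack each of (a), (b), (c) directly from the construction of the auxiliary digraph $D(S)$, using the key structural fact that every vertex of $D(S)$ has outdegree $0$ or $1$, with outdegree $0$ precisely on $A_S$ (vertices in $S\setminus H_1$) and outdegree exactly $1$ on $S\cap H_1$. Since $S$ is flippable, each $x\in S\cap H_1$ has some other vertex of $e(x)$ in $S$, so the rule defining $D(S)$ does indeed produce a well-defined out-edge $x\edgw x'$ with $x'\in e(x)\cap S$.

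For (a), the desired $P$ is simply $P=S\cap H_1\subseteq H_1$: then $A_S\cup P=(S\setminus H_1)\cup(S\cap H_1)=S$, which is flippable by hypothesis. This also confirms that $A_S\subseteq H^*\setminus H_1$, as required by the definition of weakly flippable.

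For (b), note that any vertex on a directed cycle of $D(S)$ must have outdegree $\geq 1$, so $C_S\subseteq S\cap H_1$. Since every vertex of $D(S)$ has outdegree at most $1$, the cycles of $D(S)$ are vertex-disjoint and partition $C_S$ into a collection of cycles $y_1^{(k)}\edgw y_2^{(k)}\edgw\cdots\edgw y_{m_k}^{(k)}\edgw y_1^{(k)}$. By the definition of the edges of $D(S)$, each successor $y_{i+1}^{(k)}$ lies in $e(y_i^{(k)})$. Concatenating the cyclic permutations on each of these orbits yields a single permutation $\pi$ of $C_S$ with $x_{\pi(j)}\in e(x_j)$ for every $x_j\in C_S$, so $C_S$ is cyclic.

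For (c), pick any $x\in S$; we must show $x\in\cl{A_S\cup C_S}$. If $x\in A_S\cup C_S$ we are done. Otherwise $x\in S\cap H_1$ but $x\notin C_S$, so $x\in H_1\setminus(A_S\cup C_S)$. Follow the unique out-arc from $x$ to obtain $x=x_0\edgw x_1\edgw x_2\edgw\cdots$, which is well defined as long as $x_i\notin A_S$. Because $D(S)$ is finite, the walk must eventually either enter $A_S$ or revisit a vertex; if it revisited a vertex before entering $A_S\cup C_S$, the repeated portion would be a directed cycle in $D(S)$ and its vertices would lie in $C_S$, contradicting the fact that $x_0,x_1,\ldots$ have all avoided $C_S$ up to that point. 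Hence there is a least $\ell\geq 1$ with $x_\ell\in A_S\cup C_S$ and $x_i\in H_1\setminus(A_S\cup C_S)$ for $i<\ell$, while by construction $x_{i+1}\in e(x_i)$. This is exactly the chain required by Definition~\ref{dcl}, so $x\in\cl{A_S\cup C_S}$. The only mild care-point in the whole argument, and what I expect to be the one spot that needs a second reading, is this last paragraph: one must confirm that the intermediate $x_i$'s lie in $H_1\setminus(A_S\cup C_S)$ (rather than merely in $H_1\setminus A_S$) so that the chain literally matches the closure definition with $A=A_S\cup C_S$.
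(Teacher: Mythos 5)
Your proof is correct and follows the same approach as the paper's: $P = S\cap H_1$ for (a), the disjoint cycles of $D(S)$ assembling the permutation $\pi$ for (b), and following the unique out-arc walk from $x$ until it first hits $A_S\cup C_S$ for (c). The paper's own proof is terser (it simply asserts that the directed path from $x$ to $A_S\cup C_S$ gives the required chain); you supply the useful extra detail that the walk cannot revisit a vertex before entering $A_S\cup C_S$ (else that vertex would lie in $C_S$), and that the intermediate vertices lie in $H_1\setminus(A_S\cup C_S)$ as the closure definition demands — both of which are exactly right and worth spelling out.
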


\begin{proof}
(a) follows from the definition of weakly flippable, with $P=S\cap H_1$.  

(b) follows from the definition of cyclic, where the directed cycles of $D(S)$ form $\pi$.

For (c), if $x\in S\setminus (A_S\cup C_S)$, then $x\in H_1$ and the directed path from $x$
to $A_S\cup C_S$ in $D(S)$ indicates that $x$ satisfies condition (b) of Definition~\ref{dcl}.
\end{proof}

\begin{lemma}\label{labc}
Suppose that for some $\phi,\phi',\psi$ we have:
\begin{enumerate}
\item[(a)]\label{lem:cond1} There is no $\psi$-weakly flippable set $A\subseteq H^*\setminus H_1$ such that $ \phi<|A|<\psi$.
\item[(b)]\label{lem:cond2} There is no cyclic set $C$ such that $\phi<|C|<\psi$.
\item[(c)]\label{lem:cond3} There is no set $A$ such that $|A|\leq 2\phi$ and $|\cl{A}|> \phi'$.
\end{enumerate}
Then there is no flippable set $S\subseteq H^*$  such that  $\phi'<|S|<\psi$.
\end{lemma}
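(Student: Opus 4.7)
The plan is to proceed by contradiction: suppose there is a flippable set $S \subseteq H^*$ with $\phi' < |S| < \psi$, and then apply Proposition~\ref{prop:flippableweak} to pass from $S$ to the associated pair $(A_S, C_S)$, where $A_S = S \setminus H_1 \subseteq H^* \setminus H_1$ is weakly flippable, $C_S$ is cyclic, and $S \subseteq \cl{A_S \cup C_S}$. The argument is then a bookkeeping exercise that plays hypotheses (a), (b), (c) off against these three properties via size bounds.

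First I would dispose of the ``small union'' case. From $S \subseteq \cl{A_S \cup C_S}$ and $|S|>\phi'$ we get $|\cl{A_S \cup C_S}| > \phi'$. The contrapositive of (c) then forces $|A_S \cup C_S| > 2\phi$, so at least one of $|A_S|,|C_S|$ strictly exceeds $\phi$. Since $A_S, C_S \subseteq S$, both are at most $|S| < \psi$, so the set whose size exceeds $\phi$ actually lies in the open interval $(\phi,\psi)$.

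Now I would split into two subcases. If $|A_S| > \phi$, then Proposition~\ref{prop:flippableweak}(a) gives that $A_S$ is weakly flippable, with witness $P := S \cap H_1$, and $|P| \leq |S| < \psi$, so $A_S$ is in fact $\psi$-weakly flippable of size strictly between $\phi$ and $\psi$, contradicting (a). If instead $|C_S| > \phi$, then Proposition~\ref{prop:flippableweak}(b) yields that $C_S$ is cyclic of size strictly between $\phi$ and $\psi$, contradicting (b). Either way we reach a contradiction, proving that no such $S$ can exist.

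There is no real obstacle here beyond clean bookkeeping; the non-trivial modelling choice was already made earlier, in the definitions of $A_S$, $C_S$, and the closure operator, and in establishing Proposition~\ref{prop:flippableweak}. The one place to be careful is noting that the witness $P$ for weak flippability of $A_S$ automatically inherits the size bound $|P| \leq |S| < \psi$ from $S$, which is what upgrades ``weakly flippable'' to ``$\psi$-weakly flippable'' and lets hypothesis (a) actually bite. The reason the three hypotheses must be combined, rather than (a) and (b) alone sufficing, is precisely the possibility that both $|A_S|$ and $|C_S|$ are individually at most $\phi$ while their closure still absorbs the long ``one-path tails'' of $S$ inside $H_1$; hypothesis (c) is designed exactly to control this tail inflation.
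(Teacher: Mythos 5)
Your proof is correct and follows essentially the same route as the paper's: both apply Proposition~\ref{prop:flippableweak}, note that $|S\cap H_1|\le|S|<\psi$ upgrades weak flippability of $A_S$ to $\psi$-weak flippability, bound $|A_S|,|C_S|\le\phi$ via (a) and (b), and then invoke (c) on $A_S\cup C_S$. The only difference is presentational, as you argue by contradiction with a case split while the paper argues directly, but the underlying logic is identical.
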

\begin{proof} We apply Proposition~\ref{prop:flippableweak}. Let $S$ be a flippable set with $|S|<\psi$. Then $A_S$ is $\psi$-weakly flippable. Thus, by (a), $|A_S|\leq\phi$. Since $C_S$ is cyclic and $|C_S|\leq |S|< \psi$, (b) implies $|C_S|\leq \phi$. Therefore, $|A_S\cup C_S|\leq2\phi$, which by (c) implies that $|S|\leq|\cl{A_S\cup C_S}|\leq\phi'$. The lemma follows.
\end{proof}

The following lemmas establish that the conditions of Lemma~\ref{labc} hold with sufficiently high probability. 

\begin{lemma}\label{lem:weakly} There exists $\z=\z(\U,\a)>0$, and for any $g(n)=o(n)$, there exists $\phi(n)$ satisfying
$g(n)<<\phi(n)=o(n)$  such that:\\
The probability that there is a $(\z n)$-weakly flippable set $A$ of $H^*$ with $ \phi(n)<|A|<\z n$ is at most $e^{-g(n)}$.
\end{lemma}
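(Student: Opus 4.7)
The plan is a first moment argument over pairs $(A, P)$ with $A \subseteq W := V(H^*) \setminus H_1$ of size $a \in (\phi(n), \z n)$, $P \subseteq H_1$ of size $p \leq \z n$, and $A \cup P$ flippable (note that for every $A$ that is $(\z n)$-weakly flippable there is at least one such pair, so bounding the expected number of pairs suffices). First I condition on the high-probability event, of probability at least $1 - e^{-g(n)}$ by Lemmas~\ref{lcore} and~\ref{lbranch}, that $|V(H^*)| = \r n + o(n)$, that $|V(H^*) \cap \La^{\pm}| = \hf \r n + o(n)$, and, crucially, that $|H_1| \leq \tfrac{1-\g}{k-1} |V(H^*)|$ for the constant $\g > 0$ given by Lemma~\ref{lbr}. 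This conditioning costs at most $e^{-g(n)}$, so it suffices to control the conditional expectation and apply Markov.

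In the Essential Model for $H^*$, each *-core hyperedge has its essential vertex fixed and its $k-1$ non-essential vertices drawn independently and nearly uniformly from the appropriate sign classes of $V(H^*)$. For $A \cup P$ to be flippable, each of the at least $2a + p$ hyperedges whose essential vertex lies in $A \cup P$ (at least two per $A$-vertex since $A \subseteq W$, and exactly one per $P$-vertex) must contain another vertex of $A \cup P$ among its non-essential slots. Independence across hyperedges, together with a per-slot union bound, yields
\[
\pr[A \cup P \text{ flippable}] \;\leq\; q^{2a+p}, \qquad q \;:=\; \frac{2(k-1)(a+p)}{|V(H^*)|}.
\]
Multiplying by $\binom{|W|}{a}\binom{|H_1|}{p}$ and summing gives the target estimate. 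In the regime $\phi(n) < a \leq \sqrt{n}$ with $p \le a$, the factor $\binom{|W|}{a} q^{2a}$ reduces to essentially $(Ca/n)^a$ (each of the $2a$ hyperedges supplying a factor $\Theta(a/n)$ beyond the enumeration), which is at most $\exp(-\phi(n)\log(n/\phi(n)))$; choosing $\phi(n) := g(n)\log\log n$ gives $g(n) \ll \phi(n) = o(n)$ and makes this $\leq e^{-g(n)}$.

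The main obstacle is the regime where $p$ is a constant fraction of $n$: the factor $\binom{|H_1|}{p} q^p$ then contributes a base of size $2e(k-1)|H_1|(a+p)/(p \r n)$, which by Lemma~\ref{lbr} is bounded only by $2e(1-\g)(1 + a/p)$, and this exceeds $1$ unless $\g$ is large, so the naive union bound over all $P$'s is too lossy. To close the argument I restrict to canonical minimal witnesses: each $p \in P$ is required (removing it breaks flippability), which forces $P$ to decompose as a disjoint union of chains rooted at the $A$-vertices and propagating through $H_1$ via the relations $p' \in e(p)$. The branching factor of such chains is exactly $(k-1)|H_1|/|V(H^*)| \leq 1-\g < 1$ by Lemma~\ref{lbr}, so the generating function of chain configurations is subcritical: configurations of total length $p$ are geometrically suppressed by a factor of the form $(1-\g)^p$, and since there are only $O(a)$ seed hyperedges available at $A$ (using $D_A = O(a)$ on the good event, perhaps at the cost of discarding an $o(1)$ fraction of high-degree vertices), the resulting enumeration over chains contributes $\exp(-\Omega(a))$. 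Combining with the small-$a$ estimate and summing over $(a, p)$ with $\phi(n) < a < \z n$ and $p \le \z n$ gives the required bound $e^{-g(n)}$, for $\z$ sufficiently small depending on $\U$ and $\a$.
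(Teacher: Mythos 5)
Your overall strategy is the right one and matches the paper's: a first moment bound over chain-like witnesses, made to work by the subcritical branching factor $(k-1)|H_1|/|V(H^*)|<1-\g$ coming from Lemma~\ref{lbr}/\ref{lbranch}. You also correctly diagnose the central difficulty — the naive union bound over $(A,P)$ fails once $|P|$ can be linear in $n$, because $\binom{|H_1|}{p}q^p$ has a base of order $2e(1-\g)>1$. That diagnosis is exactly what forces the paper's more structured argument.

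However, the proposed fix has genuine gaps. The claim that choosing a minimal $P$ forces ``$P$ to decompose as a disjoint union of chains rooted at the $A$-vertices'' is not correct. Minimality only tells you that each $p\in P$ has at least one ``parent'' $v\in A\cup P$ with $p\in e(v)$ for which $p$ is the unique member of $(A\cup P)\setminus\{v\}$ in $e(v)$; nothing prevents two chains from merging into a shared suffix, a vertex from having several parents, or a directed cycle sitting entirely inside $P$ with no edge back to $A$. The paper avoids this by \emph{not} invoking minimality at all: it takes the $2a$ lowest-indexed hyperedges whose essential vertex is in $A$, picks for each one an arbitrary $H_1$-path back to $A$, and then \emph{truncates} the paths so that their $H_1$-portions are pairwise disjoint and each truncated path is allowed to end either in $A$ or at a vertex of a previously exposed path. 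This truncated, disjoint structure is what makes the first moment over $\ell=\sum l_j$ clean. Your sketch has no analogue of this truncation step, and without it the counting of witnesses is not controlled.

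Two further points worth flagging. First, the claim that ``the resulting enumeration over chains contributes $\exp(-\Omega(a))$'' misattributes where the exponential gain actually comes from. The chain enumeration together with the $(1-\g)^\ell$ suppression merely gives a convergent geometric sum — a factor $Y^{O(a)}$ for some constant $Y$ — and the $\exp(-\Omega(a))$ decay is produced only after combining this with $\binom{n}{a}$ and the probability factors $\Theta((a+\ell)/n)^{2a}$ to get $(Ca/n)^a\leq(C\z)^a$, which is then forced below $2^{-a}$ by choosing $\z$ small. Second, the branching-factor computation needs to respect the sign classes $\La^+,\La^-$: in the Essential Model the non-essential slots of a hyperedge are drawn from specified sign classes, not uniformly from $V(H^*)$, which is why the paper sums over sign ``patterns'' $\theta_j$ and uses the identity $\sum_y\binom{l_j+1}{y}g^y(k-1-g)^{l_j+1-y}=(k-1)^{l_j+1}$ to recover the clean factor $(k-1)^\ell$; quoting the crude bound $(k-1)|H_1|/|V(H^*)|\leq 1-\g$ is correct in spirit but skips the accounting that makes the estimate rigorous. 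Finally, the restriction to $a\leq\sqrt{n}$ in your small-$p$ regime is unnecessary — the bound $(Ca/n)^a$ is available for all $a\leq\z n$ once $p=O(a)$.
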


%\mblue{I think we can replace ``$(\z n)$-weakly flippable" by "weakly flippable"!!!}

\begin{lemma}\label{lem:cyclic} There exists $\z=\z(\U,\a)>0$, and for any $g(n)=o(n)$, there exists $\phi(n)$ satisfying
$g(n)<<\phi(n)=o(n)$ such that:\\
The probability that there is a cyclic set $C$ in $H^*$ with $ \phi(n)<|C|<\z n$ is at most $e^{-g(n)}$.
\end{lemma}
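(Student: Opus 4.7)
The plan is a first-moment argument. By Lemma~\ref{lcore} we may condition on the event that $|V(H^*)\cap\Lambda^{\pm}|=(\rho/2)n+o(n)$ and $|H_1^{\pm}|=(\tfrac12\lambda e^{-\lambda})n+o(n)$, which holds with probability at least $1-e^{-g(n)}$. The Observation in the proof of Lemma~\ref{lcore} then lets us sample $H^*$ via the Essential Model: given the vertex set, the essential vertices, and the types of all hyperedges, the non-essential slots are filled uniformly and independently from the matching side of $V(H^*)$.

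For a candidate cyclic set $S\subseteq H_1$ of size $\ell$ with a fixed-point-free permutation $\pi$, independence of the $H^*$-edges yields
\[
\Pr[x_{\pi(j)}\in e(x_j)\ \text{for all}\ j]\;=\;\prod_{j}\frac{Q_{s_{x_j},\,s_{x_{\pi(j)}}}}{|V(H^*)\cap\Lambda^{s_{x_{\pi(j)}}}|},
\]
where $Q_{s,s'}$ is the expected number of non-essential vertices of sign $s'$ in an $H^*$-edge whose essential vertex has sign $s$. Since $\U$ is symmetric, $Q=\left(\begin{smallmatrix}\bar a&\bar b\\ \bar b&\bar a\end{smallmatrix}\right)$ with $\bar a+\bar b=k-1$, so its eigenvalues are $k-1$ and $\bar a-\bar b$. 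Summing over ordered $\ell$-tuples of distinct vertices (and dividing by $\ell!$), then over sign patterns, the sign sum factors along the cycles of $\pi$ as $\prod_{i}\mathrm{tr}(Q^{\ell_i})\leq 2^{m(\pi)}(k-1)^\ell$, where $m(\pi)$ is the number of cycles of $\pi$.

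The decisive input is Lemma~\ref{lbr}, which provides $\mu:=\lambda e^{-\lambda}/\rho\leq(1-\gamma)/(k-1)$ for some $\gamma>0$. Combining this with the signless-Stirling identity $\sum_{m}2^{m}c(\ell,m)=(\ell+1)!$ (obtained by evaluating $\sum_{m}c(\ell,m)x^{m}=x(x+1)\cdots(x+\ell-1)$ at $x=2$), the first-moment computation reduces to
\[
\mathbf{E}\bigl[\#\{\text{cyclic sets of size }\ell\}\bigr]\;\leq\;(\ell+1)\bigl(1-\tfrac{\gamma}{2}\bigr)^\ell
\]
for $n$ large. Choosing $\phi(n)=\lceil Cg(n)\rceil$ with $C$ a constant chosen large enough in terms of $\gamma$ (still $o(n)$), the sum over $\ell\in[\phi(n),\zeta n]$ is at most $e^{-g(n)}$, and Markov's inequality, together with a union bound against the conditioning event, completes the proof.

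The main obstacle is handling the sign asymmetry cleanly. For highly asymmetric constraint distributions $|\bar a-\bar b|$ can be close to $k-1$, so each cycle of $\pi$ contributes a trace-bound factor of $2$, producing an apparent factor $(2(1-\gamma))^\ell$ that is insufficient when $\gamma<\tfrac12$. The Stirling identity absorbs the aggregate $2^{m(\pi)}$ into only a polynomial-in-$\ell$ loss, which is comfortably defeated by the $(1-\gamma)^\ell$ contraction supplied by Lemma~\ref{lbr}.
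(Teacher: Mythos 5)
Your proposal is correct, and it takes a genuinely cleaner route than the paper's. The paper splits cyclic sets into \emph{small-cyclic} (every cycle length at most $Z$) and \emph{large-cyclic} (every cycle length greater than $Z$), handling the former by counting short cycles in $\G(F,\s)$ (which are rare) and the latter by the first-moment bound you describe but restricted to permutations $\pi$ with $c(\pi)<a/Z$, so that $2^{c(\pi)}<2^{a/Z}$ can be absorbed into the $(1-\g)^a$ contraction by taking $Z$ large. Your observation that $\sum_{\pi\in\mathcal{S}_\ell}2^{c(\pi)}=(\ell+1)!$ (so that $\frac{1}{\ell!}\sum_\pi 2^{c(\pi)}=\ell+1$ is only polynomial in $\ell$) eliminates the need for this split altogether: even though an individual $\pi$ with all 2-cycles contributes $2^{\ell/2}$, the aggregate over $\pi$ is benign, and restricting to fixed-point-free $\pi$ only helps. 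The trace-along-cycles computation $\sum_\theta\prod_jQ_{s_j,s_{\pi(j)}}=\prod_i\mathrm{tr}(Q^{\ell_i})\leq 2^{c(\pi)}(k-1)^\ell$ matches the paper's per-$\pi$ bound, and the use of Lemma~\ref{lbr} for the $(1-\g)/(k-1)$ contraction is the same as the paper (via Lemma~\ref{lbranch}(b)). So what your argument buys is a unified, one-case first-moment proof, at the modest cost of a polynomial factor $\ell+1$ that the geometric decay easily absorbs.

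One small thing to fix: the lemma asks for $\phi(n)$ with $g(n)\ll\phi(n)$, i.e.\ $\phi(n)/g(n)\to\infty$, but $\phi(n)=\lceil Cg(n)\rceil$ gives only $\phi(n)=\Theta(g(n))$. This is cosmetic; taking e.g.\ $\phi(n)=g(n)\log(n/g(n))$ or $\phi(n)=\sqrt{g(n)n}$ satisfies both $g(n)\ll\phi(n)$ and $\phi(n)=o(n)$, and the tail $\sum_{\ell\geq\phi(n)}(\ell+1)(1-\g/2)^\ell=O\bigl(\phi(n)(1-\g/2)^{\phi(n)}\bigr)\leq e^{-g(n)}$ holds with even more room. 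Also, as you implicitly assume, $Q$ is only symmetric up to an additive $o(1)$ (coming from $|H^*\cap\La^+|=|H^*\cap\La^-|(1+o(1))$, so $w^{+1}(1,a,b)=w^{-1}(-1,b,a)+o(1)$); this perturbs the eigenvalues by $o(1)$, which is harmlessly absorbed into the $(1+o(1))^\ell$ factor you already carry.
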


%\mblue{I think  we can remove the upper bound $|C|<\z n$!!!!}

\begin{lemma}\label{lem:closure} There exists $\z=\z(\U,\a)>0$, and for any $\phi(n)=o(n)$, there exists 
$\phi'(n)=o(n)$ such that:
The probability that there is a  set $A\subset H^*$ with $|A|<2\phi(n) $ and $|\cl{A}|>\phi'(n)$ is at most $e^{-\phi'(n)}$.
\end{lemma}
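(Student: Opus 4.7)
The plan is to view computing $\cl{A}$ as a sub-critical breadth-first exploration, combining a per-$A$ branching-process tail bound with a union bound over the candidate sets $A$; the sub-criticality comes directly from Lemma~\ref{lbranch}(b). Condition on the event $\mathcal{G}$ of Lemma~\ref{lbranch} that $|V(H^*)\cap\La^{\pm}|=(\hf+o(1))|V(H^*)|$ and $|H_1|\leq\frac{1-2\gamma}{k-1}|V(H^*)|$ for some constant $\gamma=\gamma(\U,\a)>0$, which has $\pr(\mathcal{G}^c)\leq e^{-g(n)}$ for every $g(n)=o(n)$. Under this conditioning, by the same Observation as in the proof of Lemma~\ref{lcore}, the non-essential vertices of the *-core follow the Essential Model of Section~\ref{shm}; in particular, each of the $(k-1)|H_1|$ non-essential slots is nearly uniform in $\La^+$ or $\La^-$ depending on its type, and by symmetry of $\U$ together with Lemma~\ref{lxx}, roughly half of these slots are of each type.

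\textbf{BFS and expected growth.} For $A\subseteq V(H^*)$ with $|A|\leq 2\phi(n)$, compute $\cl{A}$ via $W_0=A$ and $W_{i+1}=W_i\cup\{x\in H_1\setminus W_i:e(x)\cap W_i\neq\emptyset\}$, exposing the non-essential vertices of each $e(x)$ only when first queried. Writing $\Delta_i:=|W_i\setminus W_{i-1}|$ for the frontier size, a slot-level union bound gives
\[
\Exp\bigl[\Delta_{i+1}\bigm|W_i\bigr]\leq(1+o(1))\frac{(k-1)|H_1|}{n}\Delta_i\leq(1-\gamma)\Delta_i
\]
as long as $|W_i|=o(n)$, since $(k-1)|H_1|/n\leq(k-1)|H_1|/|V(H^*)|\leq 1-2\gamma$. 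Thus the frontier shrinks geometrically in expectation and $\Exp|\cl{A}|\leq|A|/\gamma$.

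\textbf{Concentration and union bound.} Couple the growth of $|\cl{A}|$ with the total progeny $T$ of a Galton--Watson tree in which each individual has $\mathrm{Bin}(|H_1|,k/n)$ offspring (mean $\leq 1-\gamma$), started from $|A|$ independent roots. Standard exponential-moment bounds for the total progeny of a sub-critical branching process (cf.\ \cite{jlr}) yield $\pr(T\geq t)\leq e^{-c(t-2|A|/\gamma)}$ for some $c=c(\gamma,k)>0$ and all $t\geq 2|A|/\gamma$. Choose
\[
\phi'(n):=K\bigl(\phi(n)\log(n/\phi(n))+\log^2 n\bigr),
\]
which is $o(n)$ because $\phi(n)=o(n)$ forces $\phi(n)\log(n/\phi(n))=o(n)$. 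Using $\binom{n}{2\phi(n)}\leq\exp\bigl(2\phi(n)\log(en/(2\phi(n)))\bigr)$, a union bound over $A$ gives
\[
\pr\bigl(\exists A:|A|\leq2\phi(n),\,|\cl{A}|>\phi'(n)\bigr)\leq\exp\bigl(2\phi(n)\log(en/\phi(n))-\tfrac{c}{2}\phi'(n)\bigr)+\pr(\mathcal{G}^c)\leq e^{-\phi'(n)}
\]
for $K$ and $n$ sufficiently large, which is the required bound.

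\textbf{Main obstacle.} The principal challenge is securing a sharp enough exponential-tail estimate to absorb the $\exp(O(\phi(n)\log(n/\phi(n))))$ cost of the union bound uniformly over $A$. The per-vertex out-degree in the underlying incidence digraph can be heavy-tailed, driven by rare high-degree vertices of $H^*$; to make the Galton--Watson coupling rigorous one truncates at a large fixed degree $D$ and handles the $o(n)$ vertices close to very-high-degree ones separately, as in the proof of Lemma~\ref{lconc}. A secondary subtlety is that the conditioning used to apply the Essential Model introduces mild correlations between the non-essential slots, which must be shown to contribute only the $(1+o(1))$ factor appearing above.
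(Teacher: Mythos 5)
Your high-level strategy---a subcritical breadth-first exploration of $\cl{A}$, with the subcriticality supplied by Lemma~\ref{lbranch}(b), a per-$A$ exponential tail bound on the exploration size, and a union bound over $\binom{n}{\leq 2\phi(n)}$ choices of $A$---is precisely the strategy the paper uses. The paper does not route the concentration step through a Galton--Watson total-progeny estimate; it observes directly that, after exposing the hyperedge types and conditioning on $A,B=|H_1|(\hf(k-1)+o(1))$, the vertices added to the frontier over $\phi'(n)$ iterations are stochastically dominated by a collection of independent Bernoulli trials of total expectation at most $\phi'(n)(1-\g)$, and then applies Chernoff; this is the same idea in a cleaner wrapper.

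Two points in your write-up do not go through as stated, though both are repairable. First, the coupling offspring distribution $\mathrm{Bin}(|H_1|,k/n)$ does not stochastically dominate the number of new $x\in H_1$ hit when processing $u$: the per-$x$ trial probability is $\tfrac{a(x)}{|H^*\cap\La^+|-O(\phi'(n))}$, which can substantially exceed $k/n$ because $|H^*|$ is a fraction $\rho<1$ of $n$; and if you instead raise the trial probability to the true maximum $\tfrac{k-1}{|H^*\cap\La^+|-O(\phi'(n))}$, the mean $|H_1|\cdot p$ becomes roughly $2(1-2\g)$, which is supercritical. The domination has to retain the Poisson--binomial structure with the actual (varying) $p_x$, whose \emph{sum} is $\leq 1-\g$ by Lemma~\ref{lbranch}(b), not be replaced by a single binomial with the worst-case $p$. (Relatedly, your intermediate bound $\Exp[\Delta_{i+1}\mid W_i]\leq(1+o(1))\tfrac{(k-1)|H_1|}{n}\Delta_i$ should have $|V(H^*)|$ in the denominator; the conclusion survives because you then compare to $\tfrac{(k-1)|H_1|}{|V(H^*)|}\leq 1-2\g$.) Second, your ``main obstacle'' paragraph misdiagnoses the difficulty: the out-degree at each exploration step is a sum over at most $|H_1|$ independent trials, each of probability $O(1/n)$, so it is \emph{not} heavy-tailed and no truncation at a fixed degree $D$ is needed here---that truncation device belongs to the proof of Lemma~\ref{lconc}, where one must control the effect of rewiring a single hyperedge on the $i$-round process. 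The genuine subtlety is the one you list second: sequentially exposing $u\notin e(x)$ slightly inflates later conditional probabilities; the paper handles this by observing that after at most $\phi'(n)$ exposures the conditional probability is still at most $\tfrac{a(x)}{|H^*\cap\La^+|-\phi'(n)}$, so the total stays below $1-\g$.
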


These lemmas yield Lemma~\ref{lstarfr}(a) as follows:

{\em Proof of Lemma~\ref{lstarfr}(a):} Note that we can take $\phi'(n)>g(n)$.
Lemmas~\ref{labc},~\ref{lem:weakly},~\ref{lem:cyclic},~\ref{lem:closure} imply
that for all $g(n)=o(n)$, there exists $\phi'(n)=o(n)$ such that with probability at least $1-3e^{-g(n)}$ the *-core $H^*$ of
$\G(F,\s)$ has no flippable set of size between $\phi'(n)$ and $\z n$.    So suppose that there is no such flippable set in $H^*$.

Let $S_1,...,S_t$ be all flippable sets in $H^*$ of size less than $\z n$.  Thus each $|S_i|<\phi'(n)$.
Assume by induction that $|\cup_{i=1}^jS_i|<\phi'(n)$.  Then $|\cup_{i=1}^{j+1}S_i|<2\phi'(n)<\z n$.
By Proposition~\ref{prop:flippable}(b), $\cup_{i=1}^{j+1}S_i$ is a flippable set and hence it
must have size less than $\phi'(n)$.  Therefore $|\cup_{i=1}^{t}S_i|<\phi'(n)$.

Now consider any sequence of solutions $\s=\s_0,\s_1,...,\s_{\ell}$ in which the assignment changes for at least one variable in $H^*\setminus(\cup_{i=1}^{t}S_i)$.  Let $i$ be the lowest index so that $\s_i(x)\neq\s(x)$ for some $x\in H^*\setminus(\cup_{i=1}^{t}S_i)$.  Therefore
$x\in(\s_i\Delta\s)\cap H^*$ which, by  Proposition~\ref{prop:flippable}(a), is a flippable set.
Since $x\notin \cup_{i=1}^{t}S_i$, this implies $|(\s_i\Delta\s)\cap H^*|\geq\z n$.  By our choice of $i$, $|(\s_i\Delta\s_{i-1})\cap H^*|\geq |(\s_i\Delta\s)\cap H^*|-|\cup_{i=1}^{t}S_i|\geq\z n-\phi'(n)$. Therefore every variable in $H^*\setminus(\cup_{i=1}^{t}S_i)$ is $(\z n-\phi'(n))$-frozen. This yields Lemma~\ref{lstarfr}(a) for  any $\b<\z$ after rescaling $g(n)$.
\proofend

We prove Lemmas~\ref{lem:weakly},~\ref{lem:cyclic},~\ref{lem:closure}
in the next three subsections.  In each case, we will study $H^*$ using the Essential Model.
See the discussion at the end of Appendix~\ref{a3} explaining why it is valid to do so.

\subsection{Weakly-flippable sets: Proof of Lemma \ref{lem:weakly}}

%\mblue{I think we can replace ``$(\z n)$-weakly flippable" by "weakly flippable"!!!}

Suppose that $A\subseteq H^*\setminus H_1$ is a $(\z n)$-weakly flippable set with $\phi(n)<|A|<\z n$.

Set $a:=|A|$ and  note that there are at least $2a$ hyperedges of $H^*$ whose essential variables are in $A$, since $A$ contains
no variables of $H_1$.  Let $e_1,...,e_{2a}$
denote exactly $2a$ such hyperedges;
to be specific, the $2a$ with the lowest indices. Since $A$ is $(\z n)$-weakly flippable, for each $1\leq j\leq 2a$ there exists a sequence of vertices $x_{j,0},x_{j,1},...,x_{j,t_j}$ such that:
\begin{enumerate}
\item[(i)] $x_{j,0}\in A$ is the essential vertex of $e_j$;
\item[(ii)] $x_{j,t_j}\in A$;
\item[(iii)]  $x_{j,1}\in e_j$, and if $t_j>1$ then for each $1\leq i\leq t_j-1$: $x_{j,i}\in H_1$
and $x_{j,i+1}\in e(x_{j,i})$.
\end{enumerate}
Note that possibly $t_j=1$ in which case $e_j$ contains a non-essential member of $A$.
 
These sequences  are not necessarily disjoint. However, since $e_j\neq e_{j'}$ for all $j\neq j'$, we can take initial portions of them so that the portions in $H_1$ are disjoint.  I.e.,  there exist $l_1,\ldots,l_{2a}\geq0$ with $\sum_{j=1}^{2a}l_j\leq\z n$ such that 
\begin{enumerate}
\item[(i)] the vertices
$x_{j,i}: 1\leq j\leq 2a, 1\leq i\leq l_j$ are distinct;
\item[(ii)] for $j=1,\ldots,2a$:
$x_{j,l_j+1}\in A\cup\{x_{j',i}:1\leq j'<j, 1\leq i\leq l_{j'}\}$.
\end{enumerate}

We will bound the expected number of such collections of sequences, when $H^*$ is chosen
from the Essential Model. So we expose the vertices of $H^*$, and for each hyperedge $e_1,...,e_{\a n}$ we expose the essential vertex of $e_i$.  By Lemma~\ref{lbranch}(a)
we can assume that $|H^*\cap\La^+|,|H^*\cap\La^+|=\hf|H^*|+o(n)$.

Fix some $0\leq\ell\leq\z n$.  First we will choose $l_1,...,l_{2a}\geq0$ summing to $\ell$.
The number of choices is $\ell+2a-1\choose 2a-1$.  

Next, we choose $A$; note that this determines $e_1,...,e_{2a}$ and their essential vertices
$x_{1,0},...,x_{2a,0}$. The number of choices is  ${|H^*|\choose a}\leq {n\choose a}$. 

Next we choose the remaining vertices.  To do so, we first determine
their signs; i.e. which are in $\La^+$ and which are in $\La^-$.  So for each $j$, we choose a pattern $\theta_j$ -  a sequence
of $l_j+2$ terms from $\{+,-\}$ indicating the signs of $x_{j,0},...,x_{j,l_j+1}$.  Note that,
since $x_{j,0}$ is already determined, the first sign of $\theta_j$ is already known.  
%We  set $y^+(\theta_j),y^-(\theta_j)$ to be the number
%of +'s and $-$'s in $\theta_j$, not including the first and last terms.
Recall from Lemma~\ref{lbranch} that we can assume $|H_1^+|,|H_1^-|<|H^*|\times\frac{\hf-\g}{k-1}$. Thus, given $\theta_j$, the number of choices for $x_{j,1},...,x_{j,l_j}$ is at most $(|H^*|\times\frac{\hf-\g}{k-1})^{l_j}$.

Finally, we choose $x_{j,l_j+1}:  1\leq j\leq 2a$. These are not neccesarily distinct, and they are all members of $A\cup\{x_{j,i}:1\leq j<2a, 1\leq i\leq l_{j}\}$.  So the number of choices is
at most $(a+\ell)^{2a}$.

Having selected these vertices, we bound the probability that all edges are as required.

Consider selecting the type of a hyperedge whose essential vertex is in
$\La^+$; thus we are select type $\t= (1,a,b)$ with probability $w^{+1}(\t)$.  We let $g=g(\U)$ denote the
expected value of $a$, and so the expected value of $b$ is $k-1-g$.  Because $\U$ is symmetric and $|H^*\cap\La^+|=|H^*\cap\La^-|(1+o(1))$, it follows that $w^{+1}(1,a,b)=w^{-1}(-1,b,a)+o(1)$. So when we select the type of a hyperedge whose essential vertex is in $\La^-$, the expected value of $b$ is $g+o(1)$.

Now we select the types and then the non-essential vertices for the hyperedges $e_1,...,e_{2a}$ and $e(x_{j,i}):1\leq j\leq 2a, 1\leq i\leq l_j$.  Recall that we choose those vertices uniformly from
$\La^+\cap H^*$ or $\La^-\cap H^*$ depending on what the type of the hyperedge tells us the sign of the
vertex should be.  For each $j$, we require that $x_{j,1}$ is a non-essential vertex of $e_j$ and that $x_{j,i+1}$ is a non-essential vertex of $e(x_{j,i})$.  By Lemma~\ref{lbranch}(a), $|\La^+\cap H^*|,|\La^-\cap H^*|=|H^*|(\hf+o(1))$, and so for each hyperedge this event occurs with probability $\frac{2g+o(1)}{|H^*|}$ if the essential
and non-essential vertices have the same sign, and $\frac{2(k-1-g)+o(1)}{|H^*|}$ otherwise.
Note also that these events are independent.

We let $y(\theta_j)$ denote the number of terms in $\theta_j$ that are the same as the preceding term. So the probability that the required vertices are selected as non-essential vertices in each hyperedge is:
\[\left(\frac{2g+o(1)}{|H^*|}\right)^{\sum_{j=1}^{2a}y(\theta_j)}
\left(\frac{2(k-1-g)+o(1)}{|H^*|}\right)^{\sum_{j=1}^{2a}l_j+1-y(\theta_j)}.\]

Putting this all together yields that the expected number of $(\z n)$-weakly flippable sets $A$,
given $a,\ell$, is at most:
\begin{eqnarray}
\nonumber&&{\ell+2a-1\choose 2a-1}{n\choose a}
\left(|H^*|\times\frac{\hf-\g}{k-1}\right)^{\sum_jl_j}(a+\ell)^{2a}\\
\nonumber&&\qquad\times
\sum_{\theta_1,...,\theta_{2a}}
\left(\frac{2g+o(1)}{|H^*|}\right)^{\sum_{j=1}^{2a}y(\theta_j)}
\left(\frac{2(k-1-g)+o(1)}{|H^*|}\right)^{\sum_{j=1}^{2a}l_j+1-y(\theta_j)}\\
\nonumber&<&{\ell+2a\choose 2a}{n\choose a}
\left(|H^*|\times\frac{\hf-\g}{k-1}\right)^{\ell}(a+\ell)^{2a}
\left(\frac{2+o(1)}{|H^*|}\right)^{2a+\ell}
\sum_{\theta_1,...,\theta_{2a}}
g^{\sum_{j=1}^{2a}y(\theta_j)}
(k-1-g)^{\sum_{j=1}^{2a}l_j+1-y(\theta_j)}\\
&<&\left(\frac{e(\ell+2a)}{2a}\right)^{2a}\left(\frac{en}{a}\right)^a
\left(\frac{3(a+\ell)}{|H^*|}\right)^{2a}\left(\frac{1-\g}{k-1}\right)^{\ell}
\prod_{j=1}^2a\sum_{\theta_j}
g^{y(\theta_j)}
(k-1-g)^{l_j+1-y(\theta_j)}.\label{esumprod}
\end{eqnarray}

For each value of $y$, there are ${l_j+1\choose y}$ patterns $\theta_j$ with $y(\theta_j)=y$,
since the first sign in $\theta_j$ is already chosen.
This implies 
\[\prod_{j=1}^2a\sum_{\theta_j}
g^{y(\theta_j)}
(k-1-g)^{l_j+1-y(\theta_j)}
=\prod_{j=1}^{2a}\sum_{y=0}^{l_j+1}{l_j+1\choose y}g^y(k-1-g)^{l_j+1-y}
=(k-1)^{\ell}.\]
So (\ref{esumprod}) is at most
\[\left(\frac{e(\ell+2a)}{2a}\right)^{2a}\left(\frac{en}{a}\right)^a
\left(\frac{3(a+\ell)}{|H^*|}\right)^{2a}\left(\frac{1-\g}{k-1}\right)^{\ell}(k-1)^{\ell}
<\left(1+\frac{\ell}{2a}\right)^{2a}\left(1+\frac{\ell}{a}\right)^{2a}
\left(\frac{Ca}{n}\right)^{a}(1-\g)^{\ell},\]
for some constant $C>9e^3(\frac{n}{|H^*|})^2$ (see Lemma~\ref{lHp}).  So the total expected number of  $A$ with $\phi(n)<|A|<\z n$ is at most:
\begin{eqnarray*}
\sum_{a=\phi(n)}^{\z n}\left(\frac{Ca}{n}\right)^{a}
\sum_{\ell\geq 0}\left(1+\frac{\ell}{a}\right)^{4a}(1-\g)^{\ell}.
\end{eqnarray*}
To bound this, it is easy to see that $\left(1+\frac{\ell}{a}\right)^{4a}(1-\frac{\g}{2})^{\ell}$
is maximized at $\ell=O(a)$ and hence is at most $Y^{4a}(1-\frac{\g}{2})^{4a}<Y^{4a}$ for some constant $Y=Y(\g)$.  Since  $1-\g<(1-\frac{\g}{2})^2$, this yields an upper bound of:
\[\sum_{a=\phi(n)}^{\z n}\left(\frac{Ca}{n}\right)^{a}
\sum_{\ell\geq 0}Y^{4a}(1-\frac{\g}{2})^{\ell}=O(1)\left(\frac{CY^4a}{n}\right)^{a}.\]
By taking $\z<\inv{2CY^4}$, this is less than $\sum_{a=\phi(n)}^{\z n}2^{-a}$ which is
 less than $e^{-g(n)}$ for any $\phi(n)>>g(n)$.
\proofend

\subsection{Cyclic sets: Proof of lemma \ref{lem:cyclic}}\label{scs}

%\mblue{I think that we do not need the upper bound $|C|<\z n$!!!!}

Note that the vertices of a cyclic set are partitioned into cycles by the permutation $\pi$.
We will fix a constant $Z$, to be named later.  A {\em small-cyclic} set is a cyclic set in which each cycle has length at most $Z$. A {\em large-cyclic} set is a cyclic set in which each cycle has length greater than $Z$.

\begin{lemma} For any $g(n)=o(n)$ there exists $\phi(n)$ such that with probability at least
$1-e^{-g(n)}$:
\begin{enumerate}
\item[(a)] $H^*$ has no small-cyclic sets of size at least $\hf\phi(n)$.
\item[(b)] $H^*$ has no large-cyclic sets of size at least $\hf\phi(n)$.
\end{enumerate}
\end{lemma}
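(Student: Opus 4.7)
The plan is to handle both parts (a) and (b) with a single first-moment argument, since any small-cyclic or large-cyclic set is in particular cyclic. I will work in the Essential Model on $H^*$, as justified at the end of Appendix~\ref{a3}: given $V(H^*)$, the essential vertex of each hyperedge of $H^*$, and a type drawn with probability $w^s(\tau)$ for the appropriate sign $s$, the $k-1$ non-essential slots of each hyperedge are independent of the slots of every other hyperedge and are drawn uniformly from $\La^+\cap V(H^*)$ or $\La^-\cap V(H^*)$ according to the type.

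Using Lemma~\ref{lbranch}(a), $|\La^{\pm}\cap V(H^*)| = \hf|V(H^*)|(1+o(1))$, so for any $x \in H_1$ and any $y \in V(H^*)$,
\[
\pr\bigl(y \in e(x)\bigr) \le \frac{2(k-1)}{|V(H^*)|}(1+o(1)).
\]
Now fix $a$ and enumerate pairs $(S,\pi)$ where $S\subseteq H_1$ has $|S|=a$ and $\pi$ is a derangement of $S$ (since $D(S)$ contains no self-loops, any witnessing permutation must be a derangement). The number of cyclic sets of size $a$ is at most the expected number of such pairs for which $x_{\pi(j)}\in e(x_j)$ holds for all $j$. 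Since $e(x_1),\dots,e(x_a)$ are pairwise distinct hyperedges and the Essential Model selects their non-essential slots independently, those $a$ events are mutually independent, so
\[
\mathbf{E}\bigl[\#\text{ cyclic sets of size }a\bigr] \le \binom{|H_1|}{a}\,a!\,\left(\frac{2(k-1)}{|V(H^*)|}(1+o(1))\right)^{a} \le \left(\frac{2(k-1)|H_1|}{|V(H^*)|}\right)^{a}(1+o(1))^{a}.
\]
By Lemma~\ref{lbranch}(b), $2(k-1)|H_1|/|V(H^*)|\le 1-2\gamma$, so for $n$ large this upper bound is at most $(1-\gamma)^{a}$.

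Since both small-cyclic and large-cyclic sets of size $a$ are cyclic sets of size $a$, summing the geometric tail gives, in either case,
\[
\pr\bigl(\exists\text{ such set of size }\ge\hf\phi(n)\bigr) \le \sum_{a\ge\hf\phi(n)}(1-\gamma)^{a} \le \frac{(1-\gamma)^{\hf\phi(n)}}{\gamma},
\]
which is at most $e^{-g(n)}$ as soon as $\phi(n)\ge 2g(n)/|\ln(1-\gamma)|+O(1)$. Choosing $\phi(n)=\Theta(g(n))$---which is $o(n)$ since $g(n)=o(n)$---handles both parts simultaneously, with no dependence on the constant $Z$. The argument has essentially no obstacle beyond carefully invoking Lemma~\ref{lbranch} and transferring to the Essential Model; the separation into short and long cycles in the statement plays no role in the proof, as both bounds follow from the same $(1-\gamma)^{a}$ estimate. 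One could in principle obtain a sharper estimate for part~(a) by exploiting the Poisson-like concentration of short-cycle counts, but that refinement is unnecessary for the claimed $e^{-g(n)}$ tail.
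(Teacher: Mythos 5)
Your unified first-moment argument contains a genuine gap: the quantity $\frac{2(k-1)|H_1|}{|V(H^*)|}$ is \emph{not} at most $1-2\gamma$. Lemma~\ref{lbranch}(b) bounds each of $|H_1^+|$ and $|H_1^-|$ separately by $\frac{\hf-\gamma}{k-1}|V(H^*)|$, so the total satisfies only $|H_1|\le \frac{1-2\gamma}{k-1}|V(H^*)|$, giving
\[
\frac{2(k-1)|H_1|}{|V(H^*)|}\le 2(1-2\gamma),
\]
which exceeds $1$ whenever $\gamma<1/4$ (and $\gamma$ is generally small). Your geometric series then diverges rather than decays, so the tail estimate fails.

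The underlying issue is that the sign-blind bound $\pr(y\in e(x))\le\frac{2(k-1)}{|V(H^*)|}$ is tight only for vertices $y$ of one particular sign (depending on the typical type $(s,a,b)$), while being far from tight for the other sign; simultaneously, summing over all of $H_1$ rather than over one sign class double-counts. Neither loss is recoverable by the uniform argument. The paper's proof avoids this by enumerating sign patterns $\theta$ explicitly: the per-pattern count of choices is $\bigl(\frac{\hf-\gamma}{k-1}|V(H^*)|\bigr)^a$, the event probability is $\bigl(\frac{2g}{|V(H^*)|}\bigr)^{y(\theta,\pi)}\bigl(\frac{2(k-1-g)}{|V(H^*)|}\bigr)^{a-y(\theta,\pi)}$, and summing over $\theta$ and $y$ recovers $(k-1)^a$ (which cancels) together with a factor $2^{c(\pi)}$ from the cycle structure of $\pi$. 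That $2^{c(\pi)}$ is precisely why the small/large split is not cosmetic: permutations made of $2$-cycles give $c(\pi)=a/2$ and hence $2^{a/2}$, which already defeats the bound for small $\gamma$. Restricting to cycles of length $>Z$ forces $c(\pi)\le a/Z$, shrinking this factor to $2^{a/Z}$, and the small cycles are handled by the entirely separate short-cycle counting argument of part~(a). Your observation that the split ``plays no role'' is therefore incorrect; it is essential, and your proposal would need to be restructured along the paper's sign-tracking lines.
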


This clearly proves Lemma~\ref{lem:cyclic} as any cyclic set of size at least $\phi(n)$
contains either a small-cyclic set or a large-cyclic set of size at least $\hf\phi(n)$.  Again, we work in the Essential Model.

{\em Proof of (a):} We say that a {\em cycle} in $\G(F,\s)$ is a set of vertices $x_1,...,x_{\ell}$
such that $x_i,x_{i+1}$ lie in a common hyperedge of $\G(F,\s)$ for each $i$ (addition is mod $\ell$). For any $x_i,x_j$,
the probability that $x_i,x_j$ share a hyperedge in $\G(F,\s)$ is less than $c/n$,
for some constant $c=c(\U,\a)$.

 If $H^*$ has a small-cyclic set of size at least $\hf\phi(n)$,
then the hypergraph $\G(F,c)$ must contain at least $\phi(n)/(2Z)$ cycles of size
at most $Z$, and so it must contain at least $\phi(n)/(2Z^2)$ cycles of size
exactly $z$ for some $z\leq Z$.   Setting $W:=\phi(n)/(2Z^2)$, the probability of this occurring for $z$ is less than:
\[\frac{n^{zW}}{W!}\left(\frac{c}{n}\right)^{zW}=\frac{(c^{z})^{W}}{W!}<\inv{2Z}e^{-g(n)},\]
if $\phi(n)>>g(n)$.  (Note that the dependency between the events that the $zW$ pairs of
vertices each share a hyperedge goes in the right direction for this bound to hold.)
Summing over all $z\leq Z$ proves (a).
\proofend

{\em Proof of (b):}  We will bound the expected number of large cyclic sets of size $a$.

A pattern $\theta$ is a sequence
of $a$ terms from $\{+,-\}$ indicating the signs of $x_1,...,x_a$.   By Lemma~\ref{lbranch}, we can assume that $|H_1^+|,|H_1^-|<\frac{\hf-\g}{k-1}|H^*|$.  So for any pattern $\theta$, the number of choices
for $x_1,...,x_a$ is at most $\left(\frac{\hf-\g}{k-1}|H^*|\right)^a$.  

Given a pattern $\theta$
and a permutation $\pi$, we let $y(\theta,\pi)$ denote the  number of $i$ such that $x_i,x_{\pi(i)}$ have the same sign.  Recall $g$ from the proof of Lemma~\ref{lem:weakly}; the same reasoning as in that proof says that, for any choice of $x_1,...,x_a$ in agreement with $\theta$, the probability that $x_{\pi(i)}$ is a non-essential vertex in $e(x_i)$ for every $i$ is
$\left(\frac{2g+o(1)}{|H^*|}\right)^{y(\theta,\pi)} \left(\frac{2(k-1-g)+o(1)}{|H^*|}\right)^{a-y(\theta,\pi)}$.

We let $c(\pi)$ denote the number of cycles in $\pi$; since we are considering large-cyclic sets, we only need to consider permutations $\pi$ with $c(\pi)<a/Z$. For any $\pi,y$, the number of choices of $\theta$ with $y(\theta,\pi)=y$ is at most $2^{c(\pi)}{a\choose y}<2^{a/Z}{a\choose y}$.  Indeed, there are ${a\choose y}$ choices of the values of $i$ for which $x_i,x_{\pi(i)}$ have the same sign; given one such choice, the pattern is determined by fixing the sign of one vertex
in each of the $c(\pi)$ cycles.  Note that this is an upper bound; as for some $\pi,y$, parity
conditions will imply that there is no such $\theta$.

To bound the expected number of large-cyclic sets of size $a$, we sum over all ordered choices
of $x_1,...,x_a$ and all choices of $\pi$, and then divide by $a!$, obtaining:
\begin{eqnarray*}
&&\inv{a!}\sum_{\theta}\left(\frac{\hf-\g}{k-1}|H^*|\right)^a
\sum_{\pi}\left(\frac{2g+o(1)}{|H^*|}\right)^{y(\theta,\pi)} 
\left(\frac{2(k-1-g)+o(1)}{|H^*|}\right)^{a-y(\theta,\pi)}\\
&<&\left(\frac{1-\g}{k-1}\right)^a
\inv{a!}\sum_{\pi}\sum_{y=0}^a2^{a/Z}{a\choose y}g^y(k-1-g)^{a-y}\\
&\leq&\left(\frac{1-\g}{k-1}2^{1/Z}\right)^a(k-1)^a\\
&<&(1-\hf\g)^a
\end{eqnarray*}
if $Z$ is chosen large enough that $(1-\g)2^{1/Z}<(1-\hf\g)$.

So the probability that there is a large-cyclic set of size at least $\hf\phi(n)$ is at most
$O(1)(1-\hf\g)^{\hf\phi(n)}<\hf e^{-g(n)}$ for any $\phi(n)>>g(n)$.
\proofend

\subsection{Closure: Proof of lemma \ref{lem:closure}}  We will choose $\phi'(n)>>\phi(n)$.
Again, we work in the Essential Model.

Consider  a set $A$ of size at most $2\phi(n)=o(\phi'(n))$.  We can find $\cl{A}$ using the following search:

\begin{enumerate}
\item Initialize $C=\emptyset, L=A$.
\item While $L\neq\emptyset$
\begin{enumerate}
\item Choose $u\in L$.
\item For every $w\in H_1\setminus(C\cup L)$ such that $u\in e(w)$, add $w$ to $L$.
\item Remove $u$ from $L$ and add $u$ to $C$.
\end{enumerate}
\end{enumerate}

When this procedure halts, $C=\cl{A}$. Note that $|\cl{A}|$ is the number of times that
we execute the loop in Step 2.   If $|\cl{A}|>\phi'(n)$ then during the first $\phi'(n)$ iterations
we never reach $L=\emptyset$ and so we must add a total of more than $\phi'(n)-|A|=\phi'(n)(1-o(1))>\phi'(n)(1-\hf\g)$ vertices to $L$ in Step 2(b), where $\g$ comes from Lemma~\ref{lbranch}.
We will  bound the probability of that occuring. 

We analyze $H^*$ using the Essential Model. So we expose the vertices of $H^*$, and for each hyperedge $e_1,...,e_{\a n}$ we expose the essential vertex of $e_i$.  By Lemma~\ref{lbranch}(a)
we can assume that $|H^*\cap\La^+|,|H^*\cap\La^+|=\hf|H^*|+o(n)$.

Our first step will be to expose the type of every hyperedge; recall that we choose these types
independently and the probability that a hyperedge with essential vertex in $\La^s$ has type
$\t$ is $w^s(\t)$.  For each vertex $x\in H_1$,
if the type of $x$ is chosen to be $(s,a,b)$ then we say $a(x)=a,b(x)=b$.  
We set $A=\sum_{x\in H_1}a(x)$ and $B=\sum_{x\in H_1}b(x)$. 

Because $\U$ is symmetric and $|H^*\cap\La^+|=|H^*\cap\La^-|(1+o(1))$, it follows that $w^{+1}(1,a,b)=w^{-1}(-1,b,a)+o(1)$.  This implies that for $x\in\La^+,y\in\La^-$, $\ex(a(x))=\ex(b(y))+o(1)$ and $\ex(b(x))=\ex(a(y))+o(1)$, and it follows that $\ex(A),\ex(B)=|H_1|(\hf(k-1)+o(1))$.
The number of hyperedges of each type is a binomial variable and so is easily seen to be highly enough concentrated that with probability at least $1-e^{-g(n)}$ we have $A,B=|H_1|(\hf(k-1)+o(1))$.

Now we analyze our search.  We can choose $u$ arbitrarily in Step 2(a); to be specific, we choose the $u\in L$ with the lowest index.  To carry out Step 2(b): for every $x\in H_1\setminus(C\cup L)$, we expose whether $u\in e(x)$; if $u\notin e(x)$ then we do not expose the non-essential vertices of $e(x)$.  

Suppose $u\in\La^+$. To test whether $u\in e(x)$, we ask whether $u$ is one of the $a(x)$
non-essential variables from $\La^+$.  Initially, the probability is $\frac{a(x)}{|H^*\cap\La^+|}$;
as the procedure progresses, this increases as we  have exposed that the members of $C$ are not in $e(x)$. But since $|C|\leq\phi'(n)$ it never exceeds 
$\frac{a(x)}{|H^*\cap\La^+|-\phi(n)}$. We ask this for every $x\in H_1\setminus(C\cup L)$
resulting in at most $H_1$ independent trials of total probability at most
\[\frac{A}{|H^*\cap\La^+|-\phi(n)}=\frac{|H_1|(\hf(k-1)+o(1))}{\hf |H^*|(1+o(1))}
<1-\g,\]
by Lemma~\ref{lbranch}(b).
 Similarly, if $u\in\La^-$ then we have at most $H_1$ independent trials of total probability at most $1-\g$.

Summing over the first $\phi'(n)$ iterations, the total number of vertices added to $L$ is upperbounded in distribution by the sum of $\phi'(n)H_1$ independent trials, each with probability $\Theta(n^{-1})$ and with total expectation $\phi'(n)(1-\g)$.
Standard concentration results for binomial variables yield that the probability
that they total more than $\phi'(n)(1-\hf\g)$ is at most $e^{-c\phi'(n)}$ for some $c=c(g,k,\g)$.

So the expected number of sets $A$ of size at most $\phi(n)$ for which $|\cl{A}|\geq\phi'(n)$
is at most
\[\sum_{a=1}^{\phi(n)}{|H^*|\choose a} e^{-c\phi'(n)}<\phi(n){n\choose\phi(n)}e^{-c\phi'(n)}
<\phi(n)\left(\frac{ene^{-c\phi'(n)/\phi(n)}}{\phi(n)}\right)^{\phi(n)}.\]
By choosing $\phi(n)\log(n/\phi(n))<<\phi'(n)=o(n)$, this probability is less than $e^{-\phi(n)}$,
as required.
\proofend

\end{document}